\documentclass[12pt]{article}
\usepackage{dsfont}
\usepackage{bm}
\usepackage{iftex}
\usepackage{stmaryrd}
\usepackage{amssymb}
\usepackage{comment}
\usepackage{mathtools}
\usepackage{stmaryrd}
\usepackage{bussproofs}
\usepackage{amsmath}
\usepackage{amsthm}
\usepackage{graphicx}
\usepackage{tikz}
\usepackage{a4wide}
\usepackage{nicefrac}

\newcommand{\weg}[1]{}

\newcommand{\ov}{\overline}
\newcommand{\half}{\nicefrac{1}{2}}
\newcommand{\eq}{\leftrightarrow}

\newcommand{\imp}{\rightarrow}
\newcommand{\Imp}{\Rightarrow}

\newcommand{\et}{\wedge}

\renewcommand{\phi}{\varphi}

\newcommand{\post}{\mathrm{post}}

\newcommand{\pre}{\mathrm{pre}}

\newtheorem{theorem}{Theorem}
\newtheorem{remark}[theorem]{Remark}
\newtheorem{definition}[theorem]{Definition}
\newtheorem{lemma}[theorem]{Lemma}
\newtheorem{example}[theorem]{Example}
\newtheorem{proposition}[theorem]{Proposition}
\newtheorem{corollary}[theorem]{Corollary}

\title{The Dynamic Turn in Paraconsistency}

\author{
Rafael Ongaratto\\
Institute of Philosophy and Human Sciences, UNICAMP\\
Campinas, Brazil\\
\texttt{ongarattorafa@gmail.com}
\and
Hans van Ditmarsch\\
University of Toulouse, IRIT-CNRS\\
Toulouse, France\\
Indian Institut of Technology (IIT)\\
Kanpur, India\\
\texttt{hansvanditmarsch@gmail.com}
}

\date{}

\begin{document}

\maketitle
\begin{abstract}
\noindent In this work we propose a dynamic turn in paraconsistency. We introduce AMLFI1, the action model extension of the paraconsistent logic LFI1. A special case is PALFI1, a paraconsistent logic of public announcements. It corresponds to another, recently published, paraconsistent public announcement logic: the differences in their axiomatizations are mutually admissible. We also introduce UMLFI1, that extends AMLFI1 with factual change. Soundness and completeness are proven for all logics, and all extend the epistemic paraconsistent logics KLFI1, KB4LFI1 and S5LFI1, known from the literature. With such dynamic epistemic paraconsistent logics we can formalize obtaining and resolving \textit{provisional} contradictions.
  \end{abstract}

\section{Introduction}\label{introduction}

What is the meaning of an eternal contradiction? Paraconsistent logic allows us to think about inconsistent but non-trivial models. However, offering a static image of paraconsistency will present, no matter how much we try to explain otherwise in the metalanguage, a static image of contradiction. The paradigmatic case of a static contradiction, looking through the lenses of dialetheists, would be the Liar Paradox \cite{sep-dialetheism}: a sentence that is contradictory regardless of time, and any further changes. Nevertheless, dialetheists' claim about real contradictions take it too far from what is necessary. A moderate position in paraconsistency is to adopt \textit{temporary} contradictions as paradigmatic. That is, inconsistencies appear at a certain point in the investigation and they can be later resolved--- the dialetheist would also agree that most contradictions are of this nature. 

\begin{quote}
    But the point we want to emphasize is that these
‘provisional contradictions’ are not dialetheias. In our view, they may be of
‘different kinds’ in the sense of having different causes. We list some of them:
(i) possible limitations of our cognitive apparatus; (ii) failure of measuring instruments and/or interactions of these instruments with phenomena; (iii) stages
in the development of theories; (iv) simply mistakes that in principle could be corrected later on. In all these cases, contradictions are related primarily to
knowledge and thought. This is what we call epistemic contradictions. \hfill \cite[p.\ 10-11]{carnielli2015logic}

\end{quote}

What contradictions of the type (i)-(iv) have in common, differently from the Liar Paradox, is that these contradictions presuppose \textit{change}. The model we should be looking for is one in which, at time $t_1$, the agent is uncertain about $p$, and after some action $k_1$, the agent may think he knows that $p$. Afterwards, at time $t_2$ and after some action $k_1$, the agent may have conflicting beliefs about $p$ and $\neg p$. For example, one of the applications of LFI1, a paraconsistent logic presented in this article, is to represent \textit{evolutionary} databases: databases that may change their integrity constraints over time, thus producing inconsistency \cite[p.\ 116]{Carnielli2000-CARFIA-2}. The logic LFI1 extends LP \cite{priest1979logic} with an inconsistency operator, resulting in a logic that is more expressive than LP and closer to classical propositional logic (denoted by CPC).

In general terms, a logic is called \textit{paraconsistent} if it has a unary operator of negation that is \textit{non-explosive}. That is, the principle \textit{ex falso sequitur quodlibet} fails in this logic. Paraconsistent logics started with Ja\'skowski and da Costa. In 1948, Ja\'skowski \cite{jaskowski1969propositional} proposed the first paraconsistent system that explicitly dealt with the problem of non-trivial and contradictory deductive theories, called \textit{discussive} logic --- also sometimes translated into english as \textit{discursive} logic. His system is called D2, a logic adequate to formalize discussions. Even though he did not give a complete axiomatization of his logic \cite[p.\ 1163]{omori2018axiomatizing}, his investigations were not paraconsistent by accident: he was explicitly dealing with the problem of separating the concept of contradiction from that of triviality. In 1964, da Costa \cite{da1974theory}, independently of Ja\'skowski's works, proposed a hierarchy of paraconsistent logics $C_n, 1 \leq n \leq \omega$.\footnote{Even though C$_\omega$ is not finitely trivializable, it is still a \textit{paraconsistent} logic, although it differs from the others C$_n$ systems because it is based on positive propositional intuitionistic logic \cite{carnielli1999limits}.}

Many years later, in the 2000s, Carnielli and Marcos \cite{carnielli2002taxonomy} generalized the idea behind $C_n$ systems, proposing what is known as Logics of Formal Inconsistency (LFIs). The introduction of consistency as a primitive operator $\circ$ is the foundational idea behind LFIs. Marcos was responsible for generalizing and developing the LFIs \cite{marcos2005logicsphd}, later on also developed by Carnielli and Coniglio \cite{carnielli2016paraconsistent}. The family of LFIs also incorporated paracompleteness in the Logics of Evidence and Truth (LETs) \cite{coniglio2024belnap, carnielli2015logic}, and from these works the epistemic approach to paraconsistency has been established. The combination of modal and paraconsistent logics has been extensively studied in the works of Bueno-Soler \cite{bueno2024many, bueno2012models, bueno2010two} and Coniglio \cite{coniglio2025combining,coniglio2024ivlev,carnielli2014swap}. A different approach to paraconsistency, in which negation is treated as a modal operator, has been extensively studied in the work of Marcos \cite{marcos2005nearly, marcos2005logics}.

Without incorporating dynamics within paraconsistent logics, the temporary aspect of contradictions that these logics intend to model under the epistemic approach of paraconsistency can only be explained in the metalanguage: at some time $t_1$, the state of the database is represented by the model $M_1$. After receiving new information, the database is represented by the model $M_2$. We cannot represent the update of model $M_1$ into model $M_2$ as the outcome of a process formalized in the logical language.
It is not possible to represent this model change using a static picture of paraconsistency. 

Such updates have been proposed in the Dutch school of dynamic logics, responsible for the so-called ``dynamic turn'' in epistemic logic \cite{baltag2012dynamic}. It brings exactly what paraconsistency still lacks: the representation of model change within logic. Let us succinctly describe its history. \textit{Dynamic epistemic logics} (DELs) were born out of the confluence of three areas of research during the 1980s and 1990s: dynamic logics, belief revision, and formal linguistics. \textit{Dynamic modal logics} \cite{harel:1984, pratt:1980, parikh:1978, goldblatt:1992} have modal formulas $[\pi]\beta$ read as `after every execution of program $\pi$, formula $\beta$ is true', where such programs $\pi$ are interpreted as transitions between states in a given relational Kripke model. \textit{Belief revision} \cite{agm:1985} by Alchourrón, Gärdenfors, and Makinson deals with expanding, contracting, or revising sets of beliefs described in first-order logic, according to certain set-theoretical operations. Thirdly, an area of formal linguistics that contributed to DEL's initial impulse was \textit{update semantics}, for example in the works of Groenendijk and Stokhof \cite{groenendijketal:1991} and of Veltman \cite{veltman:1996}. According to their analysis of natural language, when someone utters a sentence, this is a part of a process in which the sentence updates the information state. What is true in the next information state need not be a truth-functional consequence of what was true in the previous information state. In \cite{benthem1987}, van Benthem combines these three areas by suggesting to using dynamic modalities to model such information change. Further developments, sometimes independently, now follow up on each other fairly quickly. In public announcement logic (PAL) by Plaza \cite{plaza:1989}, $[\alpha]\beta$ means that after public announcement of $\alpha$, $\beta$ is true. Early developments of DEL also involve formalizing private announcements and factual change \cite{baltagetal:1998, gerbrandy:1999, linderetal:1995,hvd.thesis:2000,baltagetal:2004,hvdetal.hendricks:2003, renardel:1999,hvdetal.aamas:2005,jfaketal.lcc:2006}. Combinations of DEL with non-classical logics are more recent. In \cite{ma2014algebraic, balbiani2016intuitionistic}, the authors discuss an intuitionistic PAL. In \cite{bakhtiari2020bilattice, santos2020four}, the authors develop dynamic epistemic logics based on Belnap Dunn's 4-valued paraconsistent and paracomplete logic, which are also called paradefinite logics. In \cite{girard2016paraconsistent}, Girard and Tanaka present a public announcement version of LP, a three-valued paraconsistent logic. In the recent \cite{ongaratto2025dynamics}, Ongaratto presents a PAL extension of LFI1, a three-valued LFI.  


%
%
%

\paragraph*{Our contributions.}
We further extend epistemic (modal) extensions of the paraconsistent logic LFI1 with dynamic modalities for action models, and show that the resulting logics Action Model LFI1  (AMLFI1) and Update Model LFI1 (UMLFI1) are sound, strongly complete, compact, and decidable. The difference between AMLFI1 and UMLFI1 is that the latter involves action models with preconditions and postconditions, whereas the former involves action models with only preconditions, the better known standard in DEL. Completeness of these logics is shown by reduction rules, and a main technical novelty of our reduction axioms is that these are not equivalences but strong equivalences, based on congruences in this three-valued setting \cite{santos2020four,ongaratto2025}. We also show that the axiomatization of public announcement LFI1 (PALFI1) presented in this paper corresponds to the axiomatization of such a logic in \cite{ongaratto2025}, a system with different axioms and inference rules. We further note that few compactness and strong completeness results such as ours are known for many-valued DELs such as AMLFI1 and UMLFI1. Beyond technical contributions, the paper applies the dynamics of paraconsistency to the analysis of lying, being mistaken, and other byzantine behaviour, including recovery from such errors. The above-mentined factual change due to postconditions is essential to represent such recovery. We expect our results to be transferable to other dynamic logics of paraconsistency, and to have applications in distributed database management allowing local inconsistencies.

\weg{
\textcolor{red}{Our contributions in this paper are to propose Action Model and Update Model extensions of paraconsistent logics, in particular for LFI1. The technique of reduction by congruence relations, used in \cite{santos2020four} and in \cite{ongaratto2025} for the Public Announcement extension of LFI1, is extended and shown to hold equally well for Action Model and Update Model. Soundness and completeness are shown to hold for these cases. Furthermore, compactness and strong completeness are shown to hold, a relevant point that is yet underdeveloped in the literature for many-valued dynamic epistemic logics. Beyond technical contributions, the paper contributes to the discussion of the relation between lying and paraconsistency, and the possibilities of modelling that phenomenon. Finally, it provides a fresh perspective on paraconsistency by treating contradictions in dynamic terms.}

This technique is generalizable and available to be used in other contexts in which a congruence relation is definable.
}

\paragraph*{Overview of the paper.}
In Section~\ref{section2}, we present the logic LFI1 and its epistemic extensions that introduce the notion of knowledge in a paraconsistent context. In Section~\ref{section3}, we present the logic AMLFI1, a dynamic epistemic extension of LFI1 that allows us to represent refined model change. In Section~\ref{section4}, we show that the public announcement restriction PALFI1 of AMLFI1 is the same logic as the recent logic PALFI1$_C$ \cite{ongaratto2025dynamics}.
In Section~\ref{section5}, we consider UMLFI1, further extending AMLFI1 with factual change. Section~\ref{section.coup} applies the logics AMLFI1 and UMLFI1 to the representation of lying and other kinds of byzantine behaviour.

\section{The logic LFI1 and its modal extensions}\label{section2}
\subsection{Preliminaries}
Before introducing LFI1 and its modal extensions, it is important to introduce some general definitions that will be applied in all cases.

\begin{definition}[Language]\label{language}
    Let $P = \{p, q, r, \dots\}$ be a set of propositional variables, $N = \{a, b, c, \dots\}$ a finite set of agents, and $I$ a countably infinite index set representing an enumeration of actions (as will only be defined in in Section~\ref{section5}). The language $\mathcal{L}_{AM}$ is defined over the BNF form:
    \[\alpha::= p\ |\ \neg\alpha\ |\ \bullet\alpha\ |\ (\alpha \land \alpha)\ |\ K_i\alpha\ |\ [e]\alpha\] where $i \in A$ and $e \in I$.

    Elements of the language are \emph{formulas}, denoted by $\alpha, \beta, \gamma$. The language $\mathcal{L}_\bullet$ is the (propositional logical) $\neg\bullet\land$-fragment of $\mathcal{L}_{AM}$, and the language $\mathcal{L}_K$ is the $\neg\bullet\land K_i$-fragment of $\mathcal{L}_{AM}$. The \emph{signature} of a language is the set of its connectives. We consider the signatures $\Sigma_\bullet = \{\neg, \bullet, \land\}$, $\Sigma_K = \{\neg, \bullet, \land, \{K_i\}_{i \in N}\}$, and $\Sigma_{AM} = \{\neg, \bullet, \land, \{K_i\}_{i \in N}, \{e\}_{e \in I}\}$.
\end{definition}

     

\begin{definition}[Hilbert proof system]
A {\em (Hilbert) proof system} (axiomatization) $\mathcal{H}$ is a pair $\langle \mathcal{A}, \mathcal{R}\rangle$, where $\mathcal{A} \subseteq \mathcal{L}$ is the set of \emph{axioms} of \textbf{L}, and $\mathcal{R} \subseteq \mathcal{L}^*$ is the set of \emph{rules} of \textbf{L}.
\end{definition}

For a rule $\langle\alpha_1, \dots, \alpha_n, \alpha_{n+1}\rangle \in \mathcal{R}$ we write $\alpha_1, \dots, \alpha_n \Rightarrow \alpha_{n+1}$. Axioms $\alpha$ that are instantiations of a certain shape and rules $\alpha_1, \dots, \alpha_n \Imp \alpha_{n+1}$ that are instantiations of a certain shape are given names. An example axiom is $K_i \alpha \imp \alpha$ (T, truth axiom), and an example rule is $\alpha, \alpha\imp\beta \Imp \beta$ (MP, Modus Ponens).

\begin{definition}[Derivability]
Consider a Hilbert proof system $\mathcal{H} = \langle \mathcal{A}, \mathcal{R}\rangle$, and let $\Gamma \cup \{\alpha\} \subseteq \mathcal{L}$. The formula $\alpha$ is \emph{derivable} from $\Gamma$ in $\mathcal{H}$, notation $\Gamma \vdash_\mathcal{H} \alpha$, iff there is a sequence of formulas $\gamma_1, \dots \gamma_i$, such that $\gamma_i = \alpha$, and for all $k$ with $1 \leq k \leq i$ one of the following holds:
\begin{enumerate}
\item $\gamma_k \in \mathcal{A}$;
\item there exist $j_1, \dots, j_l < k$ such that $\langle \gamma_{j_1}, \dots, \gamma_{j_l}, \gamma_k\rangle \in \mathcal{R}$; 
\item $\gamma_k \in \Gamma$.
\end{enumerate}
\end{definition}
\noindent For $\emptyset \vdash_{\mathcal{H}} \alpha$ we write $\vdash_{\mathcal{H}} \alpha$.
\begin{definition}[Theorem]\label{derivability}
A formula $\alpha \in \mathcal{L}$ is a \emph{theorem} of proof system $\mathcal{H}$, iff $\vdash_{\mathcal{H}} \alpha$. 
    \end{definition}
Our proof theoretical terminology applies to the languages $\mathcal{L}_{AM}$, $ \mathcal{L}_K$, and $\mathcal{L}_\bullet$ of our contribution. However, there are some constraints and conventions in view of this usage. Derivability $\Gamma \vdash_\mathcal{H} \alpha$ in propositional logics allows the rule MP to be applied to hypotheses in $\Gamma$ and to axioms in $\mathcal A$ (and to formulas already obtained in that way). So its usage is not restricted to theorems. This is because MP is both validity-preserving and truth-preserving. Whereas all other derivation rules, in the more general modal logical setting, only apply to formulas that are theorems.

Note that we have presented the axioms and rules of our proof systems as {\em axiom schemata} (as above), formulated in terms of arbitrary formulas $\alpha,\beta,\gamma,\dots$, such that their instantiations are axioms and their instantiations apply the rules. We cannot formulate them with propositional variables $p,q,r,\dots$ instead. This is because dynamic epistemic logics do not usually have  the {\em substitution property} (if $\phi$ is a theorem and variable $p$ occurs in $\phi$, then uniform substitution of an arbitrary formula $\psi$ for $p$ in $\phi$ is also a theorem). We will see uniform substitution also fails for our DEL extensions of LFI1.

\begin{definition}[Logic]
A \textnormal{logic} \textbf{L} is a pair \textbf{L} $= \langle \mathcal{L}, \vdash\rangle$, in which $\vdash$ is a derivability relation. 
\end{definition}


\begin{definition}[Subsystem]\label{subsystem}
A logic $L_1$ is said to be a \textnormal{subsystem} (or a \textnormal{sublogic}) of $L_2$ when the following holds:\\
$(i)\ \mathcal{L}_{L_1} \subseteq \mathcal{L}_{L_2}$
\\$(ii)$ For all $\Gamma \cup \{\alpha\} \subseteq \mathcal{L}_{L_1}$,
\[\Gamma \vdash_{L_1} \alpha \text{ iff } \Gamma \vdash_{L_2} \alpha\]

\end{definition}

\begin{definition}[Valuation]\label{valuation}
Consider a set of truth-values $\mathrm{A} = \{1, \half, 0\}$ and a set $D = \{1, \half\}$ of designated truth-values. A \textnormal{valuation} $\vartheta$ is a function $\vartheta: P \mapsto \mathrm{A}$. In the propositional case, it is extended to arbitrary formulas in the following way: given an $n$-ary connective \#, $f_\#$ being its truth-function, and $\alpha_1, \dots, \alpha_n \in \mathcal{L}$, 
\[\vartheta(\#(\alpha_1, \dots, \alpha_n)) = f_\#(\vartheta(\alpha_1), \dots, \vartheta(\alpha_n))\] 

In the modal case, the valuation is also extended to formulas with modalities in a non-truth-functional way (see Definition~\ref{kripkesemantics}).

\end{definition}

\begin{definition}[Semantic Consequence]\label{semantics}
     Given $\Gamma \cup \{\alpha\} \subseteq \mathcal{L}$, $\alpha$ is a (semantic) consequence of $\Gamma$, denoted by $\Gamma \vDash \alpha$, iff for all valuations $\vartheta$, if $\vartheta[\Gamma] \subseteq \mathrm{D}$, then $\vartheta(\alpha) \in \mathrm{D}$. We say that a formula $\alpha$ is valid in L iff $\vDash \alpha$, that is, $\alpha$ is designated in every valuation $\vartheta$.

\end{definition}

    Given $\alpha \in \mathcal{L}$, the formula $\alpha$ is \textit{satisfiable} if there is some valuation $\vartheta$ such that $\vartheta(\alpha) \in \{1, \half\}$.
    
    \begin{definition}[Soundness]
Given a semantic consequence relation $\vDash$ and a syntactic consequence relation $\vdash$, $\vdash$ is sound iff the following holds for all $\Gamma \cup \{\alpha\} \subseteq \mathcal{L}$:

\[\Gamma \vdash \alpha \text{ implies } \Gamma \vDash \alpha\]
\end{definition}

\begin{definition}[Weak Completeness]
    Given a syntactic consequence relation $\vdash$, the relation $\vdash$ is weakly complete with respect to $\vDash$ iff the following holds for all $\alpha \in \mathcal{L}$:
    \[\vDash \alpha \text{ implies } \vdash \alpha\]
\end{definition}

\begin{definition}[Strong Completeness]
Given a syntactic consequence relation $\vdash$, the relation $\vdash$ is strongly complete with respect to $\vdash$ iff the following holds for all $\Gamma \cup \{\alpha\} \subseteq \mathcal{L}$:
\[\Gamma \vDash \alpha \text{ implies } \Gamma \vdash \alpha\]
\end{definition}

\begin{remark}
    Weak completeness does not always entail strong completeness. For example, LP is weakly complete with respect to valuations of classical logic, even though it is not strongly complete with respect to it. 
\end{remark}

    \begin{definition}[Compactness]
        A logic is compact with relation to its consequence relation $\vDash$ if the following holds for all $\Gamma \cup \{\alpha\} \in \mathcal{L}$:
        \[\Gamma \vDash \alpha \text{ implies that } \text{there exists } \Gamma' \text{ such that }\Gamma' \subseteq \Gamma, \Gamma' \text{ is finite and } \Gamma' \vDash \alpha\]
    \end{definition}

\begin{definition}[Decidability]
    A logic is decidable with relation to a consequence relation $\vDash$ iff there is an effective procedure that determines in a finite number of steps, given $\alpha \in \mathcal{L}$, whether $\alpha$ is satisfiable or not. That is, if $\alpha$ is satisfiable, it is possible to determine in a finite number of steps a valuation $\vartheta$ such that $\vartheta(\alpha) \in \{1, \half\}$. If $\alpha$ is not satisfiable, it is possible to determine in a finite number of steps a countermodel of $\alpha$, that is, a valuation $\vartheta$ such that $\vartheta(\alpha) = 0$.
\end{definition}

For the logics mentioned in this paper, consider the following abbreviations: CPL stands for ``Classical Propositional Logic'', and LP stands for ``Logic of Paradox''.

\subsection{Paraconsistent logic LFI1}

In this section we present the basics of the paraconsistent propositional logic LFI1. The logic LFI1 is a well-known three-valued logic and offers an intuitive interpretation of paraconsistency. This logic was introduced in a different signature as J3 \cite{d1985completeness}, and it has interesting relations to well-known logics such as LP, which is a proper fragment of LFI1 \cite[p.\ 151]{carnielli2016paraconsistent}. It is  functionally equivalent to \L $_3$ \cite[p.\ 142]{carnielli2016paraconsistent}. 

     

The language of LFI1 is $\mathcal{L}_\bullet$ (Definition~\ref{language}), where $\bullet\alpha$ means ``$\alpha$ is inconsistent''. Below, some definable connectives are presented, together with their informal names. 

 \begin{definition}\label{definability}
     The connectives $\circ, \bot, \sim, \lor, \rightarrow, \leftrightarrow, \equiv$ are definable through the following abbreviations:
\[
\begin{array}{llll}
  \text{Consistency}  &    \circ\alpha & := & \neg\bullet\alpha\\
\text{Disjunction}     &   \alpha \lor \beta & := & \neg(\neg\alpha \land \neg\beta)\\
 \text{Falsum} &       \bot & := & \alpha \land \neg\alpha \land \circ\alpha         \\ \text{Strong negation} & \sim\alpha & :=  & \neg\alpha \land \circ\alpha           \\ \text{Implication} &
    \alpha \rightarrow \beta & := & \sim\alpha \lor \beta \\ \text{Equivalence}
  &  \alpha \leftrightarrow \beta & := & (\alpha \rightarrow \beta) \land (\beta \rightarrow \alpha)\\ \text{Strong equivalence}
  &  \alpha \equiv \beta & := & (\alpha \leftrightarrow \beta) \land (\neg\alpha \leftrightarrow \neg\beta)
     
\end{array}
\]
 \end{definition}

The logic LFI1 is a Logic of Formal Inconsistency (LFI), which means that it includes a unary operator that \textit{recovers} explosion in a controlled form. A logic is an LFI if it is paraconsistent with relation to $\neg$ and it includes a unary recovery\footnote{The unary operator is called a ``recovery'' operator because it \textit{recovers} classical reasoning by restrictively recovering the Principle of Explosion (PE) in paraconsistent logics (see \cite{carnielli2020recovery}).} operator $\circ$ such that \cite[p.\ 32]{carnielli2016paraconsistent}:
\[\begin{array}{lll}
   (i)  &  \alpha, \neg\alpha, \circ\alpha \vdash \beta, & \text{for all formulas } \alpha, \beta\\
    (ii) & \alpha, \circ\alpha \nvdash \beta, & \text{for some formulas } \alpha,\beta 
    \\(iii)& \neg\alpha, \circ\alpha \nvdash \beta, & \text{for some formulas } \alpha, \beta
\end{array}\]

We now present the axiomatization of LFI1 of \cite[p.\ 162]{carnielli2016paraconsistent}.

\begin{definition}
      The logic LFI1 is defined over the language $\mathcal{L}_\bullet$ by the Hilbert Calculus \cite[p.\ 160]{carnielli2016paraconsistent}: \[
\begin{array}{l@{\qquad}l}
\begin{array}{l l}
A1  & \alpha \rightarrow (\beta \rightarrow \alpha)\\
A2  & (\alpha \rightarrow (\beta \rightarrow \gamma))
      \rightarrow ((\alpha \rightarrow \beta) \rightarrow (\alpha \rightarrow \gamma))\\
A3  & \alpha \rightarrow (\beta \rightarrow (\alpha \land \beta))\\
A4  & (\alpha \land \beta) \rightarrow \alpha\\
A5  & (\alpha \land \beta) \rightarrow \beta\\
A6  & \alpha \rightarrow (\alpha \lor \beta)\\
A7  & \beta \rightarrow (\alpha \lor \beta)\\
A8  & (\alpha \rightarrow \gamma)
      \rightarrow ((\beta \rightarrow \gamma)
      \rightarrow ((\alpha \lor \beta) \rightarrow \gamma))
\end{array}
&
\begin{array}{l l}
A9  & \alpha \lor (\alpha \rightarrow \beta)\\
A10 & \alpha \lor \neg\alpha\\
A11 & \circ\alpha \rightarrow (\alpha \rightarrow (\neg\alpha \rightarrow \beta))\\
A12 & \bullet\alpha \rightarrow (\alpha \land \neg\alpha)\\
A13 & \neg\neg\alpha \leftrightarrow \alpha\\
A14 & \neg(\alpha \lor \beta) \leftrightarrow (\neg\alpha \land \neg\beta)\\
A15 & \neg(\alpha \land \beta) \leftrightarrow (\neg\alpha \lor \neg\beta)\\
A16 & \neg(\alpha \rightarrow \beta) \leftrightarrow (\alpha \land \neg\beta)\\
MP  & \alpha,\, \alpha \rightarrow \beta \Rightarrow \beta
\end{array}
\end{array}
\]
\end{definition}

The notion of an LFI1-derivation $\vdash_{LFI1}$ is according to Definition \ref{derivability}.

\begin{remark}
    Despite containing only two axioms concerning the inconsistency operator, namely, axioms A11 and A12, LFI1 is strong enough to be characterizable as a three-valued logic that extends positive classical logic (that is, the negation-less fragment of classical logic) with a paraconsistent negation and an inconsistency operator. Axiom A11 says that if $\alpha$ is consistent, then $\alpha$ and $\neg\alpha$ cannot both be the case. Axiom A12 says that if $\alpha$ is inconsistent, then $\alpha$ is contradictory. Together, these two axioms ensure that $\alpha$ is consistent iff $\alpha$ is not contradictory. 
\end{remark}


\begin{proposition}[Replacement of Equivalents \normalfont{\cite[p.\ 181]{carnielli2016paraconsistent}}]\label{replacement}
    The logic LFI1 satisfies Replacement of Equivalents (RE) with respect to $\equiv$. For all $\theta, \gamma, \alpha \in \mathcal{L}$,
    \[\begin{array}{ll}
   RE & \theta \equiv \gamma \Rightarrow  \alpha \equiv \alpha(\theta/\gamma) 
    \end{array}\]
\end{proposition}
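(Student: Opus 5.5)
The plan is to argue semantically and then use completeness of LFI1 with respect to its three-valued matrix; the matrix is the one behind Definition~\ref{valuation}, and completeness is the known result of \cite{carnielli2016paraconsistent}. The first step is to compute the truth tables of the defined connectives from Definition~\ref{definability}. In the LFI1 matrix, $\neg$ swaps $1$ and $0$ and fixes $\half$, $\land$ is minimum, and $\bullet$ yields $1$ on $\half$ and $0$ otherwise. From these we get:
\begin{itemize}
\item $\circ\alpha$ is $1$ iff $\vartheta(\alpha)\neq\half$;
\item $\sim\alpha$ is $1$ iff $\vartheta(\alpha)=0$, and $0$ otherwise;
\item $\alpha\to\beta$ is designated iff ($\vartheta(\alpha)\notin D$ or $\vartheta(\beta)\in D$);
\item $\alpha\leftrightarrow\beta$ is designated iff $\alpha$ and $\beta$ are both designated or both undesignated.
\end{itemize}
Hence $\theta\equiv\gamma$ is designated under $\vartheta$ iff $\theta$ and $\gamma$ agree on designation and $\neg\theta$ and $\neg\gamma$ also agree on designation. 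Since $\neg$ maps $1,\half,0$ to $0,\half,1$, the pair ($\vartheta(x)\in D$, $\vartheta(\neg x)\in D$) equals $(T,F)$, $(T,T)$, $(F,T)$ respectively. This pair is injective on values, so $\vartheta(\theta\equiv\gamma)\in D$ iff $\vartheta(\theta)=\vartheta(\gamma)$.

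Next I read RE as a derivable rule: if $\vdash_{LFI1}\theta\equiv\gamma$, then $\vdash_{LFI1}\alpha\equiv\alpha(\theta/\gamma)$. By soundness, $\vdash\theta\equiv\gamma$ gives $\vartheta(\theta)=\vartheta(\gamma)$ for every valuation $\vartheta$. The next step is an induction on $\alpha$ showing $\vartheta(\alpha)=\vartheta(\alpha(\theta/\gamma))$ for every valuation $\vartheta$.
\begin{itemize}
\item If $\alpha$ is itself the replaced occurrence $\theta$, this is the hypothesis.
\item If $\alpha$ is a variable other than that occurrence, nothing changes.
\item If $\alpha$ is $\neg\alpha_1$, $\bullet\alpha_1$ or $\alpha_1\land\alpha_2$, we use truth-functionality from Definition~\ref{valuation}: $\vartheta(\#(\alpha_1,\dots))=f_\#(\vartheta(\alpha_1),\dots)$, and the induction hypothesis applies to the arguments.
\end{itemize}
This works whether we replace all occurrences or only some. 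By the first paragraph, $\alpha\equiv\alpha(\theta/\gamma)$ is then designated under every valuation, so it is valid. Weak completeness of LFI1 then yields $\vdash_{LFI1}\alpha\equiv\alpha(\theta/\gamma)$.

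The main obstacle is that the plain biconditional $\leftrightarrow$ would \emph{not} suffice: $\half$ and $1$ are both designated but behave differently under $\bullet$. The key step is therefore the observation that $\equiv$ pins down the exact truth value. A purely syntactic alternative would be an induction with derived lemmas such as $\theta\equiv\gamma\vdash\neg\theta\equiv\neg\gamma$, $\theta\equiv\gamma\vdash\bullet\theta\equiv\bullet\gamma$ (obtained via A12, A11 and A13), and compatibility of $\equiv$ with $\land$ (using A15). I would avoid this route, because the semantic argument makes these lemmas unnecessary.
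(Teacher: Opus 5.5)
Your proof is correct, but it takes a different route from the paper's. The paper does not prove Proposition~\ref{replacement} itself; it only cites Carnielli--Coniglio. Its surrounding material, namely the congruence proposition stated right after it and the induction in Lemma~\ref{admissibility of comp}, follows essentially the syntactic alternative you mention and set aside. There, $\equiv$ is taken to be reflexive, symmetric and transitive, the congruentiality instances $\neg R$, $\bullet R$, $\land R$ are taken as derivable, and RE follows by induction on $\alpha$, one connective at a time.

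You instead run the induction on valuations. You use truth-functionality together with the fact that $\vartheta(\theta\equiv\gamma)\in D$ iff $\vartheta(\theta)=\vartheta(\gamma)$, which the paper only remarks in passing after Theorem~\ref{completenesslfi1}. You then transfer the result back by soundness and completeness.

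What your route buys:
\begin{itemize}
\item It is shorter and avoids the Hilbert-style derivations that the paper dismisses as ``easy to verify''.
\item It makes transparent why $\equiv$, and not $\leftrightarrow$, is needed.
\item Since Theorem~\ref{completenesslfi1} gives \emph{strong} completeness, running your argument on a single valuation even yields the hypothetical form $\theta\equiv\gamma\vdash_{LFI1}\alpha\equiv\alpha(\theta/\gamma)$. That is condition (iv) of the congruence proposition.
\end{itemize}

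What it costs:
\begin{itemize}
\item It produces no actual derivation and rests on completeness.
\item It also rests on \emph{local} truth-functionality, which fails once $K_i$ or $[U,e]$ are added: there $\vartheta_w(K_i\beta)$ is not a function of $\vartheta_w(\beta)$.
\item So in the extensions only the theorem-to-theorem form survives. The semantic induction then has to quantify over all worlds at once, as in the RE case of Theorem~\ref{soundness}.
\end{itemize}

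The syntactic route, by contrast, gives RE without appeal to semantics. It also isolates exactly the per-connective rules that must be re-established when the signature grows.
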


The strong equivalence $\equiv$, that is definable (Definition~\ref{definability}) is important in our paper, because it is characterized as a \textit{congruence relation}. Not all many-valued logics have such congruence relations, for example the logics K$_3$ and LP do not have it. Congruence relations are an essential feature for the algebraizability of  logics.

\begin{proposition}
    The connective $\equiv$ is a congruence relation on LFI1.
\end{proposition}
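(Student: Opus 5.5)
Being a congruence means two things: $\equiv$ is an equivalence relation, in the sense that $\vdash \alpha\equiv\alpha$, $\alpha\equiv\beta\vdash\beta\equiv\alpha$ and $\alpha\equiv\beta,\beta\equiv\gamma\vdash\alpha\equiv\gamma$; and it is compatible with every connective of $\Sigma_\bullet$, i.e.\ $\alpha\equiv\beta\vdash\neg\alpha\equiv\neg\beta$, $\alpha\equiv\beta\vdash\bullet\alpha\equiv\bullet\beta$, and $\alpha_1\equiv\beta_1,\alpha_2\equiv\beta_2\vdash(\alpha_1\land\alpha_2)\equiv(\beta_1\land\beta_2)$. I would argue semantically via the three-valued matrix of LFI1 (soundness and completeness of the Hilbert calculus being standard), and first compute the truth table of $\equiv$.

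\textbf{Step 1: the meaning of $\equiv$.} Using the LFI1 truth functions ($\neg$ swaps $1$ and $0$ and fixes $\half$; $\bullet$ sends $\half$ to $1$ and $0,1$ to $0$; $\land$ is min), we get $\sim\alpha=1$ iff $\alpha=0$ and $\sim\alpha=0$ otherwise. Hence $\alpha\imp\beta$ is designated iff $\alpha=0$ or $\beta\in D$, so $\alpha\leftrightarrow\beta$ is designated iff $\alpha,\beta$ are both designated or both $0$. Adding the conjunct $\neg\alpha\leftrightarrow\neg\beta$ further separates $1$ from $\half$. So the key lemma is: $\vartheta(\alpha\equiv\beta)\in D$ iff $\vartheta(\alpha)=\vartheta(\beta)$, and moreover $\equiv$ then takes the value $1$, since $\to$ never yields $\half$.

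\textbf{Step 2: the congruence properties.} With Step 1, all properties reduce to the fact that each connective is a function on truth values. If $\vartheta(\alpha)=\vartheta(\beta)$, then $\vartheta(\#\alpha)=\vartheta(\#\beta)$ for $\#\in\{\neg,\bullet\}$, and likewise for $\land$. Reflexivity, symmetry and transitivity are inherited from equality of values. By completeness of LFI1 each semantic consequence yields the corresponding derivation. Alternatively, one could cite Proposition~\ref{replacement} directly: RE with $\alpha$ taken to be $\neg p$, $\bullet p$, or $p\land q$ gives compatibility, provided RE is read as the consequence $\theta\equiv\gamma\vdash\alpha\equiv\alpha(\theta/\gamma)$. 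I would state the semantic route and mention this shortcut.

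\textbf{Step 3: the main obstacle.} The delicate part is Step 1, namely checking that ordinary $\leftrightarrow$ alone fails, since it identifies $1$ and $\half$. It is exactly the $\neg$-conjunct that restores injectivity, and this is why LP, lacking $\circ$ and hence $\sim$, has no such congruence. I would verify the nine value pairs explicitly in a small table. One can also read the statement as: the Lindenbaum relation $\{(\alpha,\beta):\vdash\alpha\equiv\beta\}$ is a congruence on the formula algebra, and Step 1 gives that too.
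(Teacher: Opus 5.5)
Your proof is correct and takes essentially the paper's approach. The paper lists the same equivalence-relation conditions and the $\neg$, $\bullet$, $\land$ compatibility instances, calls them easy to verify, and elsewhere points to exactly your key fact that $\vartheta(\alpha\equiv\beta)\in D$ iff $\vartheta(\alpha)=\vartheta(\beta)$. One correction: your aside that $\equiv$ then takes the value $1$ ``since $\rightarrow$ never yields $\half$'' is false, because $1\rightarrow\half=\half$ and $\half\equiv\half=\half$ in the paper's table; your argument only uses designatedness, so nothing depends on it.
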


\begin{proof} 
Strong equivalence $\equiv$ satisfies the following conditions in LFI1 \cite[p. 130]{carnielli2016paraconsistent}, where $\#$ is an arbitrary $n$-ary connective in the signature:
    \[
    \begin{array}{lll}
       (i)  & \alpha \equiv \alpha& \text{Reflexivity}\\
       (ii)  & \alpha \equiv \beta \Rightarrow \beta \equiv \alpha&\text{Symmetry}\\
       (iii) & \alpha \equiv \beta, \beta \equiv \gamma \Rightarrow\ \alpha \equiv \gamma & \text{Transitivity}\\
       (iv) &  \alpha_1 \equiv \beta_1, ...,\ \alpha_n \equiv \beta_n \Rightarrow \#(\alpha_1, ..., \alpha_n) \equiv \#(\beta_1, ..., \beta_n) & \text{Congruenciality}
    \end{array}
\]
As in LFI1 the connectives $\#$ are $\neg,\et,\bullet$ we need to observe that the following instantiations of $(iv)$ are derivable.
\[
\begin{array}{ll}
     \neg R & \alpha \equiv \beta \Rightarrow \neg\alpha \equiv \neg\beta  \\
     \bullet R&  \alpha \equiv \beta \Rightarrow \bullet\alpha \equiv \bullet\beta\\
     \land R &  \alpha_1 \equiv \beta_1, \alpha_2 \equiv \beta_2 \Rightarrow (\alpha_1 \land \alpha_2) \equiv (\beta_1 \land \beta_2)
\end{array}\]
This is easy to verify and thus ends the proof.
\end{proof}

\begin{definition}
    Let $\mathcal{M} = (\mathcal A, \mathrm{D})$ be a matrix such that $A$ and $D$ are defined as in Definition~\ref{valuation}. The truth tables of $\mathcal{M}$ are \cite[p.\ 159]{carnielli2016paraconsistent}:
\begin{table}[h]
\begin{tabular}{|l|l|l|l|l|l|l|l|}
\cline{1-4} \cline{6-8}
$\land$ & 1   & $\half$ & 0 &  &     & $\neg$ & $\bullet$ \\ \cline{1-4} \cline{6-8} 
1       & 1   & $\half$ & 0 &  & 1   & 0      & 0         \\ \cline{1-4} \cline{6-8} 
$\half$     & $\half$ & $\half$ & 0 &  & $\half$ & $\half$    & 1         \\ \cline{1-4} \cline{6-8} 
0       & 0   & 0   & 0 &  & 0   & 1      & 0         \\ \cline{1-4} \cline{6-8} 
\end{tabular}
\end{table}

Some of the definable connectives are characterized by the truth functions below.

\begin{table}[h]
\begin{tabular}{|l|l|l|l|l|l|l|l|l|l|l|l|l|l|l|l|l|l|l|}
\cline{1-4} \cline{6-9} \cline{11-14} \cline{16-19}
$\lor$ & 1 & $\half$ & 0   &  & $\rightarrow$ & 1 & $\half$ & 0 &  & $\equiv$ & 1 & $\half$ & 0 &  &     & $\circ$ & $\sim$ & $\bot$ \\ \cline{1-4} \cline{6-9} \cline{11-14} \cline{16-19} 
1      & 1 & 1   & 1   &  & 1             & 1 & $\half$ & 0 &  & 1        & 1 & 0   & 0 &  & 1   & 1       & 0      & 0      \\ \cline{1-4} \cline{6-9} \cline{11-14} \cline{16-19} 
$\half$    & 1 & $\half$ & $\half$ &  & $\half$           & 1 & $\half$ & 0 &  & $\half$      & 0 & $\half$ & 0 &  & $\half$ & 0       & 0      & 0      \\ \cline{1-4} \cline{6-9} \cline{11-14} \cline{16-19} 
0      & 1 & $\half$ & 0   &  & 0             & 1 & 1   & 1 &  & 0        & 0 & 0   & 1 &  & 0   & 1       & 1      & 0      \\ \cline{1-4} \cline{6-9} \cline{11-14} \cline{16-19} 
\end{tabular}
\end{table}

For all $c \in \Sigma_\bullet$, let $f_c$ be the truth function it denotes in $\mathcal{M}$. A function $\vartheta: \mathcal{L}_\bullet \imp A$ is an LFI1-\emph{valuation} if it satisfies 
for all $c \in \Sigma$ that:

\[\begin{array}{c}
           \vartheta(c(\alpha_1, \alpha_2, ..., \alpha_n)) = f_c(\vartheta(\alpha_1), ..., \vartheta(\alpha_n))
\end{array}    \]  

The notion of logical consequence, denoted by $\vDash_{LFI1}$, is according to Definition~\ref{semantics}.
\end{definition}

\begin{theorem}[Soundness and Strong Completeness \normalfont{\cite{Carnielli2000-CARFIA-2})}]\label{completenesslfi1}
Consider $\Gamma \cup \{\alpha\} \subseteq \mathcal{L}_\bullet$. Then, $$\Gamma \vdash_{LFI1} \alpha \text{ iff } \Gamma \vDash_{LFI1} \alpha$$
\end{theorem}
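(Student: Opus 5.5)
Soundness is proved by induction on the length of derivations. We check that every instance of A1--A16 takes a designated value under every LFI1-valuation, by a finite truth-table computation over the three values for each schema. We also check that MP preserves designation: by the table for $\rightarrow$, if $\vartheta(\alpha)\in D$ and $\vartheta(\alpha\rightarrow\beta)\in D$, then $\vartheta(\beta)\in D$. The key point is that $\alpha\rightarrow\beta$ is $\sim\alpha\lor\beta$. When $\vartheta(\alpha)\in D$ we have $\vartheta(\sim\alpha)=0$, so $\vartheta(\alpha\rightarrow\beta)=\vartheta(\beta)$. Since hypotheses in $\Gamma$ are designated by assumption, every line of a derivation from $\Gamma$ is designated, and hence $\Gamma\vDash_{LFI1}\alpha$.

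Strong completeness goes by a Lindenbaum--Asser construction. Suppose $\Gamma\nvdash_{LFI1}\alpha$. We first establish the Deduction Metatheorem for $\rightarrow$. This is standard, because the only rule is MP and A1, A2 are present. By Zorn's lemma, or by enumerating the countable language, we extend $\Gamma$ to a set $\Delta$ that is maximal non-trivial with respect to $\alpha$: $\Delta\nvdash\alpha$, and $\Delta\cup\{\beta\}\vdash\alpha$ for every $\beta\notin\Delta$. Such a $\Delta$ is a closed theory. Using A8 and the deduction theorem, it is also prime: $\beta\lor\gamma\in\Delta$ iff $\beta\in\Delta$ or $\gamma\in\Delta$. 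In addition, $\beta\in\Delta$ or $\beta\rightarrow\gamma\in\Delta$ by A9, and $\beta\in\Delta$ or $\neg\beta\in\Delta$ by A10.

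We then define $\vartheta(p)=1$ if $p\in\Delta$ and $\neg p\notin\Delta$; $\vartheta(p)=\half$ if both are in $\Delta$; and $\vartheta(p)=0$ if $\neg p\in\Delta$ and $p\notin\Delta$. By A10 these cases are exhaustive. The main lemma, proved by induction on $\beta$, states that for every formula $\beta$ the same characterization holds: the value $\vartheta(\beta)$ is determined by membership of $\beta$ and $\neg\beta$ in $\Delta$, as above. The inductive cases use the following facts:
\begin{itemize}
\item For $\neg\beta$: A13 gives $\neg\neg\beta\in\Delta$ iff $\beta\in\Delta$, which matches the table for $\neg$.
\item For $\beta\land\gamma$: A3--A5 handle membership of the conjunction, and A15 together with primeness handles $\neg(\beta\land\gamma)$.
\item For $\bullet\beta$: by A12, $\bullet\beta\in\Delta$ implies both $\beta$ and $\neg\beta$ are in $\Delta$. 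Conversely, if both are in $\Delta$ and $\bullet\beta\notin\Delta$, then $\circ\beta=\neg\bullet\beta\in\Delta$ by A10, and A11 makes $\Delta$ trivial, contradicting $\Delta\nvdash\alpha$.
\end{itemize}

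The $\bullet$ case is where I expect the main obstacle. Besides membership of $\bullet\beta$ itself, we must also show that $\neg\bullet\beta\in\Delta$ iff $\bullet\beta\notin\Delta$, so that $\bullet\beta$ always receives a classical value 0 or 1, as the truth table requires. One direction is A10. The other requires showing that $\bullet\beta$ and $\neg\bullet\beta$ cannot both lie in $\Delta$. I would derive $\circ\bullet\beta$ as a theorem, i.e.\ $\neg\bullet\bullet\beta$: apply A12 to get $\bullet\bullet\beta\rightarrow(\bullet\beta\land\neg\bullet\beta)$, and combine it with A11 and A12 applied to $\beta$ to show that $\bullet\beta\land\neg\bullet\beta$ trivializes. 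If that direct derivation proves awkward, I would fall back on citing the derivability of $\circ\bullet\beta$ and $\circ\circ\beta$ in LFI1 from \cite{carnielli2016paraconsistent}. With the main lemma in hand, $\vartheta$ designates every member of $\Gamma\subseteq\Delta$, while $\alpha\notin\Delta$ gives $\vartheta(\alpha)=0$. Therefore $\Gamma\nvDash_{LFI1}\alpha$, which completes the contrapositive of strong completeness.
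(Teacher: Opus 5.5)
The paper gives no proof of this theorem; it only cites \cite{Carnielli2000-CARFIA-2}. Your argument is a correct, self-contained proof along the standard lines for LFIs. Soundness comes from checking A1--A16 on the matrix and noting that $\vartheta(\alpha\to\beta)=\vartheta(\beta)$ whenever $\vartheta(\alpha)\in D$. Completeness comes from extending $\Gamma$ to a set $\Delta$ that is maximal non-trivial with respect to $\alpha$, and reading off $\vartheta(\beta)$ from whether $\beta$ and $\neg\beta$ belong to $\Delta$. Compared with the bare citation, your route shows which axioms do which work. A10 gives exhaustiveness of the three cases. A13 handles $\neg$. A3--A5, A15 and primeness (A6--A8 with the deduction theorem) handle $\land$. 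A11--A12 handle $\bullet$.

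The step you flagged as the main obstacle is simpler than you expect: you do not need $\vdash\circ\bullet\beta$ at all. Since $\neg\bullet\beta$ is by definition $\circ\beta$, suppose both $\bullet\beta$ and $\neg\bullet\beta$ are in $\Delta$. Then A12 puts $\beta$ and $\neg\beta$ in $\Delta$, and A11 together with $\circ\beta\in\Delta$ gives $\alpha\in\Delta$, contradicting $\Delta\nvdash\alpha$. This is exactly the intermediate trivialization of $\bullet\beta\land\neg\bullet\beta$ in your plan, so the rest of the detour can be dropped.

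If you did want $\vdash\circ\bullet\beta$, note that going from ``$\bullet\bullet\beta$ trivializes'' to $\vdash\neg\bullet\bullet\beta$ needs a step you left implicit. You must derive $\bullet\bullet\beta\to\neg\bullet\bullet\beta$, combine it with $\neg\bullet\bullet\beta\to\neg\bullet\bullet\beta$ via A8, and discharge with A10. Paraconsistent negation does not let you conclude $\neg\gamma$ from $\gamma\to\bot$ without such a proof by cases.
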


In LFI1, the usual logical equivalence does not induce  a congruence. Consider $\bullet\alpha \leftrightarrow (\alpha \land \neg\alpha)$. Indeed, from axioms 11 and 12 it follows that $\vdash \bullet\alpha \leftrightarrow (\alpha \land \neg\alpha)$. However, using Soundness from Theorem \ref{completenesslfi1} it is straightforward to prove that $\nvdash \neg\bullet\alpha \leftrightarrow \neg(\alpha \land \neg\alpha)$: consider valuation $\vartheta(\alpha) = \half$. Then $\vartheta(\neg\bullet\alpha) = 0$, whereas $\vartheta(\neg(\alpha \land \neg\alpha)) = \half$. We therefore need the stronger notion of syntactical equivalence $\equiv$ that we defined as {\em strong equivalence}. Semantically we can prove that $\equiv$ is a strong equivalence by showing that $\vartheta(\alpha \equiv\beta) \in D$ iff $\vartheta(\alpha)=\vartheta(\beta)$. For a detailed exposition on such strong or {\em strict equivalences} (\textit{congruence relations}), in the context of algebraic logics, see also \cite{blok1989algebraizable}.

The logic LFI1 is also decidable and compact, which has been proved in general for any \textit{finitely}-many-valued logic in \cite[p.\ 34, p.\ 172]{gottwald2001treatise}.


\begin{lemma}\label{equivalences}
    The following strict equivalences hold in LFI1:
    \[
\begin{array}{l@{\qquad}l}
\begin{array}{l l}
1. & \vdash \alpha \land (\alpha \rightarrow \beta) \equiv \alpha \land \beta\\
2. & \vdash (\alpha \land \beta) \rightarrow \gamma \equiv \alpha \rightarrow (\beta \rightarrow \gamma)
\end{array}
&
\begin{array}{l l}
3. & \vdash \alpha \rightarrow (\beta \rightarrow \gamma) \equiv \beta \rightarrow (\alpha \rightarrow \gamma)\\
4. & \vdash \alpha \rightarrow (\beta \lor \gamma) \equiv (\alpha \rightarrow \beta) \lor (\alpha \rightarrow \gamma)
\end{array}
\end{array}
\]
\end{lemma}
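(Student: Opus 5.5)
The plan is to argue semantically and then transfer the result to the proof system with Theorem~\ref{completenesslfi1}. The fact I will rely on is the one stated after that theorem: for every LFI1-valuation $\vartheta$, $\vartheta(\alpha \equiv \beta) \in D$ iff $\vartheta(\alpha) = \vartheta(\beta)$. So for each of the four items it suffices to show that both sides receive the same truth value under every valuation. Since the truth values of compound formulas depend only on the values of $\alpha,\beta,\gamma$, this is a finite check over at most $3^3$ combinations. Strong completeness then gives $\vdash \delta \equiv \delta'$ from $\vDash \delta \equiv \delta'$.

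To avoid a raw $27$-row table, I will first read off from the truth table of $\rightarrow$ a compact description of implication:
\[
\vartheta(\alpha \rightarrow \beta) = 1 \text{ if } \vartheta(\alpha) = 0, \qquad \vartheta(\alpha \rightarrow \beta) = \vartheta(\beta) \text{ if } \vartheta(\alpha) \in D.
\]
I will also note from the table for $\land$ that $\vartheta(\alpha\land\beta) = 0$ iff $\vartheta(\alpha)=0$ or $\vartheta(\beta)=0$. Each item then reduces to the case split $\vartheta(\alpha) = 0$ versus $\vartheta(\alpha) \in D$.

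For item 1: if $\vartheta(\alpha)=0$, both sides are $0$. Otherwise $\vartheta(\alpha\rightarrow\beta)=\vartheta(\beta)$, so both sides equal $\vartheta(\alpha)\land\vartheta(\beta)$.

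For item 2: if $\vartheta(\alpha)=0$, or $\vartheta(\alpha)\in D$ and $\vartheta(\beta)=0$, the antecedent $\alpha\land\beta$ has value $0$. In that case the left side is $1$, and the right side is also $1$: directly when $\vartheta(\alpha)=0$, and via the inner implication $\beta\rightarrow\gamma$ when $\vartheta(\beta)=0$. If both $\vartheta(\alpha),\vartheta(\beta)\in D$, then $\vartheta(\alpha\land\beta)\in D$, and both sides reduce to $\vartheta(\gamma)$.

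Item 3 follows from the same case analysis, since both sides equal $1$ when $\vartheta(\alpha)=0$ or $\vartheta(\beta)=0$, and equal $\vartheta(\gamma)$ otherwise. Alternatively, it follows from item 2, commutativity of $\land$ under $\equiv$, and transitivity and congruence of $\equiv$.

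For item 4: if $\vartheta(\alpha)=0$, both sides are $1$, since $1\lor 1=1$. Otherwise both sides are $\vartheta(\beta)\lor\vartheta(\gamma)$.

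I do not expect a real obstacle. The only point needing care is the characterization of $\equiv$: identical truth values are required, not merely both being designated. The value $\half$ must therefore be tracked exactly, which the description of $\rightarrow$ above does. I would double-check that description against the printed $\rightarrow$ table, in particular the row $\half\rightarrow x = x$, before doing the case analysis.
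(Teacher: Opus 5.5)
Your proposal is correct and matches the paper's proof: check each item by truth tables, using that $\alpha \equiv \beta$ is designated exactly when both sides take the same value, then transfer to derivability by completeness of LFI1 (Theorem~\ref{completenesslfi1}). Your description of $\rightarrow$ (value $1$ when the antecedent is $0$, otherwise the value of the consequent) agrees with the printed table and supplies the case analysis the paper leaves to the reader.
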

\begin{proof} 
Using completeness of LFI1 from Theorem \ref{completenesslfi1}, it is straightforward to check the congruences in each case via truth-tables.
\end{proof}

\begin{remark}
Despite being simple cases of equivalences in classical logic, finding congruences in LFI1 becomes a non-trivial task: not every equivalence is a congruence. Consider, for example, the case of the equivalence $\bullet\alpha \leftrightarrow (\alpha \land \neg\alpha)$. Indeed, from axioms 11 and 12, it is the case that $\vdash \bullet\alpha \leftrightarrow (\alpha \land \neg\alpha)$ (left-to-right is just an application of A12; right-to-left is obtained \textit{ad absurdum} supposing that $\circ\alpha$ is the case and an application of A11).
However, using soundness of LFI1 it is straightforward to prove that $\nvdash \neg\bullet\alpha \leftrightarrow \neg(\alpha \land \neg\alpha)$: consider a valuation $\vartheta$ such that $\vartheta(\alpha) = \half$. Then, $\vartheta(\neg\bullet\alpha) = 0$, whereas $\vartheta(\neg(\alpha \land \neg\alpha)) = \half$.
\end{remark}

\begin{remark}
     The Logic of Paradox (LP) is the $\neg\land$-fragment of LFI1. Extending LP with an inconsistency operator $\bullet$ results in LFI1 \cite{ongaratto2025}. The logic LFI1 can define many connectives that are not definable in LP, as we can see from Definition \ref{definability}. This makes LP and LFI1 two very distinct logics. The logic LP has the same validities as classical propositional logic CPL: $\vDash_{LP}\alpha$ iff $\vDash_{CPL} \alpha$) \cite[p.\ 1296]{golan2023simple}, but it still has a different consequence relation. In contrast, LFI1 and LP do not have the same validities. Although $\vDash_{LFI1} \alpha$ implies $\vDash_{CPL} \alpha$, because LFI1 is a subclassical logic, the converse does not hold. A straightforward counterexample is the CPL tautology $\alpha \rightarrow (\neg\alpha \rightarrow \beta)$ which is invalid in LFI1. Furthermore, implications in LP and LFI1 behave differently. Whereas the conditional (the implication) in LP is definable as $\alpha \rightarrow_{LP} \beta :=\neg\alpha \lor \beta$, resulting in a non-detachable conditional, LFI1 can define stronger implications such as $\rightarrow$ that are detachable. 
 \end{remark}

\subsection{Modal extensions of LFI1}

We now continue with the epistemic modal extensions of LFI1. They are defined over the $\mathcal{L}_K$ language of Definition~\ref{language}. The \textit{dual} of $K_i$, $\hat{K}_i$, is definable as $\hat{K}_i \alpha := \neg K_i \neg\alpha$ \cite{bueno2025towards}. 
\begin{definition}\label{klfi1}
      The logic KLFI1 extends LFI1 with the following axioms and rules, for all $i \in N$: 
\[\begin{array}{l l}
 \mathrm{K} &  K_i(\alpha \rightarrow \beta) \rightarrow (K_i\alpha \rightarrow K_i\beta)    \\

\mathrm{K}_1 & K_i(\alpha \rightarrow \beta) \rightarrow (\hat{K}_i\alpha \rightarrow \hat{K}_i\beta)\\

 \mathrm{K}_2 & \hat{K}_i(\alpha \lor \beta) \rightarrow (\hat{K}_i\alpha \lor \hat{K}_i\beta)
 \\
 \mathrm{K}_3 & (\hat{K}_i\alpha \rightarrow K_i\beta) \rightarrow K_i(\alpha \rightarrow \beta)\\
  \mathrm{Nec} & \alpha \Rightarrow K_i\alpha 
\end{array}\]
\end{definition}

Because contraposition does not hold, KLFI1 requires additional axioms K$_1$-K$_3$, apart from K, to ensure completeness \cite[p. 3]{bueno2012models}.

  \begin{definition}\label{kripkesemantics}
        A Kripke model for KLFI1 defined over $\mathcal{L}_K$ is a structure $\langle W, \{\sim_i\}_{i \in N}, \vartheta\rangle$, where each $\sim_i$ is a relation on $W$ for all $i$ $\in$ N, and where $\vartheta$ is a set of valuations $\vartheta_w$ for each $w \in W$ such that $\vartheta_w$ extends an LFI1-valuation with condition:
\[\begin{array}{c}
                  \vartheta_{w}(K_i\alpha) = inf\{\vartheta_{w'}(\alpha): w \sim_i w'\}     
\end{array}    \]    
 A formula $\alpha$ is \textit{valid} on a model $\mathcal{M}$, denoted by $\mathcal{M} \vDash \alpha$, if $\vartheta_w(\alpha) \in \{1, \half\}$ for all $w \in W.$ An S5LFI1-valuation is an LFI1-valuation that satisfies the additional clause above for modal operators. 
 
            \end{definition}
Soundness and (strong) completeness of KLFI1 with respect to its Kripke semantics are proven in \cite[p.\ 227]{bueno2024many}, as well as that of its normal modal extensions. The strict equivalence defined for LFI1 preserves its properties in KLFI1 and its epistemic extensions \cite{bueno2025towards}. 

\begin{definition}
   We can extend KLFI1, the minimal \textit{normal modal logic} \cite{bueno2025towards} based on LFI1, with axioms, for $i \in N$:
\[
\begin{array}{l@{\qquad}l}
\begin{array}{ll}
\mathrm{T} & K_i\alpha \rightarrow \alpha\\
\mathrm{D} & K_i\alpha \rightarrow \hat{K}_i\alpha\\
\mathrm{B} & \alpha \rightarrow K_i\hat{K}_i\alpha
\end{array}
&
\begin{array}{ll}
\mathrm{4} & K_i\alpha \rightarrow K_iK_i\alpha\\
\mathrm{5} & \hat{K}_i\alpha \rightarrow K_i\hat{K}_i\alpha
\end{array}
\end{array}
\]
\noindent
As above for KLFI1, we need to extend each system with the following dual axioms \cite[p. 26]{bueno2025towards}. Note that the contraposition D$'$ is D again. It is therefore lacking below.

\[
\begin{array}{l@{\qquad}l}
\begin{array}{ll}
\mathrm{T'} & \alpha \rightarrow \hat{K}_i\alpha\\
\mathrm{B'} & \hat{K}_iK_i\alpha \rightarrow \alpha
\end{array}
&
\begin{array}{ll}
\mathrm{4'} & \hat{K}_i\hat{K}_i\alpha \rightarrow \hat{K}_i\alpha\\
\mathrm{5'} & \hat{K}_iK_i\alpha \rightarrow K_i\alpha
\end{array}
\end{array}
\]
\end{definition}

\begin{proposition}[Deduction Theorem \normalfont{\cite[p.\ 11]{bueno2025towards})}] Consider $\Gamma \cup \{\alpha, \beta\}\subseteq \mathcal{L}_K$.
Then, 
\[\Gamma \cup \{\alpha\} \vdash_{KLFI1} \beta \text{ implies } \Gamma \vdash_{KLFI1} \alpha \rightarrow \beta\]
\end{proposition}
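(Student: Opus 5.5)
The plan is the standard induction on the length of a derivation of $\beta$ from $\Gamma \cup \{\alpha\}$. Throughout, derivability is read with the convention fixed earlier: MP may be applied to hypotheses, while Nec (the only other rule of KLFI1) applies only to theorems, i.e.\ to formulas derived without hypotheses. Let $\gamma_1, \dots, \gamma_n = \beta$ be the derivation. I will show by strong induction on $k$ that $\Gamma \vdash_{KLFI1} \alpha \rightarrow \gamma_k$ for every $k \le n$.

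The induction splits into the following cases.

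\textbf{Case 1: $\gamma_k$ is an axiom or a member of $\Gamma$.} Then $\Gamma \vdash \gamma_k$. Axiom A1 gives $\gamma_k \rightarrow (\alpha \rightarrow \gamma_k)$, and MP yields $\Gamma \vdash \alpha \rightarrow \gamma_k$.

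\textbf{Case 2: $\gamma_k = \alpha$.} We need $\vdash \alpha \rightarrow \alpha$. This follows from A1 and A2 by the usual derivation:
\begin{itemize}
\item instantiate A2 as $(\alpha\rightarrow((\alpha\rightarrow\alpha)\rightarrow\alpha))\rightarrow((\alpha\rightarrow(\alpha\rightarrow\alpha))\rightarrow(\alpha\rightarrow\alpha))$;
\item discharge the two antecedents with the A1 instances $\alpha\rightarrow((\alpha\rightarrow\alpha)\rightarrow\alpha)$ and $\alpha\rightarrow(\alpha\rightarrow\alpha)$, using MP twice.
\end{itemize}

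\textbf{Case 3: $\gamma_k$ comes by MP from earlier $\gamma_j$ and $\gamma_l = \gamma_j \rightarrow \gamma_k$.} By the induction hypothesis, $\Gamma \vdash \alpha\rightarrow\gamma_j$ and $\Gamma \vdash \alpha\rightarrow(\gamma_j\rightarrow\gamma_k)$. The A2 instance $(\alpha\rightarrow(\gamma_j\rightarrow\gamma_k))\rightarrow((\alpha\rightarrow\gamma_j)\rightarrow(\alpha\rightarrow\gamma_k))$ and two applications of MP give $\Gamma\vdash\alpha\rightarrow\gamma_k$.

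\textbf{Case 4: $\gamma_k = K_i\gamma_j$ is obtained by Nec.} By the restriction on Nec, $\gamma_j$ is a theorem. Hence $K_i\gamma_j$ is a theorem as well, so $\vdash K_i\gamma_j$. We then conclude exactly as in Case 1, using A1 and MP.

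The only delicate point is Case 4. The argument relies entirely on the convention that Nec applies only to theorems. If Nec were allowed on hypotheses, the theorem would fail: we would get $\alpha \vdash K_i\alpha$, yet $\alpha \rightarrow K_i\alpha$ is not valid. So I would state explicitly that $\vdash_{KLFI1}$ is read with this restriction.

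It is also worth noting that only A1 and A2 from the LFI1 axioms are used. Since $\rightarrow$ behaves classically in those positive axioms, the paraconsistent negation plays no role in the proof.
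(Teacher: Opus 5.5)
Your proof is correct. The paper gives no argument of its own and only cites \cite{bueno2025towards}; your strong induction on the length of the derivation, using only A1, A2 and MP, is the standard proof of this result. You also correctly identify the one point that matters: the Nec case is discharged only by the paper's stated convention that rules other than MP apply to theorems alone, and without that convention $\alpha \vdash K_i\alpha$ would refute the statement.
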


The same holds for other normal modal extensions of KLFI1.

\begin{corollary}[Compactness]\label{compactness} Consider $\Gamma \subseteq \mathcal{L}_K, \alpha \in \mathcal{L}_K$, and $\Gamma'$ such that $\Gamma' \subseteq \Gamma$ and $\Gamma'$ is finite. 

\[\Gamma \vDash_{KLFI1} \alpha \text{ implies } \Gamma' \vDash_{KLFI1} \alpha\]
    
\end{corollary}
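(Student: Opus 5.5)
The statement should be read with an existential quantifier on $\Gamma'$: if $\Gamma \vDash_{KLFI1} \alpha$, then there exists a finite $\Gamma' \subseteq \Gamma$ with $\Gamma' \vDash_{KLFI1} \alpha$. This matches the general definition of compactness given earlier. Taken literally, with $\Gamma'$ an arbitrary finite subset, the claim fails, e.g.\ for $\Gamma' = \emptyset$ and $\alpha \in \Gamma$ a non-valid formula. The plan is to obtain compactness as a corollary of soundness and strong completeness of KLFI1 with respect to its Kripke semantics. These are cited above from \cite[p.\ 227]{bueno2024many}. The argument then uses only the finitary nature of Hilbert derivations (Definition~\ref{derivability}).

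The steps are as follows. First, assume $\Gamma \vDash_{KLFI1} \alpha$. By strong completeness, $\Gamma \vdash_{KLFI1} \alpha$, so there is a derivation $\gamma_1, \dots, \gamma_n = \alpha$. Second, let $\Gamma' = \{\gamma_k : \gamma_k \in \Gamma \text{ and } \gamma_k \text{ is justified by clause 3}\}$. This set has at most $n$ elements, so it is finite, and $\Gamma' \subseteq \Gamma$. Third, check that the very same sequence is a derivation of $\alpha$ from $\Gamma'$. Every step is either an axiom, a hypothesis now lying in $\Gamma'$, or an application of a rule to earlier members of the sequence. Since clauses 1 and 2 do not refer to the hypothesis set, $\Gamma' \vdash_{KLFI1} \alpha$. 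Fourth, apply soundness to conclude $\Gamma' \vDash_{KLFI1} \alpha$.

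The main point needing care is that both soundness and strong completeness must be taken for one and the same consequence relation. This is the local relation: $\alpha$ is designated at every world of every model at which all of $\Gamma$ is designated. The derivability relation must correspondingly restrict Nec to theorems, as stipulated in the conventions after Definition~\ref{derivability}. Otherwise Nec applied to hypotheses would make derivability match global consequence, and soundness for the local relation would fail.

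I would therefore first confirm that the cited completeness result is stated for exactly this local notion. The Deduction Theorem for KLFI1 quoted above is evidence of this, since it holds only when Nec is restricted to theorems. Once that is settled, the corollary is immediate. The same argument transfers verbatim to the normal extensions of KLFI1 (KB4LFI1, S5LFI1), using their respective soundness and strong completeness results.
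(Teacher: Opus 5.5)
Your proposal is correct and follows essentially the same route as the paper: strong completeness gives $\Gamma \vdash_{KLFI1} \alpha$, the finitely many hypotheses used in the derivation form a finite $\Gamma' \subseteq \Gamma$, and soundness then yields $\Gamma' \vDash_{KLFI1} \alpha$. Your existential reading of $\Gamma'$ is also the one the paper's proof implicitly adopts (``there is a subset $\Gamma' \subseteq \Gamma$''), and your remarks on local consequence and on restricting Nec to theorems only make explicit the conventions the paper states after its definition of derivability.
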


\begin{proof}
    Compactness follows from strong completeness. Suppose that $\Gamma \vDash_{KLFI1}\alpha$. By strong completeness, $\Gamma \vdash_{KLFI1}\alpha$. By definition of KLFI1-derivation, there is a subset $\Gamma' \subseteq \Gamma$ such that $\Gamma'$ is finite and $\Gamma' \vdash_{KLFI1} \alpha$. By soundness, this implies that $\Gamma' \vDash \alpha$. 
\end{proof}

\begin{proposition}[Decidability {\normalfont\cite[p.~133]{Karniel2024ManyValuedML}}]
\label{decidabilitymodal}
The logic KLFI1 is decidable, and so are its extensions with axioms
K, D, T, 4, B, and 5.
\end{proposition}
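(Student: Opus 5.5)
The statement is decidability of KLFI1 and of its extensions with K, D, T, 4, B, 5. My plan is a finite model property argument that combines the finitely many truth values with a filtration construction, and then to decide satisfiability by bounded search.

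The first step is to fix a formula $\alpha \in \mathcal{L}_K$ and let $\Sigma$ be its subformula closure, closed under $\neg$ one level and under the definitional expansions of $\hat{K}_i$. Given any Kripke model $\langle W, \{\sim_i\}, \vartheta\rangle$, I define $w \approx v$ iff $\vartheta_w(\beta) = \vartheta_v(\beta)$ for all $\beta \in \Sigma$. Since values lie in $\{1,\half,0\}$, there are at most $3^{|\Sigma|}$ classes. On the quotient I use the smallest filtration: $[w] \sim_i^f [v]$ iff some $w' \approx w$ and $v' \approx v$ satisfy $w' \sim_i v'$. Propositional variables keep their values.

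The key lemma, proved by induction on $\beta \in \Sigma$, is that $\vartheta^f_{[w]}(\beta) = \vartheta_w(\beta)$. The truth-functional cases are immediate from the matrix. In the $K_i$ case, $\vartheta_w(K_i\gamma)=\inf\{\vartheta_{w'}(\gamma) : w\sim_i w'\}$ is determined by $\Sigma$-types of successors. One direction uses the fact that every $\sim_i$-successor of $w$ yields a filtered successor. For the other direction, a filtered successor $[v]$ comes from $w'\approx w$ with $w'\sim_i v'$; then $\vartheta_{w'}(K_i\gamma)=\vartheta_w(K_i\gamma)\le \vartheta_{v'}(\gamma)$ because the infimum is order-theoretic on the chain $0<\half<1$. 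This is exactly the classical argument with "$\le$" in place of "implies", so it goes through.

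For the extensions, I will use the standard frame-preserving filtrations: reflexivity and seriality are inherited by the smallest filtration, and symmetry is obtained by symmetrising. For 4 and 5 I will use the Lemmon-style transitive filtration, defined by $[w]\sim^f_i[v]$ iff $\vartheta_w(K_i\gamma)\le\vartheta_v(K_i\gamma)$ and $\vartheta_w(K_i\gamma)\le \vartheta_v(\gamma)$ for all $K_i\gamma\in\Sigma$ (with an analogous equality-based variant for 5 and for S5). These are the relational conditions that correspond to the axioms under the many-valued Kripke semantics, so soundness and completeness with respect to those frame classes, as proved in \cite{bueno2024many}, apply.

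Decidability then follows. A formula is satisfiable (designated somewhere) iff it is designated in some model of size at most $3^{|\Sigma|}$ in the appropriate frame class. There are finitely many such models up to isomorphism, each can be checked effectively, and the same search produces a countermodel when $\alpha$ is not valid.

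The main obstacle is the transitive/Euclidean filtrations for 4 and 5, where I must verify that the $\le$-based definition still satisfies the $K_i$ induction step and lands in the right frame class. For that step I would first check the equality-based S5 case directly (the quotient is an equivalence relation), and then handle 4 via the inequality version.
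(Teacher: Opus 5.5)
\textbf{Gap: axiom 5 without transitivity, and the combination B+4.} The paper gives no argument for this proposition; it imports it from \cite{Karniel2024ManyValuedML}. Your filtration proof is therefore a self-contained alternative, and most of it is correct.

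What works:
\begin{itemize}
\item Your $K_i$-step for the smallest filtration is right. The infimum over $\{0,\half,1\}$ is a minimum, and the empty infimum $1$ is covered by the same inequality.
\item The smallest filtration already preserves reflexivity, seriality and symmetry, so no symmetrising is needed for B.
\item Your inequality-based Lemmon relation is transitive, contains the smallest filtration on transitive models, and gives $\vartheta_w(K_i\gamma)\le\vartheta_v(\gamma)$ whenever $[w]\sim^f_i[v]$.
\end{itemize}
So D, T, B, 4, S4 and S5 are fine.

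The first gap is axiom 5 without transitivity. In a Euclidean model, $w\sim_i v$ only yields $\vartheta_v(K_i\gamma)\le\vartheta_w(K_i\gamma)$, not equality. So an equality-based relation need not contain the smallest filtration. Concretely, take $W=\{w,v,u\}$ and ${\sim_i}=\{(w,v),(v,v),(v,u),(u,v),(u,u)\}$, which is Euclidean but not transitive. Set $\vartheta_w(p)=\vartheta_u(p)=0$ and $\vartheta_v(p)=1$, and let $\Sigma$ be the subformulas of $K_ip\wedge K_i\neg p$. The three worlds have distinct $\Sigma$-types. The $(K_ip,K_i\neg p)$-profile of $[w]$ is $(1,0)$, while the other two classes have $(0,0)$. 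Any relation demanding equal $K_i$-values therefore relates $[w]$ at most to itself.
\begin{itemize}
\item If it does, the filtered value of $K_ip$ at $[w]$ is $0\neq 1$.
\item If it does not, the filtered value of $K_i\neg p$ at $[w]$ is $1\neq 0$.
\end{itemize}
Hence K5 and KD5, which the statement covers, are not handled. For K45 and KD45, equality does hold along $\sim_i$, and your variant works provided you keep the clause $\vartheta_w(K_i\gamma)\le\vartheta_v(\gamma)$.

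The second gap is the combination B+4, and the paper actually relies on KB4LFI1. Symmetrising the Lemmon relation breaks the filtration condition. Take $W=\{v,w\}$ with ${\sim_i}=\{(v,v)\}$; this is a KB4 model in which $w$ has no $\sim_i$-successor. Set $\vartheta_v(p)=0$, $\vartheta_w(p)=1$ and $\Sigma=\{p,K_ip\}$. The Lemmon relation contains $([v],[w])$, so its symmetrisation gives $\vartheta^f_{[w]}(K_ip)=0\neq 1=\vartheta_w(K_ip)$.

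For B with 4, use instead the transitive closure of the smallest filtration. It is symmetric. Along any chain of filtered links the $K_i$-values can only increase, because in a transitive model $w'\sim_i v'$ gives $\vartheta_{w'}(K_i\gamma)\le\vartheta_{v'}(K_i\gamma)$; this yields the filtration condition.

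For the Euclidean classes, the simplest repair is to bypass filtration altogether. Split each atom $p$ into classical atoms $d_p,t_p$, and translate $\alpha$ into classical formulas $\alpha^{+}$ (``designated'') and $\alpha^{1}$ (``value $1$'') as follows:
\begin{itemize}
\item $p^{+}=d_p$ and $p^{1}=d_p\wedge t_p$;
\item $(\neg\alpha)^{+}=\neg\alpha^{1}$ and $(\neg\alpha)^{1}=\neg\alpha^{+}$;
\item $(\bullet\alpha)^{+}=(\bullet\alpha)^{1}=\alpha^{+}\wedge\neg\alpha^{1}$;
\item $\wedge$ componentwise, and $(K_i\alpha)^{c}=K_i(\alpha^{c})$.
\end{itemize}
Because accessibility is crisp, $\alpha$ is designated at a world of a three-valued model iff $\alpha^{+}$ holds at that world of the corresponding classical model on the same frame. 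Each extension therefore inherits decidability from the classical modal logic of its frame class. Via the completeness results of \cite{bueno2024many} that you already invoke, this transfers to the axiomatic systems.
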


\weg{Using \cite{bueno2024many}, we extend KLFI1 straightforwardly with any of these axioms using Lemmon-Scott axioms \cite[p.\ 215]{carnielli2008modalities}. These axioms form the basilar systems and are of the form 

\begin{center}
    G$^{a,b,c,d}$ := $\hat{K}_a K_b\alpha \rightarrow K_c\hat{K}_d \alpha$ 
\end{center}
The axioms $(\mathrm{T})$, $(\mathrm{D})$, $(\mathrm{B})$, $(\mathrm{4})$ and $(\mathrm{5})$ can all be characterized as instances of the Lemmon-Scott schema, and also their non-strict interaction counterparts such as axiom $(\mathrm{D}^{m}_{a,b})\ K_i\alpha \rightarrow \hat{K}_b\alpha$ \cite[p.\ 216]{carnielli2008modalities}. Basilar systems are normal modal systems extended with axioms that characterize their multimodal parameters. Let $a, b$ be multimodal parameters, and let $1, 0$ be two distinguished modal parameters.
\[
\begin{array}{l@{\qquad}l}
\begin{array}{l l}
MM1 & K_{a\cup b}\alpha \equiv K_i\alpha \land K_b\alpha\\
MM2 & K_0\alpha \equiv T
\end{array}
&
\begin{array}{l l}
MM3 & K_{a\odot b}\alpha \equiv K_iK_b\alpha\\
MM4 & K_1\alpha \equiv \alpha
\end{array}
\end{array}
\]}












\weg{From the completeness of basilar systems, it is possible to prove completeness for the modal systems without axioms (MM1)-(MM4). Here, we only give an outline of this proof. $\mathcal{L}$ is the language of modal logic without compositional parameters. Consider a normal system $P$ and its basilar extension $P^G$. According to the Lemon-Scott technique, $P^G$ is sound and complete with respect to its corresponding class of frames. Suppose $\vdash_P \alpha$. Since $P$ is a fragment of $P^G$, then $\vdash_{P^G}\alpha$. By completeness of $P^G$, $\vDash_{P^G}\alpha$. Given that $\alpha \in \mathcal{L}$, then $\alpha$ is composed only of atomic modal parameters. In that case, the semantics to evaluate $\alpha$ are the same in $P$ and $P^G$. Therefore, $\vDash_P \alpha$.}

In our contribution the following epistemic modal extensions of LFI1 play a role.

\begin{definition} The logic S5LFI1 is the extension of KLFI1 with the axioms $\mathrm{T}$, $\mathrm{4}$, and $\mathrm{5}$. The logic KB4LFI1 is the extension of KLFI1 with the axioms $\mathrm{B}$ and $4$.
\end{definition}

 The logic KLFI1 is interpreted on Kripke models with arbitrary accessibility relations, S5LFI1 on models with equivalence relations, and KB4LFI1 on models with partial equivalence relations (symmetric and transitive). 
 The logic KB4 has been considered for multi-agent systems in distributed computing with agent that may be faulty, either malicious in the sense that they can lie, or merely byzantine \cite{hvdetal.aiml:2022}. Decidability and compactness also hold for S5LFI1.

Paraconsistent modal logics can add contradictory states to Kripke models, thereby affecting knowledge attributions. In informational contexts where agents may receive conflicting information from outside sources, these logics are useful to model the agents' informational perspective. 

\subsection{Example}
We finish the section with an example involving knowledge and paraconsistency.

\begin{example}
    Anne wants to know the color of her eyes. To determine that, she decides to ask each friend that question. Her eyes are light, but she is not sure whether they are blue. Even though she is sure whether some of her friends will tell her that the color of her eyes are blue, she is not whether all of them will agree on that. To simplify this setting, consider $p_b$: ``Anne's eyes are blue'', and consider the epistemic model $\mathcal{B}$ depicted below. Given
paraconsistency, a state is depicted by $p_b$, in case $p_b$ is true in this state, or $\dot{p}_b$, in case $p_b$ is contradictory in this state. An arrow $\stackrel a \imp$ between states represents a pair in the accessibility relation $\sim_i$.
    In our current case, Anne's uncertainty is between $p_b$ and $\dot{p}_b$, that is, whether there will be agreement or conflict of information between her friends' answers. 
\end{example}

\begin{tikzpicture}
\node (01) at (1.7,0) {$\dot{p}_b$};
\node (1) at (3.4,0) {$p_b$};
\draw[<->] (01) to node[above] {$a$} (1);
\draw[->] (01) edge[loop above,looseness=9] node[above] {$a$} (01); 
\draw[->] (1) edge[loop above,looseness=9] node[above] {$a$} (1); 
\end{tikzpicture}

In the example, $\mathcal{B} \vDash K_a p_b$, however, $\mathcal{B} \nvDash K_a(\circ p_b \lor \bullet p_b)$. That is, Anne knows she has positive information about $p_b$, however, she is not sure whether she has negative information. If she is in the state $\dot{p}_b$, then she does have conflicting information about $p_b$ (although she does not know this): $\vartheta_{\dot{p}_b}(p_b \land \neg p_b) = \half$.

\section{Action model paraconsistent logic AMLFI1}\label{section3}

We now introduce the logic AMLFI1, a dynamic extension of S5LFI1. We will later see that it (or rather, a similar logic by a different name) equally extends KB4LFI1 and KLFI1.

\subsection{Syntax and semantics}

\weg{\begin{definition}
Let $A$ be a set of agents such that $i \in N$, and Let $P$ be a set of atomic variables such that $p \in P$. The language $\mathcal{L}_{AM}$ is the language of action model logic defined by
\[
\begin{aligned}
\alpha &:= p \mid \neg\alpha \mid \bullet\alpha \mid (\alpha \land \alpha)
     \mid K_i \alpha \mid [U,e]\alpha 
\end{aligned}
\]
\end{definition}
}
Consider the language $\mathcal{L}_{AM}$ defined in Definition~\ref{language}. For $[e]\alpha$ we write $[U,e]\alpha$ where $(U,e)$ is a pointed action model simultaneously defined below.

\begin{definition}
   The logic AMLFI1 is characterized over the language $\mathcal{L}_{AM}$ by the axioms and rules of S5LFI1 plus the following axioms and rules:
\[
\begin{array}{l@{\qquad}l}
\begin{array}{l l}
AM1 & [U,e]p \equiv (\pre(e) \rightarrow p)\\
AM2 & [U,e]\neg\alpha \equiv (\pre(e) \rightarrow \neg[U,e]\alpha)\\
AM3 & [U,e]\bullet\alpha \equiv (\pre(e) \rightarrow \bullet[U,e]\alpha)
\end{array}
&
\begin{array}{l l}
AM4 & [U,e](\alpha \land \beta)
      \equiv ([U,e]\alpha \land [U,e]\beta)\\
AM5 & [U,e]K_i\alpha
      \equiv (\pre(e) \rightarrow \bigwedge_{e \sim_i f} K_i[U,f]\alpha)\\
RE  & \theta \equiv \gamma \Rightarrow\ \alpha \equiv \alpha(\theta/\gamma)
\end{array}
\end{array}
\]
\end{definition}
We recall that $\equiv$ is strong equivalence (Definition \ref{definability}). Equivalence $\eq$ is insufficient to get a complete axiomatization.

\begin{definition}[Action Model]
An S5LFI1 {\em action model} $U$ is a structure $\langle \mathit{E}, \{\sim_i\}_{i\in N}, \mathrm{pre}\rangle$, such that $E$ is a set of \emph{actions}, $\sim_i$ is an equivalence relation for an agent $i$, and {\em precondition function} $\mathrm{pre}:\mathit{E} \mapsto \mathcal{L}_{AM}$ assigns a formula to each action. The formula $\mathrm{pre}(e)$ is the \emph{precondition} for the execution of the action $e$.
\end{definition}
Given an action model $U = \langle E, \{\sim_i\}_{i \in N}, \mathrm{pre}\rangle$, we abbreviate $\displaystyle\bigwedge_{e\in E} [U,e]\alpha$ as $[U]\alpha$.

Public announcements are action models with a domain of cardinality $1$, where the precondition of the action is the publicly announced formula, accessible by all agents. Given such a pointed action model $(U,e)$, with $\mathrm{pre}(e)=\beta$, we write $[\beta]\alpha$ for $[U,e]\alpha$
\cite{van2008dynamic}.

We can now explain the countably infinite index set $I$ of which the elements $e \in I$ label the dynamic modalities $[e]\alpha$, that is, $[U,e]\alpha$. It is an enumeration of all finite pointed action model frames, and wherein all actions (points) get different names. As the set of agents is finite and as the domains of the action models are finite, such an enumeration exists.  However, we should then not see $[U,e]\alpha$ as featuring in the BNF language definition as a modality with a single argument but let the preconditions of actions count as additional formula arguments. For example, a public announcement formula $[\alpha]\beta$ can be seen as a modality with two arguments $\alpha$ and $\beta$ (in fact, this is how Plaza originally formalized the public announcement \cite{plaza:1989}). We should therefore rather have written $[e]\vec{\alpha}$ than $[e]\alpha$ in Definition~\ref{language}, where $\vec{\alpha}$ is a $(n+1)$-tuple of formulas $\alpha_1,\dots,\alpha_n,\alpha$ given that the domain of $(U,e)$ has cardinality $n$. (The enumeration argument extends to the action models in the next section that also contain postconditions --- for a finite number of atoms. We just make it an even longer enumeration. The Hilbert Hotel can still accommodate us.)

We should already note that the logic AMLFI1 is not a normal modal logic, as common for dynamic epistemic logics. For a counterexample, $[p]p$ is a theorem but $[p \et \neg K_i p](p \et \neg K_i p)$ is not a theorem. The shape of the above axioms makes clear that uniform substitution cannot hold: some axioms feature atoms, whereas other axioms feature arbitrary formulas; we therefore cannot simply replace $p$ by $\neg \alpha$ or by $\alpha\et\beta$ in some theorem that features the atom after an announcement.

\begin{definition}[Semantics for AMLFI1]
     Given a Kripke model $M = \langle S, \{\sim_i\}_{i \in N}, \vartheta\rangle$, and action model $U = \langle\mathrm{E}, \{\sim_i\}_{i \in N}, \mathrm{pre}\rangle$, {\em update} $(M \otimes U, (s, e)) =  \langle S', \sim', \vartheta'\rangle$ (a restricted modal product) is defined as:
\[\begin{array}{l}
     S' = \{(s, e): s \in S, e \in E, \text{ and } \vartheta_s(\mathrm{pre}(e)) \in D\}  \\
          (s, e) \sim_i' (t, f) \text{ iff } s \sim_i t \text{ and } e \sim_i f \\
              \vartheta'_{(s,e)}(p) = \vartheta_s(p)
\end{array}\]
We then extend the semantic definition of an S5LFI1-valuation with the additional condition:
\[
\begin{array}{l}
  \vartheta_s([U,e]\alpha) = \begin{cases}
\vartheta'_{(s,e)}(\alpha) &\text{if } \vartheta_s(\mathrm{pre}(e)) \in D  \\
1 &\text{otherwise}
\end{cases}
\end{array}\]
\end{definition}

A valuation that satisfies these conditions is called an AMLFI1-valuation. The logical consequence $\Gamma \vDash_{AMLFI1} \alpha$ holds iff for all AMLFI1 models $\mathcal{M}$ and all AMLFI1-valuations $\vartheta_w$ such that $w \in W$ and $\vartheta_w \in \vartheta$, $\vartheta_w[\Gamma] \subseteq \mathrm{D}$ implies that $\vartheta_w(\alpha) \in \mathrm{D}$.

\subsection{Soundness and completeness of AMLFI1}

We first show soundness.

\begin{theorem}[Soundness] \label{soundness} If $\Gamma \vdash \alpha$, then $\Gamma \vDash \alpha$.
\end{theorem}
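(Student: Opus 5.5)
Soundness is proved by induction on the length of a derivation of $\alpha$ from $\Gamma$. For each step we show that axioms are valid on all S5LFI1 models and all AMLFI1-valuations, that MP preserves designation at a state, and that the remaining rules (Nec and RE) preserve validity. The rules are only applied to theorems, so for Nec and RE it suffices to show validity-preservation. Hypotheses from $\Gamma$ are trivially designated at any state where $\Gamma$ is designated.

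For the S5LFI1 fragment, I would appeal to soundness of S5LFI1 from \cite{bueno2024many}. One point must be checked: the static axioms are now instantiated with formulas of $\mathcal{L}_{AM}$ containing dynamic modalities. Their validity arguments are schematic, since they only use the truth tables, the $\inf$-clause for $K_i$, and that relations are equivalences. So they go through unchanged once every formula, including $[U,e]\beta$, receives some value in $\{1,\half,0\}$ at each state. MP preserves designation because $\vartheta(\alpha)\in D$ and $\vartheta(\alpha\imp\beta)\in D$ force $\vartheta(\beta)\in D$, as a glance at the $\imp$ table shows. For RE, I use the semantic fact that $\vartheta_s(\theta\equiv\gamma)\in D$ iff $\vartheta_s(\theta)=\vartheta_s(\gamma)$. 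If $\theta\equiv\gamma$ is valid on all models, then $\theta$ and $\gamma$ have equal values at every state of every model, in particular in all updated models $M\otimes U$. A subformula induction on $\alpha$ then gives $\vartheta_s(\alpha)=\vartheta_s(\alpha(\theta/\gamma))$. The $[U,e]$ case uses the fact that the update depends on $\mathrm{pre}$ only through designation, and replacement inside preconditions preserves their values.

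The core is the reduction axioms AM1 to AM5, each shown by proving that both sides take \emph{equal} values at every state $s$. We split on whether $\vartheta_s(\mathrm{pre}(e))\in D$.
\begin{itemize}
\item If $\vartheta_s(\mathrm{pre}(e))\notin D$, the precondition has value $0$, so its strong negation is $1$ and the implication $\mathrm{pre}(e)\imp X$ has value $1$, matching $\vartheta_s([U,e]\cdot)=1$. For AM4, both conjuncts are $1$.
\item If $\vartheta_s(\mathrm{pre}(e))\in D$, then $\vartheta_s(\mathrm{pre}(e)\imp X)=\vartheta_s(X)$ by the $\imp$ table (rows $1$ and $\half$).
\end{itemize}
In the designated case, AM1 reduces to $\vartheta'_{(s,e)}(p)=\vartheta_s(p)$. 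AM2 and AM3 reduce to $\vartheta'_{(s,e)}(\neg\alpha)=f_\neg(\vartheta'_{(s,e)}(\alpha))=f_\neg(\vartheta_s([U,e]\alpha))$, and likewise for $\bullet$. AM4 is the truth-functionality of $\et$.

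I expect AM5 to be the main obstacle, since its right-hand side mixes the $\inf$ over $\sim_i$-successors with non-executable actions. In the designated case,
\[
\vartheta'_{(s,e)}(K_i\alpha)=\inf\{\vartheta'_{(t,f)}(\alpha): s\sim_i t,\ e\sim_i f,\ \vartheta_t(\mathrm{pre}(f))\in D\}.
\]
On the right-hand side, $\inf_{f:e\sim_i f}\inf_{t:s\sim_i t}\vartheta_t([U,f]\alpha)$ ranges over all pairs. Pairs with undesignated precondition contribute value $1$, which does not change an infimum over $\{0,\half,1\}$; this covers the empty case too, where both sides are $1$. The remaining pairs contribute exactly $\vartheta'_{(t,f)}(\alpha)$. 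So both sides coincide, and since the $\equiv$-formula is then designated, every axiom is valid.

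Finally, we check that $M\otimes U$ is again an S5LFI1 model, since products of equivalence relations restricted to $S'$ are equivalences. This ensures the induction over nested updates stays within the intended model class.
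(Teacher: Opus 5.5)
Your proposal is correct and follows essentially the same route as the paper: both sides of each reduction axiom are shown to take equal values by splitting on whether $\mathrm{pre}(e)$ is designated, and RE is verified by induction on the formula. Your version is more careful at several points: the uniform infimum argument for AM5 (including the empty case), quantifying the RE induction over all models together with closure of S5 models under $\otimes$, and handling replacement inside preconditions. These refine the paper's argument rather than replace it.
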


\begin{proof}
{\bf AM1} $\bm{[U,e]p \equiv (\mathrm{{\bf pre}}(e) \rightarrow p)}$. Suppose that $\vartheta_s(pre(e)) = 0$. Then $\vartheta_s([U,e]p) = 1$. On the other hand, $\vartheta_s(\mathrm{pre}(e) \rightarrow p) = 1$. Therefore, $\vartheta_s([U,e]p \equiv (\mathrm{pre}(e) \rightarrow p)) \in \mathrm{D}$. 

Suppose that $\vartheta_s(\mathrm{pre}(e)) \in \mathrm{D}$. Given the conditions of the restricted modal product, $\vartheta_{(s,e)}(p) = \vartheta_s(p)$. Therefore, $\vartheta_s([U,e]p) = \vartheta_s(p)$. On the other hand, given that $\vartheta_s(\mathrm{pre}(e)) \in \mathrm{D}$, $\vartheta_s(\mathrm{pre}(e) \rightarrow p) = \vartheta_s(p)$. Therefore, $\vartheta_s([U,e]p) = \vartheta_s(\mathrm{pre}(e) \rightarrow p)$. Thus, $\vartheta_s([U,e]p \equiv (\mathrm{pre}(e) \rightarrow p)) \in D$. \\
{\bf AM2} $\bm{[U,e]\neg\alpha \equiv (\mathrm{{\bf pre}}(e) \rightarrow \neg[U,e]\alpha)}$. Suppose that $\vartheta_s(\mathrm{pre}(e)) = 0$. In that case, $\vartheta_s([U,e]\neg\alpha) = 1$ and $\vartheta_s(\mathrm{pre}(e) \rightarrow \neg[U,e]\alpha) = 1$. Therefore, $\vartheta_s([U,e]\neg\alpha \equiv (\mathrm{pre}(e) \rightarrow \neg[U,e]\alpha)) \in \mathrm{D}$. 

    Suppose that $\vartheta_s(\mathrm{pre}(e)) \in \mathrm{D}$. 

        \noindent 1. $\vartheta_s([U,e]\neg\alpha) = 1$. In that case, $\vartheta_{(s,e)}(\neg\alpha) = 1$. Therefore, $\vartheta_{(s,e)}(\alpha) = 0$. Therefore, $\vartheta_s([U,e]\alpha) = 0$, which implies that $\vartheta_s(\neg[U,e]\alpha) = 1$. Thus, $\vartheta_s(\mathrm{pre}(e) \rightarrow \neg[U,e]\alpha) = 1$. Therefore, the equivalence holds. \\
2. $\vartheta_s([U,e]\neg\alpha) = \half$. In that case, $\vartheta_{(s,e)}(\neg\alpha) = \half$, which implies that $\vartheta_{(s,e)}(\alpha) = \half$. Thus, $\vartheta_{s}([U,e]\alpha) = \half$, which implies that $\vartheta_{s}(\neg[U,e]\alpha) = \half$, which implies that $\vartheta_s(\mathrm{pre}(e) \rightarrow \neg[U,e]\alpha)=\half$. Therefore, the equivalence holds. \\
        3. $\vartheta_s([U,e]\neg\alpha) = 0$. In that case, $\vartheta_{(s,e)}(\neg\alpha) = 0$, thus, $\vartheta_{(s,e)}(\alpha) = 1$. Therefore, $\vartheta_s([U,e]\alpha) = 1$, which implies that $\vartheta_s(\neg[U,e]\alpha) = 0$, therefore, $\vartheta_s(\mathrm{pre}(e) \rightarrow \neg[U,e]\alpha) = 0$. Thus, the equivalence holds. 
    



    
\noindent \textbf{AM3} $\bm{[U,e]\bullet\alpha \equiv (\mathrm{{\bf pre}}(e) \rightarrow \bullet[U,e]\alpha)}$. If $\vartheta_s(\pre(e)) = 0$, the reasoning is similar to the previous case. 

Suppose that $\vartheta_s(\pre(e)) \in \mathrm{D}$. In that case, \\
 1. $\vartheta_s$($[U,e]\bullet$$\alpha$) = 1. In that case, $\vartheta_{(s,e)}$($\bullet\alpha$) = 1, therefore, $\vartheta_{(s,e)}$($\alpha$) = $\half$. Therefore, $\vartheta_s$($[U,e]\alpha$) = $\half$, which implies that $\vartheta_s$($\bullet[U,e]\alpha$) = 1. Therefore, $\vartheta_s$(pre($e$) $\rightarrow$ $\bullet[U,e]\alpha$) = 1. Therefore, the equivalence holds.\\
2. $\vartheta_s$($[U,e]\bullet$$\alpha$) = 0. In that case, $\vartheta_{(s,e)}$($\bullet\alpha$) = 0, which implies that $\vartheta_{(s,e)}$($\alpha$) = 1 or 0 (we don't need to consider $\vartheta_s([U,e]\bullet\alpha) = \half$, because $\vartheta_s(\bullet\alpha) = 0$ or $1$).\\
        2.1 $\vartheta_{(s,e)}$($\alpha$) = 1. In that case, $\vartheta_s$($[U,e]\alpha$) = 1, therefore, $\vartheta_s$($\bullet[U,e]\alpha$) = 0. Therefore, the equivalence holds.\\
        2.2 $\vartheta_{(s,e)}$($\alpha$) = 0. Reasoning is analogous to the previous case.


\noindent \textbf{AM4} $\bm{[U,e](\alpha \land \beta) \equiv ([U,e]\alpha \land [U,e]\beta)}$. Suppose that $\vartheta_s$(pre($e$)) $\notin$ D. In that case, $\vartheta_s$($[U,e]$($\alpha$ $\land$ $\beta$)) = 1. On the other hand, $\vartheta_s$($[U,e]\alpha$) = 1 and $\vartheta_s$($[U,e]\beta$) = 1, therefore, $\vartheta_s$($[U,e]\alpha$ $\land$ $[U,e]\beta$) = 1. Thus, the equivalence holds. 

Suppose, on the other hand, that $\vartheta_s$(pre($e$)) $\in$ D. \\
    1. $\vartheta_s$($[U,e]$($\alpha$ $\land$ $\beta$)) = 1. In that case, $\vartheta_{(s,e)}$($\alpha$ $\land$ $\beta$) = 1. Thus, $\vartheta_{(s,e)}$($\alpha$) = 1 and $\vartheta_{(s,e)}$($\beta$) = 1. Therefore, $\vartheta_s$($[U,e]\alpha$) = 1 and $\vartheta_s$($[U,e]\beta$) = 1. Therefore, $\vartheta_s$($[U,e]\alpha$ $\land$ $[U,e]\beta$) = 1. Thus, the equivalence holds. 
    \\2. $\vartheta_s$($[U,e]$($\alpha$ $\land$ $\beta$)) = $\half$. In that case, $\vartheta_{(s,e)}$($\alpha$) = $\half$ or $\vartheta_{(s,e)}$($\beta$) = $\half$ and $\vartheta_{(s,e)}$($\alpha$), $\vartheta_{(s,e)}(\beta)$ $\geq$ $\half$. Without loss of generality, suppose that $\vartheta_{(s,e)}$($\alpha$) = $\half$. In that case, $\vartheta_s$($[U,e]\alpha$) = $\half$, and $\vartheta_s$($[U,e]\beta$) $\geq$ $\half$, which implies that $\vartheta_s$($[U,e]\alpha$ $\land$ $[U,e]\beta$) = $\half$. Therefore, the equivalence holds.
    \\3. $\vartheta_s$($[U,e]$($\alpha$ $\land$ $\beta$)) = 0. In that case, $\vartheta_{(s,e)}$($\alpha$ $\land$ $\beta$) = 0. Thus, $\vartheta_{(s,e)}$($\alpha$) = 0 or $\vartheta_{(s,e)}$($\beta$) = 0. Without loss of generality, suppose that $\vartheta_{(s,e)}$($\alpha$) = 0. In that case, $\vartheta_s$($[U,e]\alpha$) = 0, which implies that $\vartheta_s$($[U,e]\alpha$ $\land$ $[U,e]\beta$) = 0. Thus, the equivalence holds.

    \noindent \textbf{AM5} $\bm{[U,e]K_i\alpha \equiv (\mathrm{\bf pre}(e) \rightarrow \bigwedge_{e\sim_i t}K_i[M,t]\alpha)}$. If $\vartheta_s$(pre($e$)) $\notin$ D, the reasoning is similar to the previous cases. 

  Otherwise, suppose that $\vartheta_s$(pre($e$)) $\in$ D.\\
      $\vartheta_s$($[U,e]$$K_i\alpha$) = 1. In that case, $\vartheta_{(s,e)}$($K_i\alpha$) = 1, which means that for all $t$ such that $e \sim_i t$ (given that $s \sim_i s$), $\vartheta_{(s,t)}(\alpha) = 1.$ Thus, $\vartheta_s([U,t]\alpha) = 1$. Suppose $\vartheta_k([U,t]\alpha) \neq 1$, for some $k \sim_i s$. Then, $\vartheta_{(k, t)}(\alpha) \neq 1$. However, given that $k \sim_i t$ and $e \sim_i t$, $\vartheta_{(k,t)}(\alpha) = 1$ (contradiction). Therefore, $\vartheta_s$($\bigwedge_{e \sim_i t}$$K_i[U,t]\alpha$) = 1. 
\\2. $\vartheta_s$($[U,e]$$K_i\alpha$) = $\half$. Similar to the previous case.
\\3. $\vartheta_s$($[U,e]$$K_i\alpha$) = 0. Similar to the previous case.

  \item[] \textbf{RE}
    The proof works via induction on the structure of $\alpha$. Assume $\vDash \theta \equiv \gamma$ throughout the proof.\\
    \noindent\textbf{Base case.} $\alpha$ = $p$. In that case, either $\theta$ = $p$ or $\theta$ $\neq$ $p$. \\
    (i) $\theta$ $\neq$ $p$. If $\theta$ $\neq$ $p$, then $p$$(\theta/\gamma)$ = $p$, therefore, $\vDash$ $p$ $\equiv$ $p$$(\theta/\gamma)$, given that $\equiv$ is reflexive.\\
    (ii) $\theta$ = $p$. In that case, $\vDash$ p $\equiv$ $\gamma$, by hypothesis. Furthermore, $p(\theta/\gamma) = \gamma$. 
    Therefore, $\vDash p \equiv p(\theta/\gamma)$.
    \\\noindent\textbf{Inductive Hypothesis}. For all formulas $\beta$ s.t. $\beta$ is a subformula of $\alpha$, RE holds. 
\\\noindent\textbf{Inductive Case.} Let us consider this proof by cases. 
\\ $\bm{(\alpha = \neg\beta)}$ By I.H., $\vDash$ $\beta$ $\equiv$ $\beta(\theta/\gamma)$. Given that $\equiv$ is a congruence relation, then $\vDash$ $\neg\beta$ $\equiv$ $\neg(\beta(\theta/\gamma))$. Given that $\neg\beta(\theta/\gamma) = \neg(\beta(\theta/\gamma))$, the congruence holds.
\\$\bm{(\alpha=\bullet\beta)}$ Similar to previous case.
\\$\bm{(\alpha=K_i\beta)}$ Similar to the case of negation.
\\$\bm{(\alpha=\beta \land \delta)}$ By I.H., $\vDash$ $\beta$ $\equiv$ $\beta(\theta/\gamma)$ and $\vDash$ $\delta$ $\equiv$ $\delta(\theta/\gamma)$. Therefore, $\vDash$ $\beta$ $\land$ $\delta$ $\equiv$ $\beta(\theta/\gamma)$ $\land$ $\delta(\theta/\gamma)$. Given that $\beta(\theta/\gamma)$ $\land$ $\delta(\theta/\gamma)$ = $(\beta \land \delta)$($\theta/\gamma$), then RE holds for this case.
\\$\bm{(\alpha = [U,e]\delta)}$ By I.H., $\vDash$ $\delta$ $\equiv$ $\delta(\theta /
\gamma)$. Given that $[U,e]$ is an action model (not a formula), the step to $\vDash$ $[U,e]\delta$ $\equiv$ $([U,e]\delta)(\theta /
\gamma)$ is straightforward, because the replacement of formulas concerns only what is in $\delta$.



\end{proof}

(Weak) Completeness is shown by demonstrating that each formula with action model modalities is provably equivalent to a formula without. This requires defining a translation from the full language to its fragment without action model modalities, and a complexity measure (refining the usual subformula complexity, that is insufficient for such purposes) in order to show termination of such rewriting. After showing weak completeness, using compactness, soundness and the deduction theorem, we then show strong completeness from weak completeness. 

\begin{definition}[Translation] \label{trans} The translation $t: \mathcal{L}_{AM}$ $\mapsto$ $\mathcal{L}_K$ is defined as follows:
\[
\begin{array}{l@{\qquad}l}
\begin{array}{l l}
t(p) &= p\\
t(\neg\alpha) &= \neg t(\alpha)\\
t(\bullet\alpha) &= \bullet t(\alpha)\\
t(\alpha \land \beta) &= t(\alpha) \land t(\beta)
\end{array}
&
\begin{array}{l l}
t(K_i\alpha) &= K_i t(\alpha)\\
t([U,e]p) &= t(\mathrm{pre}(e) \rightarrow p)\\
t([U,e]\neg\alpha) &= t(\mathrm{pre}(e) \rightarrow \neg[U,e]\alpha)\\
t([U,e]K_i\alpha) &= t(\mathrm{pre}(e) \rightarrow \bigwedge_{e \sim_i f} K_i[U,f]\alpha)
\end{array}
\end{array}
\]
\end{definition}
\begin{definition}[Complexity]\label{comp}
The complexity c: $\mathcal{L}_{AM}$ $\mapsto$ $\mathds{N}$ is defined as (where $|E|$ is the cardinality of $E = \mathcal D(U)$):
\[
\begin{array}{l@{\qquad}l}
\begin{array}{l l}
c(p) &= 1\\
c(\neg\alpha) &= 1 + c(\alpha)\\
c(\bullet\alpha) &= 1 + c(\alpha)\\
c(\alpha \land \beta)
  &= \max\{c(\alpha), c(\beta)\} + 1
\end{array}
&
\begin{array}{l l}
c(K_i\alpha) &= 1 + c(\alpha)\\
c([U,e]\alpha)
  &= (6 + c([U,e]))\,c(\alpha)\\
c([U,e])
  &= \max\{c(pre(f)) : f \in U\} + |E|
\end{array}
\end{array}
\]
\end{definition}
From these definitions, it is straightforward to check that $c(\alpha \rightarrow \beta) = \max\{c(\alpha)+2, c(\beta)\} + 3$, given that $\alpha \rightarrow \beta$ is definable as $\sim$$\alpha \lor \beta$, and $\alpha \lor \beta$ is definable as $\neg(\neg\alpha \land \neg\beta)$.  
\begin{lemma}\label{complemma}
For all $\alpha, \beta \in \mathcal{L}_{AM}$, propositional variables $p$, and action models [U,e],
\[
\begin{array}{l@{\qquad}l}
\begin{array}{l l}
1. & c(\beta) > c(\alpha), \text{ if } \alpha \in Sub(\beta)\\
2. & c([U,e]p) > c(\mathrm{pre}(e) \rightarrow p)\\
3. & c([U,e]\neg\alpha) > c(\mathrm{pre}(e) \rightarrow \neg[U,e]\alpha)
\end{array}
&
\begin{array}{l l}
4. & c([U,e]\bullet\alpha) > c(\mathrm{pre}(e) \rightarrow \bullet[U,e]\alpha)\\
5. & c([U,e](\alpha \land \beta)) > c([U,e]\alpha \land [U,e]\beta)\\
6. & c([U,e]K_i\alpha) > c(\mathrm{pre}(e) \rightarrow \bigwedge_{e \sim_i f} K_i[U,f]\alpha)
\end{array}
\end{array}
\]
\end{lemma}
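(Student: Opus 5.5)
Every item is a direct arithmetic comparison from Definition~\ref{comp}, using the derived formula $c(\alpha\rightarrow\beta)=\max\{c(\alpha)+2,c(\beta)\}+3$. Write $k:=c([U,e])=\max\{c(\mathrm{pre}(f)) : f\in E\}+|E|$. Two facts will be used repeatedly.

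First, $k\geq c(\mathrm{pre}(e))+1$ and $k\geq |E|+1$, since $E$ is nonempty and every complexity is at least $1$.

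Second, $k$ depends only on $U$, not on the designated action. Hence $c([U,f]\alpha)=(6+k)\,c(\alpha)$ for every $f\in E$.

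For item 1 I use induction on $\beta$. The clauses for $\neg,\bullet,K_i,\land$ add at least $1$, so they are immediate. For $\beta=[U,e]\delta$, the subformula $\delta$ satisfies $(6+k)c(\delta)>c(\delta)$. If the preconditions are also counted as subformulas, then $(6+k)c(\delta)\geq 6+k>c(\mathrm{pre}(f))$ for every $f$, which settles those as well.

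Items 2--5 are short computations.

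For item 2: $c([U,e]p)=6+k\geq c(\mathrm{pre}(e))+7$, whereas $c(\mathrm{pre}(e)\rightarrow p)=\max\{c(\mathrm{pre}(e))+2,1\}+3=c(\mathrm{pre}(e))+5$.

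For item 3, the left side is $(6+k)(1+c(\alpha))=(6+k)c(\alpha)+6+k$. The right side is $\max\{c(\mathrm{pre}(e))+2,\;1+(6+k)c(\alpha)\}+3$. Both arguments of the max, plus $3$, are strictly below the left side: $(6+k)c(\alpha)+4<(6+k)c(\alpha)+6+k$ trivially, and $c(\mathrm{pre}(e))+5<6+k$ by the first fact.

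Item 4 is literally the same computation, since $\bullet$ contributes exactly as $\neg$ does.

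For item 5, the left side is $(6+k)(\max\{c(\alpha),c(\beta)\}+1)$. This exceeds $\max\{(6+k)c(\alpha),(6+k)c(\beta)\}+1$ by $5+k>0$.

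The only step needing care is item 6, because the big conjunction $\bigwedge_{e\sim_i f}$ has a variable number of conjuncts and $c$ of $\land$ grows with nesting depth. I will fix the convention that $\bigwedge$ abbreviates a (say left-nested) iterated binary conjunction of at most $|E|$ conjuncts. A small induction on the number of conjuncts then shows that a conjunction of $m$ formulas, each of complexity at most $x$, has complexity at most $x+m-1$; a balanced bracketing would only improve this.

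Each conjunct $K_i[U,f]\alpha$ has complexity $1+(6+k)c(\alpha)$ by the second fact, so the conjunction has complexity at most $(6+k)c(\alpha)+|E|$. Hence the right side of item 6 is at most $\max\{c(\mathrm{pre}(e))+2,\;(6+k)c(\alpha)+|E|\}+3$.

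The left side is $(6+k)c(\alpha)+6+k$. Since $k\geq|E|+1$, we have $6+k>|E|+3$, which beats the second argument. Since $k>c(\mathrm{pre}(e))$, the first argument is beaten as well. This is exactly where the summand $|E|$ in $c([U,e])$ is needed, and it is the main (mild) obstacle: one must make the bracketing of $\bigwedge$ explicit so that the bound $x+|E|-1$ is justified.
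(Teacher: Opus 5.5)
Your proposal is correct and follows essentially the same route as the paper: each item is a direct arithmetic comparison from Definition~\ref{comp} via the derived complexity of $\rightarrow$, and item~6 rests on the bound $\max+|E|-1$ for the nested conjunction together with $c([U,e])=c([U,f])$. You are in fact more careful than the paper's proof, which drops the $|E|$ summand in item~2 and only treats the conjunction branch of the maximum in item~6. One small remark: unfolding the abbreviations actually gives $c(\alpha\rightarrow\beta)=\max\{c(\alpha)+3,c(\beta)\}+3$ rather than the stated $+2$, but each of your inequalities has enough slack (via $k\geq c(\mathrm{pre}(e))+1$ and $k\geq|E|+1$) to absorb this.
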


\begin{proof}
        \textbf{1}. Follows directly from Definition \ref{comp}.
        \\\textbf{2}. $c([U,e]p) = c([U,e]) + 6 = max\{c(\mathrm{pre}(t)): t \in E\} + 6 = c(\mathrm{pre}(t)) + 6$. On the other hand, $c(\mathrm{pre}(e) \rightarrow p) = \max\{c(\mathrm{pre}(e)) + 2, c(p)\} + 3 = c(\mathrm{pre}(e)) + 5$. Given that $c(pre(e)) \leq c(pre(t))$, then $c([U,e]p) > c(pre(e) \rightarrow p)$.
        \\\textbf{3}. c($[U,e]\neg\alpha$) = (c($[U,e]$) + 6)(c($\alpha$) + 1) = c($[U,e]$c($\alpha$) + c($[U,e]$) + 6c($\alpha$) + 6. On the other hand, c(pre($e$) $\rightarrow$ $\neg[U,e]\alpha$) = max(c(pre($e$)) + 2, c($\neg[U,e]\alpha$)) + 3. Suppose that c($\neg[U,e]\alpha > c(pre($e$)) + 2$. Then, max(c(pre($e$)) + 2, c($\neg[U,e]\alpha$)) + 3 = c($\neg[U,e]\alpha$) + 3 = c($[U,e]\alpha$) + 4 = (c($[U,e]$) + 6)(c($\alpha$)) + 4 = c($[U,e]$)c($\alpha$) + 6c($\alpha$) + 4. 
\\Furthermore, c$([U,e])$c($\alpha$) + c($[U,e]$) + 6c($\alpha$) + 6 = ($\max\{c(pre(f): f \in E\} + |E|)(c(\alpha)) + \max\{c(pre(f): f \in E\} + |E| +6c(\alpha) + 6$ = $c(pre(f))c(\alpha) + |E|(c(\alpha)) + c(pre(f)) + |E| + 6c(\alpha) + 6$. Now, suppose that max(c(pre($e$)) + 2, c($\neg[U,e]\alpha$)) = c(pre(e)) + 2. Then, max(c(pre($e$)) + 2, c($\neg[U,e]\alpha$)) + 3 = c(pre(e)) + 5. Clearly, $c(pre(f))c(\alpha) + |E|(c(\alpha)) + c(pre(f)) + |E| + 6c(\alpha) + 6 > c(pre(e)) + 5$. Therefore, $c([U,e]p) > c(pre(e) \rightarrow p)$. 
\\\textbf{4}. Analogous to item 3.
        \\\textbf{5}. c($[U,e]$($\alpha$ $\land$ $\beta$)) = (c($[U,e]$) + 6)(c($\alpha$ $\land$ $\beta$)) = c($[U,e]$ + 6)(max(c($\alpha$), c($\beta$)) + 1). Suppose, without loss of generality, that $max\{c(\alpha), c(\beta)\} = c(\alpha)$. Then, (c($[U,e]$) + 6)(max\{c($\alpha$), c($\beta$)\} + 1) = (c($[U,e]$) + 6)(c($\alpha$) + 1) = c($[U,e]$)c($\alpha$) + 6c($\alpha$) + c($[U,e]$) + 6. On the other hand, c($[U,e]\alpha$ $\land$ $[U,e]\beta$) = $\max\{c([U,e]\alpha), c([U,e]\beta)\} + 1 = c([U,e]\alpha) + 1 = (c([U,e]) + 6)(c(\alpha)) + 1 = c([U,e])c(\alpha) + 6c(\alpha) + 1$. Therefore, $c([U,e](\alpha \land \beta)) > c([U,e]\alpha \land [U,e]\beta)$.
\\\textbf{6}. $c([U,e]K_i\alpha) = (c([U,e]) + 6)(\alpha + 1) = c([U,e])c(\alpha) + c([U,e]) + 6c(\alpha) + 6$. On the other hand, $c(pre(e) \rightarrow \bigwedge_{e \sim_i f} K_i[U,f]\alpha$ = $\max\{c(\mathrm{pre}(e)), c(\bigwedge_{e \sim_i f} K_i[U,f]\alpha)\} + 3$ = $c(\bigwedge_{e \sim_i f} K_i[U,f]\alpha) + 3$ = \\$\max\{c(K_i[U,f]\alpha):f \in E\} + |E| - 1 + 3 = c(K_i[U,f]\alpha) + |E| + 2 = c([U,f]\alpha) + |E| + 3 = (c([U,f]) + 6)(c(\alpha)) + |E| + 3 = c([U,f])(c(\alpha)) + 6c(\alpha) + |E| + 3$.
      
Now, we need to prove that $c([U,e])c(\alpha) + c([U,e]) + 6c(\alpha) + 6 > c([U,f])(c(\alpha)) + 6c(\alpha) + |E| + 3$. To solve that inequality, we need to prove that $c([U,e]) + 3 > |E|$. $c([U,e]) + 3 = \max\{c(pre(f)):f \in E\} + |E| + 3 > |E|$. Therefore, the inequality holds. 
\end{proof}

\begin{remark}
   Because each rewriting procedure of a formula $\alpha$ to its translation $t(\alpha)$ results in a formula that is less complex than the initial one (except in the final case where $\alpha = p$), then this is a terminating rewriting procedure. That is, translating a given formula in $\mathcal{L}_{AM}$ always terminates. 
\end{remark}

\begin{lemma}\label{completenesslemma}
For all formulas $\alpha$ $\in$ $\mathcal{L}_{AM}$, there is a formula $\beta$ $\in$ $\mathcal{L}_K$ such that $\vdash \alpha \equiv \beta$.

\end{lemma}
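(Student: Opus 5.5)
We argue by strong induction on the complexity measure $c$ of Definition~\ref{comp}, and show the sharper statement that $\vdash \alpha \equiv t(\alpha)$ for every $\alpha \in \mathcal{L}_{AM}$. Here $t$ is the translation of Definition~\ref{trans}, completed with the clauses $t([U,e]\bullet\alpha) = t(\pre(e) \rightarrow \bullet[U,e]\alpha)$ and $t([U,e](\alpha\land\beta)) = t([U,e]\alpha \land [U,e]\beta)$. The remark after Lemma~\ref{complemma} guarantees that $t$ is well defined and lands in $\mathcal{L}_K$. In every clause, the argument of $t$ on the right-hand side has strictly smaller complexity than the left-hand side, by Lemma~\ref{complemma}. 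So $\beta := t(\alpha)$ witnesses the lemma.

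The induction is organised by the shape of $\alpha$.
\begin{itemize}
\item \textbf{Atoms.} For $\alpha = p$ we use reflexivity of $\equiv$.
\item \textbf{Static connectives.} For $\alpha = \neg\gamma$, $\bullet\gamma$, $\gamma\land\delta$ or $K_i\gamma$, item~1 of Lemma~\ref{complemma} gives that the immediate subformulas have smaller complexity. The induction hypothesis then yields $\vdash \gamma \equiv t(\gamma)$ (and likewise for $\delta$). We conclude by the congruence rules $\neg R$, $\bullet R$, $\land R$, or equivalently by RE. For $K_i$ we need a congruence rule $\gamma \equiv \delta \Rightarrow K_i\gamma \equiv K_i\delta$ in S5LFI1. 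This follows from Nec and K applied to both conjuncts of $\equiv$, together with K$_1$ for the negated parts, since $\neg K_i\gamma$ is $\hat K_i\neg\gamma$ up to double negation (A13); alternatively it is available via RE, which is an axiom of AMLFI1.
\item \textbf{Dynamic modalities.} For $\alpha = [U,e]\gamma$ we make a subcase split on the main connective of $\gamma$. Take for instance $\gamma = \neg\delta$. Axiom AM2 gives $\vdash [U,e]\neg\delta \equiv (\pre(e) \rightarrow \neg[U,e]\delta)$, and Lemma~\ref{complemma}(3) says the right-hand side has smaller complexity. The induction hypothesis therefore gives $\vdash (\pre(e)\rightarrow\neg[U,e]\delta) \equiv t(\pre(e)\rightarrow\neg[U,e]\delta)$, which is $t(\alpha)$ by definition. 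Transitivity of $\equiv$ closes the case. The subcases $p$, $\bullet\delta$, $\delta_1\land\delta_2$ and $K_i\delta$ are handled the same way, using AM1, AM3, AM4 and AM5 respectively, together with items 2, 4, 5 and 6 of Lemma~\ref{complemma}.
\item \textbf{Nested modalities.} Here $\gamma = [U',e']\delta$. We do not rewrite the outer modality directly. Instead we apply the induction hypothesis to the inner formula $[U',e']\delta$, which has smaller complexity by item~1, obtaining an equivalent $\delta' \in \mathcal{L}_K$. RE then gives $\vdash [U,e][U',e']\delta \equiv [U,e]\delta'$. We still have to handle $[U,e]\delta'$ with $\delta'$ modality-free, and this is the delicate point.
\end{itemize}

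\textbf{Main obstacle.} The complexity of $[U,e]\delta'$ may exceed that of $\alpha$, so strong induction on $c$ alone does not apply to it. We resolve this by a nested induction. The outer induction is on the number of dynamic modalities (or on modal depth in $[\cdot]$). The inner one is a separate induction on $c$, restricted to formulas of the form $[U,e]\delta'$ with $\delta' \in \mathcal{L}_K$. Within the inner induction, every axiom AM1–AM5 produces only formulas whose dynamic modalities are applied to subformulas of $\delta'$, or to the preconditions' surroundings. Preconditions may themselves contain modalities, but they have lower complexity, which is why $c([U,e])$ enters multiplicatively, so they fall under the outer hypothesis.

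The remaining routine points are two. First, the equivalences are strong ($\equiv$) and not merely $\leftrightarrow$, so that the congruence property (Proposition on $\equiv$ being a congruence, and RE) may be used inside $\neg$ and $\bullet$. Second, the finite conjunction in AM5 is handled by iterating $\land R$.
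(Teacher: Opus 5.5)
Your proposal is correct and follows essentially the paper's route: congruence of $\equiv$ for the static connectives, the reduction axioms AM1--AM5 with Lemma~\ref{complemma} for a dynamic modality over a static connective, and RE to replace an inner dynamic subformula by its $\mathcal{L}_K$-equivalent. Your explicit outer induction on the number of dynamic modalities (counting those inside preconditions, with $\pre(e)$ handled by the outer hypothesis separately from $[U,e]\delta$) is what makes the paper's informal ``successive replacement'' step well-founded, since $c([U,e]\delta')$ need not be below $c(\alpha)$; just drop the claim that $t$ is total, as Definition~\ref{trans} has no clause for $[U,e][U',e']\delta$, and conclude only that some $\beta$ exists.
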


\begin{proof} We proceed by induction on the complexity of formulas.\\
    \textbf{Base Case.} $\alpha$ = $p$. In that case, given that t($p$) = $p$, $\vdash$ $p$ $\equiv$ t($p$) ($\equiv$ is a congruence relation, therefore, it is reflexive).\\
   \textbf{Inductive Hypothesis.} Suppose that $\vdash$ $\beta$ $\equiv$ t($\beta$), for all $\beta$ such that c($\beta$) $<$ c($\alpha$).
 \\ \textbf{Inductive Cases.}

            \noindent$\bm{(\alpha = \neg\beta)}$ By I.H., $\vdash$ $\beta$ $\equiv$ t($\beta$). Given that $\equiv$ is a congruence relation, $\vdash$ $\neg\beta$ $\equiv$ $\neg$t($\beta$). By Definition \ref{trans}, t($\neg\beta$) = $\neg$t($\beta$). Therefore, $\vdash$ $\neg\beta$ $\equiv$ t($\neg\beta$). 
            \\\noindent$\bm{(\alpha = \bullet\beta)}$ By I.H., $\vdash$ $\beta$ $\equiv$ t($\beta$). Given that $\equiv$ is a congruence relation, $\vdash$ $\bullet\beta$ $\equiv$ $\bullet$t($\beta$). By Definition \ref{trans}, $\bullet$t($\beta$) = t($\bullet\beta$). Therefore, $\vdash$ $\bullet\beta$ $\equiv$ t($\bullet\beta$).
            \\\noindent$\bm{(\alpha = \beta \land \delta)}$ By induction hypothesis, $\vdash$ $\beta$ $\equiv$ t($\beta$) and $\vdash$ $\delta$ $\equiv$ t($\delta$). Given that $\equiv$ is a congruence rel., $\vdash$ ($\beta$ $\land$ $\delta$) $\equiv$ (t($\beta$) $\land$ t($\delta$)). By Definition \ref{trans}, t($\beta$ $\land$ $\delta$) = t($\beta$) $\land$ t($\delta$). Therefore, $\vdash$ ($\beta$ $\land$ $\delta$) $\equiv$ t($\beta$ $\land$ $\delta$).
            \\\noindent$\bm{(\alpha = K_i\beta)}$ By induction hypothesis, $\vdash$ $\beta$ $\equiv$ t($\beta$). Given that $\equiv$ is a congruence relation, $\vdash$ $\beta$ $\equiv$ t($\beta$) implies $\vdash$ $K_i$$\beta$ $\equiv$ $K_i$t($\beta$). By Definition \ref{trans}, $K_i$t($\beta$) = t($K_i$$\beta$). Thus, $\vdash$ $K_i$$\beta$ $\equiv$ t($K_i$$\beta$).
            \\\noindent$\bm{(\alpha = [U,e]p)}$ By induction hypothesis and Lemma \ref{complemma}.2, $\vdash$ (pre($e$) $\rightarrow$ $p$) $\equiv$ t(pre($e$) $\rightarrow$ $p$). By Definition \ref{trans}, t($[U,e]p$) = t(pre($e$) $\rightarrow$ $p$). Therefore, $\vdash$ (pre($e$) $\rightarrow$ $p$) $\equiv$ t($[U,e]p$). By axiom AM1, $\vdash$ $[U,e]p$ $\equiv$ (pre($e$) $\rightarrow$ $p$). Thus, by transitivity of $\equiv$, $\vdash$ $[U,e]p$ $\equiv$ t($[U,e]p$).
            \\\noindent$\bm{(\alpha = [U,e]\neg\beta)}$ By induction hypothesis, $\vdash$ (pre($e$) $\rightarrow$ $\neg[U,e]\beta$) $\equiv$ t(pre($e$) $\rightarrow$ $\neg[U,e]\beta$). Applying Definition \ref{trans}, $\vdash$ (pre($e$) $\rightarrow$ $\neg[U,e]\beta$) $\equiv$ t($[U,e]\neg\beta$). By axiom AM2, $\vdash$ $[U,e]\neg\beta$ $\equiv$ (pre($e$) $\rightarrow$ $\neg[U,e]\beta$). Thus, by transitivity of $\equiv$, $\vdash$ $[U,e]\neg\beta$ $\equiv$ t($[U,e]\neg\beta$). 
            \\\noindent$\bm{(\alpha = [U,e]\bullet\beta)}$ Case analogous to the previous one using axiom AM3 and Lemma \ref{complemma}.4.
            \\\noindent$\bm{(\alpha = [U,e](\delta \land \gamma))}$ By induction hypothesis, $\vdash$ $[U,e]\delta$ $\equiv$ t($[U,e]\delta$) and $\vdash$ $[U,e]\gamma$ $\equiv$ t($[U,e]\gamma$). Thus, $\vdash$ ($[U,e]\delta$ $\land$ $[U,e]\gamma$) $\equiv$ (t($[U,e]\delta$) $\land$ t($[U,e]\gamma$)), given that $\equiv$ is a congruence relation. By Definition \ref{trans}, t($[U,e]$($\delta$ $\land$ $\gamma$)) = t($[U,e]$$\delta$) $\land$ t($[U,e]$$\gamma$). By axiom AM4, $\vdash$ $[U,e]$($\delta$ $\land$ $\gamma$) $\equiv$ $[U,e]\delta$ $\land$ $[U,e]\gamma$. Therefore, by transitivity of $\equiv$, $\vdash$ $[U,e]$($\delta$ $\land$ $\gamma$) $\equiv$ t($[U,e]$($\delta$ $\land$ $\gamma$)).
            \\\noindent$\bm{(\alpha = K_i\beta)}$ Analogous to the case of negation, using AM5 and Lemma \ref{complemma}.6.
            \\Now, let us consider formulas with more than one dynamic modality. Consider a formula $\alpha$ and its innermost subformula with a dynamic modality $\lambda$ = $[\varphi]\gamma$, $\varphi$ an action model. By i.h, $\vdash$ $[\varphi]\gamma$ $\equiv$ $\beta$, in which $\lambda'$ = $\beta$ $\in$ $\mathcal{L}_K$. Therefore, by RE, $\vdash$ $\alpha$ $\equiv$ $\alpha$($\lambda$/$\lambda'$). Consider the successive replacement of all dynamic modalities in $\alpha$ by its respective translations. By RE, $\vdash$ $\alpha$ $\equiv$ $\alpha$($\lambda_1/\lambda'_1$)($\lambda_2/\lambda'_2$)[...]($\lambda_n/\lambda'_n$). In that case, $\alpha$($\lambda_1/\lambda'_1$)($\lambda_2/\lambda'_2$)[...]($\lambda_n/\lambda'_n$) $\in$ $\mathcal{L}_K$.
\end{proof}

\begin{theorem}[Completeness]\label{completeness}
    For every $\alpha$ $\in$ $\mathcal{L}_{AM}$,
$$\vDash \alpha \text{ implies } \vdash \alpha$$
\end{theorem}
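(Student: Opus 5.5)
The argument is the standard reduction one. It combines Lemma~\ref{completenesslemma} with soundness (Theorem~\ref{soundness}) and completeness of the static base logic S5LFI1.

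Assume $\vDash \alpha$ for $\alpha \in \mathcal{L}_{AM}$. By Lemma~\ref{completenesslemma} there is a formula $\beta \in \mathcal{L}_K$, namely $t(\alpha)$, such that $\vdash_{AMLFI1} \alpha \equiv \beta$. Soundness then gives $\vDash \alpha \equiv \beta$. Since $\vartheta_w(\alpha\equiv\beta)\in D$ holds iff $\vartheta_w(\alpha)=\vartheta_w(\beta)$, the formulas $\alpha$ and $\beta$ receive the same value at every state of every model. Hence $\beta$ is designated everywhere, i.e.\ $\vDash_{AMLFI1}\beta$.

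Next I will transfer this validity to S5LFI1. On formulas of $\mathcal{L}_K$ the AMLFI1-valuation clauses coincide with the S5LFI1 clauses, and the class of models is the same (S5 Kripke models). So $\vDash_{S5LFI1}\beta$. By completeness of S5LFI1 with respect to its Kripke semantics (the result of \cite{bueno2024many} cited above), $\vdash_{S5LFI1}\beta$. Because AMLFI1 contains all axioms and rules of S5LFI1, it follows that $\vdash_{AMLFI1}\beta$.

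Finally I will derive $\alpha$ from $\beta$. Unfolding $\alpha\equiv\beta$ as $(\alpha\leftrightarrow\beta)\land(\neg\alpha\leftrightarrow\neg\beta)$ and applying A5 and A4 with MP yields $\vdash \beta\rightarrow\alpha$. One more MP with $\vdash\beta$ gives $\vdash\alpha$.

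The only delicate point is the equivalence $\vDash\alpha\equiv\beta \Rightarrow \vartheta_w(\alpha)=\vartheta_w(\beta)$, which I take from the $\equiv$ truth table. Everything else is routine, since the substantive work was done in Lemmas~\ref{complemma} and~\ref{completenesslemma}.
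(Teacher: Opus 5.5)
Your proposal is correct and follows the paper's proof step for step: Lemma~\ref{completenesslemma} plus soundness gives $\vDash \alpha\equiv\beta$ and hence $\vDash\beta$ for some $\beta\in\mathcal{L}_K$; completeness of S5LFI1 then gives $\vdash\beta$ in AMLFI1; and $\vdash\alpha\equiv\beta$ yields $\vdash\alpha$. You only make explicit two steps the paper leaves implicit: that a designated $\equiv$ forces equal truth values, and that $\beta\rightarrow\alpha$ is extracted from $\alpha\equiv\beta$ via A4, A5 and MP.
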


\begin{proof}
    Suppose that $\vDash$ $\alpha$. By Lemma \ref{completenesslemma} and soundness, $\vDash$ $\alpha$ $\equiv$ $\beta$, for some $\beta$ $\in$ $\mathcal{L}$. Therefore, $\vDash$ $\beta$. However, given that $\beta$ $\in$ $\mathcal{L}$, S5LFI1 $\vDash$ $\beta$. By completeness of S5LFI1, S5LFI1 $\vdash$ $\beta$. Given that S5LFI1 is a subsystem of AMLFI1, then $\vdash$ $\beta$. However, $\vdash$ $\alpha$ $\equiv$ $\beta$, therefore, $\vdash$ $\alpha$. 
\end{proof}

\subsection{Compactness, decidability, and strong completeness}

In this subsection, the properties of compactness, decidability, and strong completeness are shown to hold equally well for the action model extensions of S5LFI1.

\begin{corollary}[Compactness]
    AMLFI1 is compact.
\end{corollary}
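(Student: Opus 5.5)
The plan is to derive compactness of AMLFI1 from compactness of S5LFI1 via the translation $t$ of Definition~\ref{trans}. Strong completeness of AMLFI1 has not yet been established at this point, so the argument of Corollary~\ref{compactness} cannot be copied directly. Instead, every formula of $\mathcal{L}_{AM}$ will be reduced to a semantically equivalent formula of $\mathcal{L}_K$, and compactness will be pulled back along this reduction.

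\textbf{Step 1: reduction is value-preserving.} Suppose $\Gamma \vDash_{AMLFI1} \alpha$. By Lemma~\ref{completenesslemma}, for each $\gamma \in \Gamma \cup \{\alpha\}$ there is $t(\gamma) \in \mathcal{L}_K$ with $\vdash \gamma \equiv t(\gamma)$. By soundness (Theorem~\ref{soundness}), $\vDash \gamma \equiv t(\gamma)$. Since $\vartheta_w(\gamma \equiv \delta) \in D$ iff $\vartheta_w(\gamma) = \vartheta_w(\delta)$, every AMLFI1-valuation at every state gives $\gamma$ and $t(\gamma)$ the same value. Hence $t[\Gamma] \vDash_{AMLFI1} t(\alpha)$.

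\textbf{Step 2: transfer to the static logic.} All formulas in $t[\Gamma] \cup \{t(\alpha)\}$ lie in $\mathcal{L}_K$. On this fragment, AMLFI1-valuations are exactly S5LFI1-valuations on the same (equivalence-relation) Kripke models, because the extra clause for $[U,e]$ is never used. Therefore $t[\Gamma] \vDash_{S5LFI1} t(\alpha)$.

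\textbf{Step 3: apply static compactness.} Compactness of S5LFI1 (stated after the definition of S5LFI1; it follows as in Corollary~\ref{compactness} from strong completeness of the normal extensions) gives a finite $\Delta \subseteq t[\Gamma]$ with $\Delta \vDash_{S5LFI1} t(\alpha)$. For each element of $\Delta$, choose one preimage in $\Gamma$; this yields a finite $\Gamma' \subseteq \Gamma$ with $t[\Gamma'] \supseteq \Delta$.

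\textbf{Step 4: pull back.} Run Steps 1 and 2 in reverse. Any AMLFI1 pointed model designating all of $\Gamma'$ designates $t[\Gamma']$, hence $\Delta$, hence $t(\alpha)$, hence $\alpha$. So $\Gamma' \vDash_{AMLFI1} \alpha$.

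The main obstacle is Step 2: one must check that the class of models for AMLFI1 restricted to $\mathcal{L}_K$ is exactly the class for S5LFI1. Since both use equivalence-relation Kripke models with the same valuation clauses for $\neg,\bullet,\land,K_i$, this is a matter of checking definitions. A secondary point is that Lemma~\ref{completenesslemma} must supply a single translation per formula. This holds because $t$ is a function, and its termination is guaranteed by Lemma~\ref{complemma}. The same argument, replacing $t(\alpha)$ by a decision procedure for S5LFI1 (Proposition~\ref{decidabilitymodal}), would also give decidability.
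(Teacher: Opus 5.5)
Your proposal is correct and follows the paper's own proof. Both translate $\Gamma$ and $\alpha$ via $t$, use soundness together with Lemma~\ref{completenesslemma} to get preservation of truth values, apply compactness of S5LFI1, and pull the finite subset back. Your explicit Step 2 and your choice of one preimage per element of $\Delta$ are in fact more careful than the paper: its set $\Gamma'^{-1}=\{\alpha : t(\alpha)\in\Gamma'\}$ need not be finite, or even contained in $\Gamma$, because $t$ is not injective.
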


\begin{proof}
    Suppose $\Gamma \vDash_{AMLFI1} \alpha$. Consider $t[\Gamma] = \{t(\gamma): \gamma \in \Gamma\}.$ Then, $t[\Gamma] \vDash_{AMLFI1} t(\alpha)$, because $\vDash t(\alpha) \equiv \alpha$, for all $\alpha \in \mathcal{L}_{AM}$ (Soundness and Lemma \ref{completenesslemma}). By Compactness of S5LFI1, there exists $\Gamma'$ such that $\Gamma' \subseteq t[\Gamma]$ and $\Gamma'$ is finite such that $\Gamma' \vDash_{S5LFI1} t(\alpha)$. As S5LFI1 is a subsystem of AMLFI1 (see definition of subsystem in Definition~\ref{subsystem}), $\Gamma' \vDash_{AMLFI1}t(\alpha)$. Consider the set of untranslated formulas $\Gamma'^{-1} = \{\alpha: t(\alpha) \in \Gamma'\}$. Given that $\Gamma'$ is finite, so is $\Gamma^{-1}$, and $\Gamma'^{-1}\subseteq \Gamma$. By the strict equivalence of a formula and its translation, it follows that $\Gamma'^{-1} \vDash_{AMLFI1} \alpha$. Therefore, compactness holds for AMLFI1.
\end{proof}

\begin{theorem}[Strong Completeness]
    For every $\Gamma \subseteq \mathcal{L}_{AM}$, $\alpha \in \mathcal{L}_{AM}$,
    \[\Gamma \vDash \alpha \text{ implies } \Gamma \vdash \alpha\]
\end{theorem}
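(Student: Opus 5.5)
The plan is to reduce strong completeness to weak completeness (Theorem~\ref{completeness}) using the compactness corollary just proven, together with a deduction theorem for AMLFI1.

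\textbf{Step 1: finite premises.} Suppose $\Gamma \vDash_{AMLFI1} \alpha$. By compactness of AMLFI1 there is a finite $\Gamma' = \{\gamma_1,\dots,\gamma_n\} \subseteq \Gamma$ with $\Gamma' \vDash_{AMLFI1} \alpha$.

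\textbf{Step 2: internalize the premises.} We show semantically that
\[
\vDash_{AMLFI1} \gamma_1 \rightarrow (\gamma_2 \rightarrow \cdots (\gamma_n \rightarrow \alpha)\cdots).
\]
This uses the truth table of $\rightarrow$: $\vartheta(\beta \rightarrow \delta) \in D$ iff $\vartheta(\beta) = 0$ or $\vartheta(\delta) \in D$. Fix any state. If some $\gamma_k$ has value $0$, the corresponding implication has value $1$, and hence so does the whole nested formula. Otherwise every $\gamma_k$ is designated, so $\alpha$ is designated and the formula is designated. By Theorem~\ref{completeness}, the nested implication is then a theorem of AMLFI1.

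\textbf{Step 3: detach.} From $\gamma_1,\dots,\gamma_n \in \Gamma$ and the theorem of Step 2, $n$ applications of MP give $\Gamma \vdash \alpha$. MP may be applied to hypotheses, by the conventions on derivability. This direction needs no deduction theorem at all; only the semantic version of it is used, which is what makes the argument short. If one prefers the route via $t$, one can instead translate into $\mathcal{L}_K$ and appeal to the strong completeness of S5LFI1 \cite{bueno2024many}. That detour needs $\vdash \gamma \equiv t(\gamma)$ (Lemma~\ref{completenesslemma}), and the passage from $\gamma$ to $t(\gamma)$ is then made by MP on the $\rightarrow$ component of $\equiv$.

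\textbf{Main obstacle.} The delicate point is that compactness, as proved, relied on transferring semantic consequence across $t$, so it must hold for arbitrary (possibly infinite) $\Gamma$. I take the corollary as given. The other point to check is that semantic consequence is evaluated pointwise at states, so the designated-value argument in Step 2 is legitimate in the Kripke setting and not only for propositional valuations. This holds because $\rightarrow$ is evaluated locally at each state.
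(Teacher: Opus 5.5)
Your proof is correct and follows essentially the same route as the paper. Both arguments pass to a finite $\Gamma'$ by compactness, internalize it semantically via the truth table of $\rightarrow$, apply weak completeness (Theorem~\ref{completeness}), and detach by MP. The differences are cosmetic: you use a nested implication with $n$ applications of MP where the paper uses $(\gamma_1 \land \dots \land \gamma_n) \rightarrow \alpha$ and the ``converse of the Deduction Theorem'', and your explicit truth-table justification of the semantic step is cleaner than the paper's appeal to soundness.
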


\begin{proof}
    By Soundness (Theorem \ref{soundness}), $\Gamma \cup \{\alpha\} \vDash \beta \text{ implies } \Gamma \vDash \alpha \rightarrow \beta$, for all $\Gamma \subseteq \mathcal{L}_{AM}, \alpha, \beta \in \mathcal{L}_{AM}$. Now, suppose $\Gamma \vDash \alpha.$ By Compactness (Corollary \ref{compactness}), there exists $\Gamma'$ such that $\Gamma' \subseteq \Gamma$ and $\Gamma'$ is finite. Let $\gamma_1, ..., \gamma_n \in \mathcal{L}_{AM}$ such that $\Gamma' = \{\gamma_1, ..., \gamma_n\}$ (since $\Gamma'$ is finite, we can enumerate its formulas like that). In that case, $\Gamma \vDash \alpha$ iff $\vDash (\gamma_1 \land \dots \land \gamma_n) \rightarrow \alpha$. By Completeness (Theorem \ref{completeness}), it follows that $\vdash (\gamma_1 \land \dots \land \gamma_n) \rightarrow \alpha$. By the converse of Deduction Theorem, $\Gamma' \vdash \alpha$. Given that $\Gamma' \subseteq \Gamma$, then, by monotonicity, $\Gamma \vdash \alpha$.
\end{proof}

\begin{theorem}[Decidability]
    AMLFI1 is decidable.
\end{theorem}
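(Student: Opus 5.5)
The plan is to reduce decidability of AMLFI1 to decidability of S5LFI1 (Proposition~\ref{decidabilitymodal}) through the translation $t$ of Definition~\ref{trans}. Given an input $\alpha \in \mathcal{L}_{AM}$, we compute $t(\alpha) \in \mathcal{L}_K$ and run the S5LFI1 decision procedure on it. The translation should be read as the full reduction system given by axioms AM1--AM5. In particular, it includes the clauses $t([U,e]\bullet\alpha) = t(\mathrm{pre}(e) \rightarrow \bullet[U,e]\alpha)$ and $t([U,e](\alpha\land\beta)) = t([U,e]\alpha)\land t([U,e]\beta)$, which are used implicitly in Lemma~\ref{completenesslemma}. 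Nested modalities are handled by translating innermost dynamic subformulas first, as at the end of the proof of Lemma~\ref{completenesslemma}.

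\textbf{Computability of $t$.} Each rewriting clause is a purely syntactic transformation. The action models occurring in $\alpha$ are finite, so each big conjunction $\bigwedge_{e\sim_i f}$ is finite. By Lemma~\ref{complemma}, every rewriting step strictly decreases the complexity measure $c$, which takes values in $\mathds{N}$. Hence the rewriting terminates after finitely many steps and $t$ is an effective function. This termination claim is the main point that needs care. For a formula $[U,e]\beta$ whose precondition formulas themselves contain dynamic modalities, I would first translate the preconditions: by RE and Lemma~\ref{completenesslemma}, this yields a strongly equivalent action model whose preconditions lie in $\mathcal{L}_K$. I would then apply the reduction clauses, inducting on the nesting depth of action models, so that Lemma~\ref{complemma} applies at each level.

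\textbf{Correctness.} By Lemma~\ref{completenesslemma} and Soundness (Theorem~\ref{soundness}), $\vDash \alpha \equiv t(\alpha)$. Since $\vartheta(\alpha\equiv\beta)\in D$ iff $\vartheta(\alpha)=\vartheta(\beta)$, every AMLFI1-valuation assigns $\alpha$ and $t(\alpha)$ the same value at every state. Moreover, $t(\alpha)$ contains no dynamic modalities, so its value in a model is computed exactly as in S5LFI1. It follows that $\alpha$ is satisfiable (designated at some state of some model) iff $t(\alpha)$ is S5LFI1-satisfiable. Likewise, $\alpha$ has a countermodel (value $0$ somewhere) iff $t(\alpha)$ does, and a witnessing model for $t(\alpha)$ is itself a witnessing model for $\alpha$, since the two formulas agree pointwise. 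Therefore the decision procedure of Proposition~\ref{decidabilitymodal}, applied to $t(\alpha)$, decides $\alpha$ and returns the required model or countermodel. This argument also covers the validity problem, since $\vDash\alpha$ iff $\vDash_{S5LFI1} t(\alpha)$.
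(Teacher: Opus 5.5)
Your proposal is correct and follows the same route as the paper: rewrite $\alpha$ into $t(\alpha)\in\mathcal{L}_K$, use Lemma~\ref{complemma} for termination and Lemma~\ref{completenesslemma} with soundness for $\vDash \alpha\equiv t(\alpha)$, and then apply the S5LFI1 decision procedure from Proposition~\ref{decidabilitymodal}. You also make explicit what the paper's two-line proof leaves implicit: the missing $\bullet$ and $\land$ clauses of $t$, the handling of nested modalities and preconditions, and the fact that $\equiv$ forces pointwise equal values, so satisfiability and countermodels transfer.
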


\begin{proof}
    It follows from the decidability of KLFI1 in Theorem \ref{decidabilitymodal} and Lemmas \ref{completenesslemma}, \ref{complemma}. Given that there is a terminating rewriting procedure for formulas in $\mathcal{L}_{AM}$ to $\mathcal{L}$, testing the validity of a formula $\alpha$ in AMLFI1 amounts to testing the validity of its translation $t(\alpha) \in \mathcal{L}$, which is a decidable procedure given that S5LFI1 is decidable.
\end{proof}

The soundness and completeness of other action model paraconsistent logics, not for the S5 but for the KB4 extension, with corresponding KB4 (symmetric and transitive) action models, or minimal K normal modal extension of LFI1,  with corresponding action models without restrictions, are virtually the same as for S5LFI1. The important thing to keep in mind is to only consider modal logics interpreted on structures that are closed under action model execution. This is the case for KB4LFI1, KLFI1, and S5LFI1. For a counterexample, consider KDLFI1 interpreted on serial frames. Consider model $M = \langle \{w_1, w_2\}, \sim_i, \vartheta\rangle$ where ${\sim_i} = \{(w_1, w_2), (w_2,w_2)\}$ and $\vartheta_{w_1}(p) = 1$, $\vartheta_{w_2}(p) = 0$. The restriction of $M$ to the state $w_1$ satisfying $p$ is not a serial model anymore.

\subsection{Examples}

We continue with three examples applying the logical semantics. 

\begin{example} \label{examplehone}
Consider agents Anne and Cath. Anne is uncertain whether Cath holds the clubs card (atom $p_c$). Cath now tells Anne she holds clubs. Then, she tells Anne she does not hold clubs. One of these announcements must therefore be a lie. In (standard, truth-valued) public announcement logic we cannot represent that, but in paraconsistent public announcement logic we can. To simplify the setting Cath is an external observer not modelled in this single-agent epistemic model $\mathcal C$ (for `${\mathcal C}$ards') with valuation $\vartheta$.
The public announcement of $p_c$ corresponds to a singleton action model with precondition $p_c$, and similarly for $\neg p_c$. Designated values $1$ and $\half$ are preserved after public announcements. We therefore get the following two updates.

\medskip

\noindent
\begin{tikzpicture}
\node (0) at (0,0) {$\ov{p}_c$};
\node (01) at (1.7,0) {$\dot{p}_c$};
\node (1) at (3.4,0) {$p_c$};
\draw[<->] (0) to node[above] {$a$} (01);
\draw[<->] (01) to node[above] {$a$} (1);
\draw[<->,bend right =20] (0) to node[below] {$a$} (1);
\draw[->] (0) edge[loop above,looseness=8] node[above] {$a$} (0); 
\draw[->] (01) edge[loop above,looseness=9] node[above] {$a$} (01); 
\draw[->] (1) edge[loop above,looseness=9] node[above] {$a$} (1); 
\end{tikzpicture}
\qquad \raisebox{10pt}{$\stackrel{p_c}{\Imp}$}\qquad
\raisebox{15pt}{
\begin{tikzpicture}
\node (01) at (1.7,0) {$\dot{p}_c$};
\node (1) at (3.4,0) {$p_c$};
\draw[<->] (01) to node[above] {$a$} (1);
\draw[->] (01) edge[loop above,looseness=8] node[above] {$a$} (01); 
\draw[->] (1) edge[loop above,looseness=9] node[above] {$a$} (1); 
\end{tikzpicture}
}
\qquad \raisebox{10pt}{$\stackrel{\neg p_c}{\Imp}$}\qquad
\raisebox{15pt}{
\begin{tikzpicture}
\node (01) at (1.7,0) {$\dot{p}_c$};
\draw[->] (01) edge[loop above,looseness=8] node[above] {$a$} (01); 
\end{tikzpicture}
}

\medskip

\noindent We now obtain that in model $\mathcal C$, $\vartheta_{\dot{p}_c}([p_c][\neg p_c]K_a \bullet p_c)=1$. Note that this implies that $\vartheta_{\dot{p}_c}([p_c][\neg p_c]K_a(p_c \et\neg p_c))=1$. How to resolve this conflict will later be discussed in Section~\ref{section5}, in Example~\ref{examplehthree}.
\end{example}
\medskip

\begin{example} \label{examplehoneb}
Let us replay the scenario in Example~\ref{examplehone} but now with agent Cath explicitly represented as an agent $c$. We thus get a multi-agent epistemic model $\mathcal C'$ (with valuation $\vartheta'$) with partial equivalence relations, where the relation for agent $c$ is empty in state $\dot{p}_c$, taking into account that agents know their local state. (Why to represent knowledge of local states like that will be discussed later in Section~\ref{section.coup}.) It is shown below. 

In dynamic epistemic logics it is common to represent an agent $c$ saying $\phi$ as a public announcement of $K_c \phi$. If the agent actually knows $\phi$, the announcement is truthful, whereas it is a lying announcement if the agent actually knows $\neg \phi$. We therefore now get the following two updates, according to the semantics of public announcements $K_c p_c$ and after that $K_c \neg p_c$. One of these announcements by $c$ must therefore be a lie. Consider the first announcement. Clearly, $\vartheta'_{p_c}(K_c p_c)=1$ as $p_c$ is the only accessible state for $c$. But note that also $\vartheta'_{\dot{p}_c}(K_c p_c)=1$, as no state is $\sim_c$-accessible from the state named $\dot{p}_c$ and as the infimum of the empty set (given the values $\{0,\half,1\}$ is $1$. We proceed similarly for the second announcement.

We therefore have, similarly to Example~\ref{examplehone}, that $\vartheta'_{\dot{p}_c}([K_c p_c][K_c \neg p_c]K_a \bullet p_c)=1$.

Although this is somewhat satisfactory from $a$'s perspective, who now stores conflicting information for agent $c$, maybe less so from $c$'s perspective, as she is no longer a consistently knowing agent. And agent $a$ might also wish to resolve the conflict concerning $p_c$. We will address these issues in Sections~\ref{section5} and \ref{section.coup}.

\medskip

\noindent
\begin{tikzpicture}
\node (0) at (0,0) {$\ov{p}_c$};
\node (01) at (1.7,0) {$\dot{p}_c$};
\node (1) at (3.4,0) {$p_c$};
\draw[<->] (0) to node[above] {$a$} (01);
\draw[<->] (01) to node[above] {$a$} (1);
\draw[<->,bend right =20] (0) to node[below] {$a$} (1);
\draw[->] (0) edge[loop above,looseness=8] node[above] {$a,c$} (0); 
\draw[->] (01) edge[loop above,looseness=9] node[above] {$a$} (01); 
\draw[->] (1) edge[loop above,looseness=9] node[above] {$a,c$} (1); 
\end{tikzpicture}
\qquad \raisebox{10pt}{$\stackrel{K_c p_c}{\Imp}$}\qquad
\raisebox{15pt}{
\begin{tikzpicture}
\node (01) at (1.7,0) {$\dot{p}_c$};
\node (1) at (3.4,0) {$p_c$};
\draw[<->] (01) to node[above] {$a$} (1);
\draw[->] (01) edge[loop above,looseness=8] node[above] {$a$} (01); 
\draw[->] (1) edge[loop above,looseness=9] node[above] {$a,c$} (1); 
\end{tikzpicture}
}
\qquad \raisebox{10pt}{$\stackrel{K_c \neg p_c}{\Imp}$}\qquad
\raisebox{15pt}{
\begin{tikzpicture}
\node (01) at (1.7,0) {$\dot{p}_c$};
\draw[->] (01) edge[loop above,looseness=8] node[above] {$a$} (01); 
\end{tikzpicture}
}
\end{example}

\begin{example} \label{examplehtwo}
Let now two agents Anne and Bill be uncertain about the truth of $p_c$, as in the model $\mathcal C^\sharp$ below. First, (external observer) Cath tells Anne and Bill that she does not hold clubs, a public announcement of $\neg p_c$. Then, she semi-publicly informs Bill but not Anne that her announcement was unreliable because in fact she holds clubs, where semi-public means that (Anne and Bill have joint awareness that) Anne is aware of the action. For example, before whispering this information in Bill's ear she tells Anne and Bill that she will privately inform Bill whether her announcement was reliable. This action corresponds to an action model $\mathcal U$ consisting of two actions (action points) $r$ and $u$ with therefore preconditions $\circ p_c$ (reliable) and $\bullet p_c$ (unreliable), with designated action $u$:

\medskip
\noindent

\begin{tikzpicture}
\node (0) at (0,0) {$r$};
\node (01) at (1.7,0) {$u$};
\draw[<->] (0) to node[above] {$a$} (01);
\draw[->] (0) edge[loop above,looseness=8] node[above] {$a,b$} (0); 
\draw[->] (01) edge[loop above,looseness=8] node[above] {$a,b$} (01); 
\end{tikzpicture}

\medskip

The updates are as follows. Formally, state $\ov{p}_c$ now becomes $(\ov{p}_c,r)$ (as $\vartheta_{\ov{p}_c}(\circ p_c) = 1$) and state $\dot{p}_c$ now becomes $(\dot{p}_c,u)$ (as $\vartheta_{\dot{p}_c}(\bullet p_c) = 1$), but we again represent the resulting states by their valuation, so we get $\ov{p}_c$ respectively $\dot{p}_c$.

\medskip

\noindent
\begin{tikzpicture}
\node (0) at (0,0) {$\ov{p}_c$};
\node (01) at (1.7,0) {$\dot{p}_c$};
\node (1) at (3.4,0) {$p_c$};
\draw[<->] (0) to node[above] {$a,b$} (01);
\draw[<->] (01) to node[above] {$a,b$} (1);
\draw[<->,bend right =20] (0) to node[below] {$a,b$} (1);
\draw[->] (0) edge[loop above,looseness=8] node[above] {$a,b$} (0); 
\draw[->] (01) edge[loop above,looseness=9] node[above] {$a,b$} (01); 
\draw[->] (1) edge[loop above,looseness=9] node[above] {$a,b$} (1); 
\end{tikzpicture}
\qquad \raisebox{10pt}{$\stackrel{\neg p_c}{\Imp}$}\qquad
\raisebox{18pt}{
\begin{tikzpicture}
\node (0) at (0,0) {$\ov{p}_c$};
\node (01) at (1.7,0) {$\dot{p}_c$};
\draw[<->] (0) to node[above] {$a,b$} (01);
\draw[->] (0) edge[loop above,looseness=8] node[above] {$a,b$} (0); 
\draw[->] (01) edge[loop above,looseness=8] node[above] {$a,b$} (01); 
\end{tikzpicture}
}
\qquad \raisebox{10pt}{$\stackrel{\mathcal U}{\Imp}$}\qquad
\raisebox{18pt}{
\begin{tikzpicture}
\node (0) at (0,0) {$\ov{p}_c$};
\node (01) at (1.7,0) {$\dot{p}_c$};
\draw[<->] (0) to node[above] {$a$} (01);
\draw[->] (0) edge[loop above,looseness=8] node[above] {$a,b$} (0); 
\draw[->] (01) edge[loop above,looseness=8] node[above] {$a,b$} (01); 
\end{tikzpicture}
}

\medskip

\noindent
Then $\mathcal C^\sharp \vDash [\neg p_c] [\mathcal U] (\neg K_a \circ p_c \land \neg K_a \bullet p_c)$: it is valid on model $\mathcal C^\sharp$ that Anne does not know whether Cath's announcement was reliable. However, after the same two updates, $K_b \circ p_c \lor K_b \bullet p_c$ is now valid on $C^\sharp$ --- Bill knows whether Cath's announcement was reliable, and also $K_a (K_b \circ p_c \lor K_b \bullet p_c)$ is valid on $C^\sharp$ --- Anne knows that Bill knows that.
\end{example}

With action models, we can represent paraconsistent updates wherein an agent is informed of the consistency or inconsistency of a proposition privately. Other agents may have different information on this proposition. With this information she can remain consistent and therefore still reason classically about this proposition. Such contradictions can arise when combining databases from different agents, by such exchanges of information. Dynamics formalized with action models permit that such an inconsistency is local in the sense that a contradictory pair of propositions is true without the agent knowing every formula, that is, being globally inconsistent (`getting crazy'). 

\section{Public announcement paraconsistent logic}\label{section4}

A special case of the logic AMLFI1 is the logic PALFI1. This results in the following axioms involving public announcements, where we also feature the same RE again for reasons to become clear soon. Soundness and completeness follow immediately for this language fragment of AMLFI1.
\[
\begin{array}{l@{\qquad}l}
\begin{array}{l l}
PA1 & [\alpha]p \equiv (\alpha \rightarrow p)\\
PA2 & [\alpha]\neg\beta \equiv (\alpha \rightarrow \neg[\alpha]\beta)\\
PA3 & [\alpha]\bullet\beta \equiv (\alpha \rightarrow \bullet[\alpha]\beta)
\end{array}
&
\begin{array}{l l}
PA4 & [\alpha](\beta \land \gamma) \equiv ([\alpha]\beta \land [\alpha]\gamma)\\
PA5 & [\alpha]K_i\beta \equiv (\alpha \rightarrow K_i[\alpha]\beta)\\
RE  & \theta \equiv \gamma \Rightarrow \alpha \equiv \alpha(\theta/\gamma)
\end{array}
\end{array}
\]
Interestingly, this axiomatization defines the same logic (consists of the same set of theorems) as the recently published \cite{ongaratto2025dynamics} alternative axiomatization for a public announcement extension of paraconsistent LFI1 here denoted PALFI1$_C$. It has a composition of announcements axiom $[\alpha][\beta]\gamma \equiv [\alpha \land [\alpha]\beta]\gamma$, but no derivation rule RE. (Note that necessitation for public announcement is derivable in both axiomatizations.) The axiomatization PALFI1$_C$ in \cite{ongaratto2025dynamics} uses an outside-in reduction strategy to show that all formulas with public announcements are equivalent to formulas without public announcement. Whereas the axiomatization of PALFI1 presented here, as it is the special case of AMLFI1 for public announcements, has a derivation rule RE for replacements of equivalents, and no axiom for composition of announcements. How such different axiomatizations for public announcements correspond has been shown in \cite{WC13}, and in this section we show that these results carry over to the paraconsistent context. We prove that RE is admissible in PALFI1$_C$ (for all formulas, if $\theta$ $\equiv$ $\gamma$ is a theorem in PALFI1$_C$, then $\alpha$ $\equiv$ $\alpha$($\theta/\gamma$) is a theorem in PALFI1$_C$) and that the composition axiom is admissible in PALFI1, thus showing that the logics are the same. These proofs (Lemmas~\ref{composition} and \ref{admissibility of comp}) require another Lemma \ref{properties}. The proofs of all these lemmas are by formula induction.




\weg{\begin{definition}{(Semantics for PALFI1)}
    We extend the semantic definition of S5LFI1 with the announcement operator $[\cdot]$. Let $\mathcal{M}$ = (W, \{$\sim_i$\}$_{i \in N}$, $\vartheta$). Let M$^\alpha$ = (W$^\alpha$, \{$\sim_a^\alpha$\}$_{a \in N}$, $\vartheta^\alpha$) be its submodel after announcing $\alpha$, where:
\[
    \begin{array}{lll}
        W^\alpha & = & \{u: v_u(\alpha) \in D\}\\
       \sim_a^\alpha  & = & \{(u, v): u, v \in W^\alpha, u \sim_a v\}\\
        \vartheta^\alpha & = & \vartheta - \{\vartheta_u: u \notin W^\alpha\}
    \end{array} \]

The semantic value of $[\alpha]\beta$ is defined as \cite[p.\ 8]{girard2016paraconsistent}:
  \[
 \vartheta_w([\alpha]\beta) =
\begin{cases}
\vartheta_w^\alpha(\beta) & if \vartheta_w(\alpha) \in D  \\
1  & \text{otherwise}
\end{cases}
\]
\end{definition}
}
\weg{Soundness and completeness for PALFI1$_C$ are proven in \cite{ongaratto2025dynamics}. To prove soundness of PALFI1, we consider only the additional case of RE.

\begin{theorem}{(Soundness of RE)} 
$\vDash$ $\alpha$ $\equiv$ $\beta$ / $\alpha$ $\equiv$ $\alpha(\theta/\gamma)$
\end{theorem}}

\weg{The difference in the completeness proof in the present case from the one presented in \cite[p.\ 13]{ongaratto2025dynamics} is that here we proceed through an inside-out reduction, in a similar way to the case of AMLFI1. The complexity measure is defined slightly different for the case of formulas with announcement operator: c($[\alpha]$$\beta$) = (5 + c($\alpha$))$\cdot$ c($\beta$).}

\weg{Let us prove that PALFI1 is a special case of AMLFI1 \cite[p.\ 257]{van2008dynamic}.

\begin{definition}
    The action model of the public announcement of $\alpha$, pub($\alpha$), is defined as ($\langle\{pub\}, \sim, pre\rangle$, pub), such that pre(pub) = $\alpha$, and pub $\sim_a$ pub for all $a \in N.$
\end{definition}
}
\weg{
\begin{theorem}
    $\vdash_{AMLFI1} [pub(\alpha)]\beta$ iff $\vdash_{PALFI1_C}[\alpha]\beta$
\end{theorem}

\begin{proof}
    Suppose that $\vartheta_s(\alpha) \notin D$. In that case, $\vartheta_s([\alpha]\beta) = 1$ and $\vartheta_s([pub(\alpha)]\beta) = 1$. On the other hand, suppose that $\vartheta_s(\alpha) \in D.$ In that case, $\vartheta_s([\alpha]\beta)$ = $\vartheta_s^\alpha(\beta)$ and $\vartheta_s([pub(\alpha)]\beta) = \vartheta_{s'}(\beta)$, for $s'$ such that (M,s)$\llbracket$U, e$\rrbracket$(M', s'). In the case of public announcement, $W^\alpha = \{u: \vartheta_u(\alpha) \in D\}$. On the other hand, $S' = \{(s, pub): \vartheta_s(\alpha) \in D\}$. Thus, modulo naming of states, $W^\alpha = S'$. The accessibility relation $\sim_a^\alpha = \{(u,v): u,v \in W^\alpha, u \sim_a v\}$, that is, it preserves the accessibility between the states that are not excluded after the announcement of $\alpha$, whereas $(s, pub) \sim_a' (s', pub)$ iff $s \sim_a s'$ and $pub \sim_a pub$. Given that $pub \sim_a pub$, by definition, then $(s, pub) \sim_a' (s', pub)$ iff $s \sim_a s'$. That is, the accessibility relation in this case also preserves the initial relations between states that are not excluded after the action $pub(\alpha)$. Finally, we show that $\vartheta_s^\alpha(\beta) = \vartheta_{s'}(\beta)$, for all formulas $\beta$. 
\\\textbf{Base Case.} For the propositional case, it is the case by definition. 
\\\textbf{Inductive Hypothesis.} Suppose that $\vartheta_s^\alpha(\beta) = \vartheta_{s'}(\beta)$, for all formulas $\beta$ such that $c(\beta) < c(\gamma)$ and for all states $s^\alpha, s'$. 
\\\textbf{Inductive Cases.} For the cases of negation, conjunction, and inconsistency operators, the proof is straightforward by applying the inductive hypothesis. \\
$\bm{(\beta = K_a\gamma)}$ By inductive hypothesis, $\vartheta_s^\alpha(\gamma) = \vartheta_{s'}(\gamma)$. Furthermore, for any state $w$, $\vartheta_w(K_a\gamma) = inf\{\vartheta_{j}(\gamma): w \sim_a j\}$. Given that the accessibility relations $\sim_a^\alpha$ and $\sim_a'$ are the same modulo naming of states, $\vartheta_s^\alpha(K_a\gamma) = \vartheta_{s'}(K_a\gamma)$. \\
$\bm{(\beta = [U, e]\gamma)}$ By inductive hypothesis, $\vartheta_w^\alpha(\gamma) = \vartheta_{w'}(\gamma)$, for all $w \in W$. Furthermore, $\vartheta_s^\alpha([U,e]\gamma) = \vartheta_{s'}^\alpha(\gamma)$, for $M|\alpha', s^{\alpha'}$, and $\vartheta_{s'}([U,e]\gamma) = \vartheta_{s''}(\gamma)$, for model $M^",s^".$ By I.H., $\vartheta_{s'}^\alpha(\gamma) =  \vartheta_{s^"}(\gamma)$, therefore, $\vartheta_s^\alpha([U,e]\gamma) = \vartheta_{s'}([U,e]\gamma)$.
\end{proof}

}


\begin{lemma}\label{properties} The following properties hold for PALFI1:
\[\begin{array}{llllll}
    1. &  \vdash \neg[\alpha]\beta \rightarrow \neg\sim\alpha &
    \qquad 2. &  \vdash [\alpha]\beta \equiv (\alpha \rightarrow [\alpha]\beta) &
    \qquad 3. & \vdash [\alpha](\beta \rightarrow \gamma) \equiv ([\alpha]\beta \rightarrow [\alpha]\gamma) 
\end{array}\]
\end{lemma}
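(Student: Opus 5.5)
The cleanest route is semantic: PALFI1 is sound and (weakly) complete, as the public announcement fragment of AMLFI1 (Theorems~\ref{soundness} and \ref{completeness}). So for each item it suffices to show that the formula is valid, i.e.\ designated at every state of every model. For the two strong equivalences (items 2 and 3) we check, as in the remark following Theorem~\ref{completenesslfi1}, that both sides receive the same truth value at every state, since $\vartheta(\phi\equiv\psi)\in D$ iff $\vartheta(\phi)=\vartheta(\psi)$. Throughout we split on whether $\vartheta_s(\alpha)\in D$ or $\vartheta_s(\alpha)=0$.

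For item 1, suppose first that $\vartheta_s(\alpha)=0$. Then $\vartheta_s([\alpha]\beta)=1$, so $\vartheta_s(\neg[\alpha]\beta)=0$ and the implication takes value $1$. Suppose instead that $\vartheta_s(\alpha)\in D$. Then $\vartheta_s(\circ\alpha)=0$ when $\vartheta_s(\alpha)=\half$, and $\vartheta_s(\neg\alpha)=0$ when $\vartheta_s(\alpha)=1$. In both cases $\vartheta_s(\sim\alpha)=0$, so $\vartheta_s(\neg\sim\alpha)=1$, and the implication is again $1$. Hence item 1 is valid, and completeness yields $\vdash$.

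For item 2, if $\vartheta_s(\alpha)=0$, both sides equal $1$, using the $\rightarrow$ table. If $\vartheta_s(\alpha)\in D$, the table gives $\vartheta_s(\alpha\rightarrow\gamma)=\vartheta_s(\gamma)$, because both rows $1$ and $\half$ of $\rightarrow$ are the identity. So both sides equal $\vartheta_s([\alpha]\beta)$.

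For item 3, if $\vartheta_s(\alpha)=0$, both sides are $1$: the left because the announcement fails, the right because it is $1\rightarrow 1$. If $\vartheta_s(\alpha)\in D$, then $\vartheta_s([\alpha]\delta)=\vartheta'_{s}(\delta)$ in the updated model for every $\delta$. Since $\rightarrow$ is truth-functional there, the left side equals $\vartheta'_s(\beta)\rightarrow\vartheta'_s(\gamma)$, which is the value of the right side.

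Alternatively, one could argue syntactically. For item 3, unfold $\beta\rightarrow\gamma$ as $\neg(\neg\sim\beta\land\neg\gamma)$ with $\sim\beta=\neg\beta\land\neg\bullet\beta$. Then push $[\alpha]$ inward using PA2--PA4 and RE, and simplify the resulting $\alpha\rightarrow\cdots$ prefixes using item 2 and Lemma~\ref{equivalences}. The main obstacle on that route is the negation case: PA2 introduces $\alpha\rightarrow\neg[\alpha]\beta$, and a negated announcement does not distribute naively. This is exactly where item 1 is needed, to handle $\neg[\alpha]$ under the condition $\alpha$. Because of this bookkeeping, I would rely on the semantic argument via completeness, which sidesteps it entirely.
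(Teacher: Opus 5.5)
Your argument is correct, but it takes a genuinely different route from the paper.

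\textbf{The paper's route.} It stays inside the calculus. Item~1 is proved by induction on $\beta$: PA1--PA5 and RE rewrite $\neg[\alpha]\beta$ into a formula with $\neg\sim\alpha$ as a conjunct, using the induction hypothesis and A8 in the $\land$ case. Item~2 is left to the reader. Item~3 is an explicit chain of strong equivalences (unfolding $\rightarrow$, PA2, PA4, De Morgan, distributivity, A9, item~2), closed off by congruence.

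\textbf{Your route.} You use that $\phi\equiv\psi$ is designated iff both sides take the same value. You then check all three formulas by splitting on $\vartheta_s(\alpha)\in D$ versus $\vartheta_s(\alpha)=0$, and appeal to completeness. Your table computations are right: $\vartheta_s(\neg\sim\alpha)=1$ for designated $\alpha$, and $x\rightarrow y=y$ for designated $x$. The appeal to completeness is not circular, since the reduction lemma (Lemma~\ref{completenesslemma}) never uses the present lemma. You should add one sentence: applied to public announcements, the reduction uses only PA1--PA5 and RE and stays within the announcement language. That gives completeness for the PALFI1 calculus itself, not merely for AMLFI1.

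\textbf{What each buys.} Your route is shorter and uniform.
\begin{itemize}
\item It covers $\beta$ containing nested announcements at no extra cost. The paper's induction for item~1 has no case $\beta=[\delta]\lambda$ and would need a detour through the reduction and RE.
\item The paper's item-3 chain invokes a ``classical equivalence for DEL'', $[\alpha]\sim\beta\equiv(\alpha\rightarrow\sim[\alpha]\beta)$. That step itself needs exactly the $\neg\sim\alpha$ bookkeeping you anticipated.
\end{itemize}
The paper's route buys independence from semantics, which is the point of the section. The lemma feeds a purely syntactic proof, in the style of \cite{WC13}, that the composition axiom and RE are mutually admissible. Once completeness of both calculi is available, their coincidence follows immediately anyway. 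So your argument is fully valid for this lemma, but it makes the section's syntactic machinery redundant.
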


\begin{proof}

   \textbf{(1)} The proof works by induction on $\beta$.
\\\noindent\textbf{Base Case.} \\\noindent 1. $\vdash$ $\neg[\alpha]p \equiv \neg(\alpha \rightarrow p)$ [PA1, RE]
\\\noindent 2. $\vdash$ $\neg(\alpha \rightarrow p) \equiv \neg$$\sim$$\alpha \land \neg p$ [Def. of $\rightarrow$]
\\\noindent 3. $\vdash$ $\neg[\alpha]p$ $\rightarrow$ $\neg$$\sim$$\alpha$ [Transitivity of $\equiv$, A4]
\\\noindent\textbf{Inductive Hypothesis.} Let $\gamma$ be such that $\gamma$ is a subformula of $\beta$. Then, $\vdash$ $\neg[\alpha]\gamma$ $\rightarrow$ $\neg$$\sim$$\gamma$.
\\\noindent\textbf{Inductive Case.} 
\\\noindent$\bm{(\beta = \neg\gamma)}$ 
\\\noindent 1. $\vdash$ $\neg[\alpha]\neg\gamma$ $\equiv$ $\neg(\alpha \rightarrow \neg[\alpha]\gamma)$ [PA2, RE]
\\\noindent 2. $\vdash$ $\neg(\alpha \rightarrow \neg[\alpha]\gamma) \equiv \neg$$\sim$$\alpha \land [\alpha]\gamma$ [RE, def. of $\rightarrow$]
\\\noindent 3. $\vdash$ $\neg$$[\alpha]\neg\gamma$ $\rightarrow \neg$$\sim$$\alpha$ [A4, transitivity of $\rightarrow$]
\\\noindent$\bm{(\beta = \bullet\gamma)}$ Similar to the case of negation.
\\\noindent$\bm{(\beta = K_i\gamma)}$ Similar to the case of negation.
\\\noindent$\bm{(\beta = \lambda \land \gamma)}$ 
\\\noindent 1. $\vdash$ $\neg[\alpha](\lambda \land \gamma) \equiv \neg([\alpha]\lambda \land [\alpha]\gamma)$ [PA4, RE]
\\\noindent 2. $\vdash$ $\neg([\alpha]\lambda \land [\alpha]\gamma)$ $\equiv$ $\neg[\alpha]\lambda \lor \neg[\alpha]\gamma$ [De Morgan Law]
\\\noindent 3. $\vdash$ $\neg[\alpha]\lambda \rightarrow \neg$$\sim$$\alpha$ [I.H.]
\\\noindent 4. $\vdash$ $\neg[\alpha]\gamma \rightarrow \neg$$\sim$$\alpha$ [I.H.]
\\\noindent 5. $\vdash$ $\neg[\alpha](\lambda \land \gamma)$ $\rightarrow$ $\neg$$\sim$$\alpha$ [2, 3, 4, A8]

\noindent\textbf{(2)} The elementary proof is left to the reader.
\weg{
\textbf{(2)} ($\Leftrightarrow$)
    \\\noindent 1. $\vdash$ $[\alpha]\beta \rightarrow (\alpha \rightarrow [\alpha]\beta)$ [A1]
    \\\noindent 2. $\vdash$ $\sim$$\alpha$ $\rightarrow$ $[\alpha]\beta$ [Classical Validity for DEL]
    \\\noindent 3. $\vdash$ $[\alpha]\beta \rightarrow [\alpha]\beta$ [Theorem of Positive Classical Logic]
    \\\noindent 4. $\vdash$ ($\sim$$\alpha$ $\lor$ $[\alpha]\beta$) $\rightarrow$ $[\alpha]\beta$ [2, 3, A8]
    \\\noindent 5. $\vdash$ ($\alpha \rightarrow [\alpha]\beta) \rightarrow 
    [\alpha]\beta$ [Def. of $\rightarrow$, RE]
\\\noindent 6. $\vdash$ ($\alpha \rightarrow [\alpha]\beta) \leftrightarrow 
    [\alpha]\beta$ [1,5, def. of $\leftrightarrow$]

    $(\neg\Leftarrow)$
\\\noindent 1. $\vdash$ $\neg(\alpha \rightarrow [\alpha]\beta)$ $\equiv$ $\neg\neg(\neg$$\sim$$\alpha \land \neg[\alpha]\beta)$ [Def. of $\rightarrow$, RE]
\\\noindent 2. $\vdash$ $\neg\neg(\neg$$\sim$$\alpha \land \neg[\alpha]\beta) \equiv (\neg$$\sim$$\alpha \land \neg[\alpha]\beta)$ [Double negation]
    \\\noindent 3. $\vdash$ $\neg(\alpha \rightarrow [\alpha]\beta)$ $\rightarrow$ $\neg[\alpha]\beta$ [A4, transitivity of $\rightarrow$]

$(\neg\Rightarrow)$
\\\noindent 1. $\vdash$ $\neg[\alpha]\beta \rightarrow \neg[\alpha]\beta$ [Theorem of Positive Classical Logic (PCL)]
\\\noindent 2. $\vdash$ $\neg[\alpha]\beta \rightarrow \neg$$\sim$$\alpha$ [Lemma \ref{properties}.1]
\\\noindent 3. $\vdash$ $\neg[\alpha]\beta \rightarrow \neg$$\sim$$\alpha \land \neg[\alpha]\beta$ [1, 2, PCL]
\\\noindent 4. $\vdash$ $\neg[\alpha]\beta \rightarrow \neg(\alpha \rightarrow [\alpha]\beta)$ [Def. of $\rightarrow$, RE]

}\\
    \textbf{(3)} \noindent 
   \weg{ 1. $\vdash$ $[\alpha](\beta \rightarrow \gamma)$ $\equiv$ $[\alpha]\neg(\neg$$\sim\beta \land \neg\gamma)$ [Def. of $\rightarrow$, RE]
    \\\noindent 2. $\vdash$ $[\alpha]\neg(\neg$$\sim$$\beta \land \neg\gamma)$ $\equiv$ $\alpha \rightarrow \neg[\alpha](\neg$$\sim$$\beta \land \neg\gamma)$ [PA2]
    \\\noindent 3. $\vdash$ $\alpha \rightarrow \neg[\alpha](\neg$$\sim$$\beta \land \neg\gamma)$ $\equiv$ $\alpha$ $\rightarrow$ $\neg([\alpha]\neg$$\sim$$\beta \land [\alpha]\neg\gamma)$ [PA4, RE]
    \\\noindent 4. $\vdash$ $\alpha$ $\rightarrow$ $\neg([\alpha]\neg$$\sim$$\beta \land [\alpha]\neg\gamma)$ $\equiv$ $\alpha$ $\rightarrow$ $\neg((\alpha \rightarrow \neg[\alpha]\sim$$\beta) \land (\alpha \rightarrow \neg[\alpha]\gamma))$ [PA2, RE]
    \\\noindent 5. $\vdash$ $\alpha$ $\rightarrow$ $\neg((\alpha \rightarrow \neg[\alpha]\sim$$\beta) \land (\alpha \rightarrow \neg[\alpha]\gamma))$ $\equiv$ $\alpha \rightarrow \neg(\alpha \rightarrow (\neg[\alpha]$$\sim$$\beta \land \neg[\alpha]\gamma))$ [PCL, RE]
    \\\noindent \textcolor{red}{6. $\vdash$ $\alpha \rightarrow \neg(\alpha \rightarrow (\neg[\alpha]$$\sim$$\beta \land \neg[\alpha]\gamma))$ $\equiv$ ($\alpha$ $\rightarrow$ ($[\alpha]$$\sim$$\beta$ $\lor$ $[\alpha]\gamma$)) [Def. of $\rightarrow$, RE]}
    
    \\\noindent 7. $\vdash$ ($\alpha$ $\rightarrow$ ($[\alpha]$$\sim$$\beta$ $\lor$ $[\alpha]\gamma$)) $\equiv$ $\alpha \rightarrow (\sim$$[\alpha]\beta \lor [\alpha]\gamma$) [Classical Validity for DEL, RE]
    \\\noindent 8. $\vdash$ $\alpha \rightarrow (\sim$$[\alpha]\beta \lor [\alpha]\gamma$) $\equiv$ $\alpha \rightarrow ([\alpha]\beta \rightarrow [\alpha]\gamma)$  [Def. for $\rightarrow$]
    \\\noindent 9. $[\alpha](\beta \rightarrow \gamma) \equiv \alpha \rightarrow ([\alpha]\beta \rightarrow [\alpha]\gamma)$ [1-8, transitivity of $\equiv$]
    \\\noindent 10. $[\alpha](\beta \rightarrow \gamma) \equiv \alpha \rightarrow ([\alpha](\beta \rightarrow \gamma))$ [Lemma \ref{properties}.2]
    \\\noindent 11. $\alpha \rightarrow ([\alpha](\beta \rightarrow \gamma))$ $\equiv$ $\alpha \rightarrow ([\alpha]\beta \rightarrow [\alpha]\gamma)$ [Transitivity of $\equiv$]
    \\\noindent 12. $[\alpha](\beta \rightarrow \gamma) \equiv [\alpha]\beta \rightarrow [\alpha]\gamma$ [RE]
}

\noindent 1. $\vdash [\alpha](\beta \rightarrow \gamma) \equiv [\alpha]\neg(\neg\sim\beta \land \neg\gamma)$ [Def. of $\rightarrow$, RE]
\\2. $\vdash [\alpha]\neg(\neg\sim\beta \land \neg\gamma) \equiv \alpha \rightarrow \neg[\alpha](\neg\sim\beta \land \neg\gamma)$ [PA2]
\\3. $\vdash \alpha \rightarrow \neg[\alpha](\neg\sim\beta \land \neg\gamma) \equiv \alpha \rightarrow \neg([\alpha]\neg\sim\beta \land [\alpha]\neg\gamma)$ [PA4, RE]
\\4. $\vdash \alpha \rightarrow \neg([\alpha]\neg\sim\beta \land [\alpha]\neg\gamma) \equiv \alpha \rightarrow (\neg[\alpha]\neg\sim\beta \lor \neg[\alpha]\neg\gamma)$ [De Morgan Law, RE] 
\\5. $\vdash \alpha \rightarrow (\neg[\alpha]\neg\sim\beta \lor \neg[\alpha]\neg\gamma) \equiv \alpha \rightarrow (\neg(\alpha \rightarrow \neg[\alpha]\sim\beta) \lor (\neg(\alpha \rightarrow \neg[\alpha]\gamma)))$ [PA2, RE]
\\6. $\vdash \alpha \rightarrow (\neg(\alpha \rightarrow \neg[\alpha]\sim\beta) \lor (\neg(\alpha \rightarrow \neg[\alpha]\gamma))) \equiv \alpha \rightarrow (\neg(\sim\alpha \lor \neg[\alpha]\sim\beta) \lor \neg(\sim\alpha \lor \neg[\alpha]\gamma)$ [Definition of $\rightarrow$]
\\7. $\vdash  \alpha \rightarrow (\neg(\sim\alpha \lor \neg[\alpha]\sim\beta) \lor \neg(\sim\alpha \lor \neg[\alpha]\gamma)) \equiv \alpha \rightarrow ((\neg\sim\alpha \land [\alpha]\sim\beta) \lor (\neg\sim\alpha \land [\alpha]\gamma))$ [De Morgan Law, RE]
\\8. $\vdash \alpha \rightarrow ((\neg\sim\alpha \land [\alpha]\sim\beta) \lor (\neg\sim\alpha \land [\alpha]\gamma)) \equiv \alpha \rightarrow ((\neg\sim\alpha \land (\alpha \rightarrow \sim[\alpha]\beta)) \lor (\neg\sim\alpha \land [\alpha]\gamma))$ [Classical equivalence for DEL, RE]
\\9. $\vdash \alpha \rightarrow ((\neg\sim\alpha \land (\alpha \rightarrow \sim[\alpha]\beta)) \lor (\neg\sim\alpha \land [\alpha]\gamma)) \equiv \alpha \rightarrow ((\neg\sim\alpha \land (\sim \alpha \lor \sim[\alpha]\beta)) \lor (\neg\sim\alpha \land [\alpha]\gamma))$ [Definition of $\rightarrow$] 
\\10. $\vdash \alpha \rightarrow ((\neg\sim\alpha \land (\sim \alpha \lor \sim[\alpha]\beta)) \lor (\neg\sim\alpha \land [\alpha]\gamma)) \equiv \alpha \rightarrow ((\neg\sim\alpha \land \sim[\alpha]\beta) \lor (\neg\sim\alpha \land [\alpha]\gamma))$ [Distributivity of $\land$, RE]
\\11. $\vdash \alpha \rightarrow ((\neg\sim\alpha \land \sim[\alpha]\beta) \lor (\neg\sim\alpha \land [\alpha]\gamma)) \equiv \alpha \rightarrow ((\neg\sim\alpha \land (\sim[\alpha]\beta \lor [\alpha]\gamma))$ [Distributivity of $\land$, RE]
\\12. $\vdash \alpha \rightarrow ((\neg\sim\alpha \land (\sim[\alpha]\beta \lor [\alpha]\gamma)) \equiv \sim \alpha \lor (\neg\sim\alpha \land ([\alpha]\beta \rightarrow [\alpha]\gamma))$ [Definition of $\rightarrow$]
\\13. $\vdash \sim \alpha \lor (\neg\sim\alpha \land ([\alpha]\beta \rightarrow [\alpha]\gamma) \equiv (\sim \alpha \lor \neg\sim\alpha) \land (\sim \alpha \lor ([\alpha]\beta \rightarrow [\alpha]\gamma))$ [Distributivity of $\lor$]
\\14. $\vdash (\sim \alpha \lor \neg\sim\alpha) \land (\sim \alpha \lor ([\alpha]\beta \rightarrow [\alpha]\gamma)) \equiv \sim\alpha \lor ([\alpha]\beta \rightarrow [\alpha]\gamma)$ [A9, RE]
\\15. $\vdash \sim\alpha \lor ([\alpha]\beta \rightarrow [\alpha]\gamma) \equiv \alpha \rightarrow ([\alpha]\beta \rightarrow [\alpha]\gamma)$ [Definition of $\rightarrow$]
\\16. $[\alpha](\beta \rightarrow \gamma) \equiv \alpha \rightarrow [\alpha](\beta \rightarrow \gamma)$ [Lemma \ref{properties}.2]
\\17. $\alpha \rightarrow [\alpha](\beta \rightarrow \gamma) \equiv \alpha \rightarrow ([\alpha]\beta \rightarrow [\alpha]\gamma)$ [Transitivity of $\equiv$, 15, 16]
\\18. $[\alpha](\beta \rightarrow \gamma) \equiv [\alpha]\beta \rightarrow [\alpha]\gamma$ [Congruence property of $\equiv$, 17]
\end{proof}


\begin{lemma}\label{composition}
$[\alpha][\beta]\gamma$ $\equiv$ $[\alpha \land [\alpha]\beta]\gamma$ is admissible in PALFI1.
\end{lemma}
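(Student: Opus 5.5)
Since PALFI1 has RE and the reduction axioms PA1--PA5, I will show that $\vdash [\alpha][\beta]\gamma \equiv [\alpha\land[\alpha]\beta]\gamma$ for every $\gamma$. The proof is by induction on $\gamma$, with $\alpha$ and $\beta$ held fixed. Ordinary subformula induction is not enough for the case where $\gamma$ is itself an announcement. To handle it I will induct on the complexity $c$ of Definition~\ref{comp} (specialised to public announcements), or equivalently I will first use Lemma~\ref{completenesslemma} to replace $\gamma$ by an announcement-free $\gamma'$ with $\vdash\gamma\equiv\gamma'$. By RE applied to both sides, it then suffices to treat $\gamma\in\mathcal{L}_K$, and I will do plain structural induction there.

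\textbf{Base case} $\gamma=p$. The left side reduces by PA1 (twice, using RE to rewrite under $[\alpha]$) to $[\alpha](\beta\rightarrow p)$. Lemma~\ref{properties}.3 turns this into $[\alpha]\beta\rightarrow[\alpha]p$, and PA1 again gives $[\alpha]\beta\rightarrow(\alpha\rightarrow p)$. The right side is $(\alpha\land[\alpha]\beta)\rightarrow p$ by PA1. These two agree by Lemma~\ref{equivalences}.2 and 3 (exportation and permutation), which are strong equivalences in LFI1 and so may be chained with transitivity of $\equiv$.

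\textbf{Inductive cases.} For $\neg$, I apply PA2 inside and outside. The left side becomes $[\alpha](\beta\rightarrow\neg[\beta]\delta)$, then $[\alpha]\beta\rightarrow[\alpha]\neg[\beta]\delta$ by Lemma~\ref{properties}.3, then $[\alpha]\beta\rightarrow(\alpha\rightarrow\neg[\alpha][\beta]\delta)$ by PA2. The right side is $(\alpha\land[\alpha]\beta)\rightarrow\neg[\alpha\land[\alpha]\beta]\delta$. The induction hypothesis plus RE under $\neg$ lets me rewrite the inner part, and Lemma~\ref{equivalences}.2--3 finish the case. The case for $\bullet$ is identical, using PA3. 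For $\land$, PA4 on both levels and the induction hypothesis via the congruence rule $\land R$ suffice. For $K_i$, PA5 gives $[\alpha](\beta\rightarrow K_i[\beta]\delta)$, then $[\alpha]\beta\rightarrow(\alpha\rightarrow K_i[\alpha][\beta]\delta)$ by Lemma~\ref{properties}.3 and PA5. On the right I get $(\alpha\land[\alpha]\beta)\rightarrow K_i[\alpha\land[\alpha]\beta]\delta$, and the induction hypothesis under $K_i$ (RE) closes the case.

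\textbf{Main obstacle.} The delicate point is justifying the use of Lemma~\ref{properties}.3 and the exportation steps as genuine strong equivalences $\equiv$, and not mere $\leftrightarrow$, since only $\equiv$ is a congruence. In particular, in the negated direction the conditional $\rightarrow$ is built from $\sim$. I therefore need to check that Lemma~\ref{equivalences} is applied with exactly the right shapes; Lemma~\ref{properties}.2 is also available to absorb a redundant antecedent $\alpha$ where needed. A second subtlety is the reduction to announcement-free $\gamma$: RE must be applied inside $[\alpha][\beta]\cdot$ and $[\alpha\land[\alpha]\beta]\cdot$. This is legitimate because RE is stated for arbitrary contexts, including the scope of announcements.
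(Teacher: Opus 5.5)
Your proof is correct and follows the paper's argument essentially case by case. It uses the same chain of PA1/PA2/PA3/PA4/PA5, Lemma~\ref{properties}.3, exportation and permutation from Lemma~\ref{equivalences}, and the induction hypothesis pushed through RE. The one real difference is the case $\gamma=[\delta]\lambda$, which the paper dispatches in one sentence by reducing with PA1--PA5. Your approach first replaces $\gamma$ by an announcement-free $\gamma'$ using Lemma~\ref{completenesslemma} and RE. That is a cleaner way to handle this step and to make the induction well-founded.
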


\begin{proof}
    The proofs works by induction on the structure of $\gamma$.

    \noindent\textbf{Base Case.} $\gamma = p$ 
    
    \noindent1. $\vdash$ $[\alpha][\beta]p$ $\equiv$ $[\alpha](\beta \rightarrow p)$ [RE, PA1]
   \\2. $\vdash$ $[\alpha]$($\beta \rightarrow p$) $\equiv$ $[\alpha]\beta \rightarrow [\alpha]p$
 [Lemma \ref{properties}.3]
 \\3. $\vdash$ $[\alpha]\beta \rightarrow [\alpha]p$ $\equiv$ $[\alpha]\beta \rightarrow (\alpha \rightarrow p)$ [RE, PA1]
 \\4. $\vdash$ $[\alpha]\beta \rightarrow (\alpha \rightarrow p)$ $\equiv$ $\alpha \rightarrow ([\alpha]\beta \rightarrow p)$ [RE, Lemma \ref{equivalences}.3] 
\\5. $\vdash$ $\alpha \rightarrow ([\alpha]\beta \rightarrow p)$ $\equiv$ ($\alpha$ $\land$ $[\alpha]\beta$) $\rightarrow p$ [Lemma \ref{equivalences}.2]
\\6. $\vdash$ $(\alpha \land [\alpha]\beta) \rightarrow p \equiv [\alpha \land [\alpha]\beta]p$ [PA1]
\\7. $\vdash$ $[\alpha][\beta]p$ $\equiv$ $[\alpha \land [\alpha]\beta]p$ [1-6, transitivity of $\equiv$]
\\\noindent\textbf{Inductive Hypothesis.} For all formulas $\delta$ such that c($\delta$) $<$ c($\gamma$), $\vdash$ $[\alpha][\beta]\delta \equiv [\alpha \land [\alpha]\beta]\delta$.
\\\noindent\textbf{Inductive Case.} 
\\$\bm{(\gamma = \neg\delta)}$\\ 1. $\vdash$ $[\alpha][\beta]\neg\gamma$ $\equiv$ $[\alpha](\beta \rightarrow \neg[\beta]\gamma)$ [PA2]
\\2. $\vdash$ $[\alpha](\beta \rightarrow \neg[\beta]\gamma) \equiv ([\alpha]\beta \rightarrow [\alpha]\neg[\beta]\gamma)$ [Lemma \ref{properties}.3]
\\3. $\vdash$ $[\alpha]\neg[\beta]\gamma \equiv (\alpha \rightarrow \neg[\alpha][\beta]\gamma$) [PA2]
\\4. $\vdash$ $[\alpha]\beta \rightarrow (\alpha \rightarrow \neg[\alpha][\beta]\gamma)$ $\equiv$ ($\alpha[\beta] \rightarrow [\alpha]\neg[\beta]\gamma)$ [RE, 2, 3]
\\5. $\vdash$ $[\alpha]\beta$ $\rightarrow$ ($\alpha$ $\rightarrow$ $\neg[\alpha][\beta]\gamma$) $\equiv$ $((\alpha \land [\alpha]\beta) \rightarrow \neg[\alpha][\beta]\gamma$) [Lemma \ref{equivalences}.2]
\\6. $\vdash$ $[\alpha][\beta]\gamma \equiv [\alpha \land [\alpha]\beta]\gamma$ [I.H.]
\\7. $\vdash$ $((\alpha \land [\alpha]\beta) \rightarrow \neg[\alpha][\beta]\gamma$) $\equiv$ $((\alpha \land [\alpha]\beta) \rightarrow \neg[\alpha \land [\alpha]\beta]\gamma$) [RE, 5, 6]
\\8. $\vdash$ $((\alpha \land [\alpha]\beta) \rightarrow \neg[\alpha \land [\alpha]\beta]\gamma$) $\equiv$ $[\alpha \land [\alpha]\beta]\neg\gamma$ [PA2]
\\9. $\vdash$ $[\alpha][\beta]\neg\gamma \equiv [\alpha \land [\alpha]\beta]\gamma$ [1-8, transitivity of $\equiv$]
\\$\bm{(\gamma = K_a\delta)}$ Similar to the case of negation.
\\$\bm{(\gamma = \bullet\delta)}$ Similar to the case of negation. 
\\$\bm{(\gamma = \delta \land \lambda)}$ \\1. $\vdash$ $[\alpha][\beta](\delta \land \lambda)$ $\equiv$ $[\alpha]$($[\beta]\delta \land [\beta]\lambda)$ [PA4, RE]
\\2. $\vdash$ $[\alpha]$($[\beta]\delta \land [\beta]\lambda)$ $\equiv$ $[\alpha][\beta]\delta \land [\alpha][\beta]\lambda$ [PA3]
\\3. $\vdash$ $[\alpha][\beta]\delta \land [\alpha][\beta]\lambda \equiv [\alpha \land [\alpha]\beta]\delta \land [\alpha \land [\alpha]\beta]\lambda$ [I.H., RE]
\\4. $\vdash$ $[\alpha \land [\alpha]\beta]\delta \land [\alpha \land [\alpha]\beta]\lambda$ $\equiv$ $[\alpha \land [\alpha]\beta](\delta \land \lambda)$ [PA4]
\\5. $\vdash$ $[\alpha][\beta](\delta \land \lambda) \equiv [\alpha \land [\alpha]\beta](\delta \land \lambda)$ [1-4, transitivity of $\equiv$]
\\$\bm{(\gamma = [\delta]\lambda)}$ For formulas with a dynamic modality, the procedure is straightforward using axioms PA1-PA5 to reduce $\gamma$ into a formula in which the dynamic modality is not the main operator and the same reasoning in the previous items. 
\end{proof}  

\begin{lemma}\label{admissibility of comp}
$\theta$ $\equiv$ $\gamma$ $\Rightarrow$ $\alpha$ $\equiv$ $\alpha$($\theta/\gamma$) is admissible in PALFI1$_C$.
\end{lemma}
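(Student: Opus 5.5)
We prove admissibility by induction on the structure of $\alpha$, assuming throughout that $\vdash_{PALFI1_C} \theta \equiv \gamma$. The axioms of PALFI1$_C$ are PA1--PA5, without RE, together with the composition axiom $[\alpha][\beta]\delta \equiv [\alpha\land[\alpha]\beta]\delta$. We may use that the strong equivalence $\equiv$ is a congruence for the static connectives $\neg,\bullet,\land$ and $K_i$; these are the rules $\neg R$, $\bullet R$, $\land R$ of LFI1, plus the analogous rule for $K_i$ in S5LFI1 via Nec and the K-axioms, as recalled from \cite{bueno2025towards}. Hence the only genuinely new work concerns the announcement operator.

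\textbf{Base case.} Let $\alpha$ be an atom $p$. If $\theta \neq p$, then $p(\theta/\gamma)=p$ and reflexivity of $\equiv$ suffices. If $\theta = p$, then $p(\theta/\gamma)=\gamma$ and the hypothesis gives $p\equiv\gamma$ directly.

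\textbf{Static inductive cases.} The cases $\neg\beta$, $\bullet\beta$, $\beta\land\delta$ and $K_i\beta$ follow from the induction hypothesis and the corresponding congruence rule. If $\alpha$ is itself $\theta$, we are done immediately as in the base case.

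\textbf{Announcement case.} Let $\alpha = [\beta]\delta$, where the replacement may occur inside $\beta$, inside $\delta$, or both. By the induction hypothesis, $\vdash \beta\equiv\beta'$ and $\vdash \delta\equiv\delta'$, where the primes denote the replaced formulas. We therefore need two monotonicity facts in PALFI1$_C$:
\begin{itemize}
\item[(a)] $\vdash\delta\equiv\delta'$ implies $\vdash[\beta]\delta\equiv[\beta]\delta'$;
\item[(b)] $\vdash\beta\equiv\beta'$ implies $\vdash[\beta]\delta\equiv[\beta']\delta$.
\end{itemize}
We plan to obtain both from a reduction result for PALFI1$_C$, namely the outside-in reduction of \cite{ongaratto2025dynamics}. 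That reduction shows that every formula is provably $\equiv$ to an announcement-free formula via PA1--PA5 and composition, and it should do so without using RE.

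With this in hand, we argue by a subsidiary induction on the complexity $c(\delta)$, using the measure adapted to announcements in the style of Definition~\ref{comp}.
\begin{itemize}
\item For $\delta=p$, PA1 gives $[\beta]p\equiv(\beta\to p)$ and $[\beta']p\equiv(\beta'\to p)$. Since $\to$ is built from $\neg,\bullet,\land$, the static congruence yields $(\beta\to p)\equiv(\beta'\to p)$, and transitivity concludes.
\item For $\delta=\neg\delta_0$, PA2 rewrites both sides as $\beta\to\neg[\beta]\delta_0$ and $\beta'\to\neg[\beta']\delta_0$. The inner induction hypothesis gives $[\beta]\delta_0\equiv[\beta']\delta_0$, and the static congruence finishes.
\item The cases $\bullet$, $\land$ and $K_i$ are handled in the same way, using PA3, PA4 and PA5 respectively.
\item For $\delta=[\eta]\lambda$, we use the composition axiom to rewrite $[\beta][\eta]\lambda$ as $[\beta\land[\beta]\eta]\lambda$. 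By the inner induction hypothesis on $\eta$ (which has lower complexity), together with static congruence, $\beta\land[\beta]\eta \equiv \beta'\land[\beta']\eta$.
\end{itemize}
The last step then requires applying (b) to $\lambda$. The catch is that $\lambda$ sits under a new, larger announcement formula, so the induction must run on $c(\delta)$ with the measure weighting announcement depth, exactly as in the termination argument for the composition-based reduction in \cite{ongaratto2025dynamics}. Fact (a) is proved analogously: induct on $\delta$ via PA1--PA5 and composition, so that $[\beta]\delta$ and $[\beta]\delta'$ are reduced in parallel to static formulas that are provably $\equiv$.

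\textbf{Main obstacle.} We expect the hardest step to be the nested announcement case, where the induction measure must decrease even though composition enlarges the precondition. We will first try the measure $c([\beta]\delta)=(k+c(\beta))\cdot c(\delta)$, checking that $c([\beta\land[\beta]\eta]\lambda)<c([\beta][\eta]\lambda)$ for a suitable constant $k$, as in \cite{ongaratto2025dynamics}. Failing that, we will fall back on first reducing $\delta$ to an announcement-free formula $\delta^*$ with $\vdash\delta\equiv\delta^*$. The static congruence cannot carry this under $[\beta]$ without RE itself, so we instead prove (a) for the special case of announcement-free $\delta,\delta'$, where a plain structural induction suffices. Combining this special case with the reduction theorem, which gives $\vdash [\beta]\delta\equiv t([\beta]\delta)$, and applying transitivity then closes the general case.
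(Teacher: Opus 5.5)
Your split of the announcement case into (a) and (b) is correct. Moreover, (b) needs no special measure: a plain structural induction on $\delta$, with the pair of equivalent preconditions universally quantified, handles the composition case. This works because $\eta$ and $\lambda$ are subformulas of $[\eta]\lambda$, and the enlarged preconditions $\beta\land[\beta]\eta$, $\beta'\land[\beta']\eta$ are covered by the quantification.

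The genuine gap is (a), and it cannot be sidestepped. Already for $\alpha=[\beta]p$ with $\theta=p$ you need $\vdash[\beta']p\equiv[\beta']\gamma$ for an \emph{arbitrary} $\gamma$ with $\vdash p\equiv\gamma$, i.e.\ replacement of equivalents directly under an announcement. Since $p$ and $\gamma$ share no syntactic structure, ``reducing $[\beta]\delta$ and $[\beta]\delta'$ in parallel'' has no meaning. Likewise, a structural induction on an announcement-free $\delta$ tells you nothing about an unrelated announcement-free $\delta'$ (take $\delta=p$ and $\delta'=\neg\neg p$, or $\delta'=p\land K_i(\circ q\lor\bullet q)$). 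Even in the static case, (a) asserts that relativization ($p^\beta=\beta\rightarrow p$, $(\neg\varphi)^\beta=\beta\rightarrow\neg\varphi^\beta$, $(K_i\varphi)^\beta=\beta\rightarrow K_i\varphi^\beta$, and so on) preserves provable $\equiv$ in S5LFI1. That is a fact about derivations, not about the shape of formulas. Your fallback does not escape this either: passing from $[\beta]\delta$ to $[\beta]\delta^*$ is itself an instance of (a) with non-static $\delta$. Knowing $\vdash[\beta]\delta\equiv t([\beta]\delta)$ is of no use until you relate $t([\beta]\delta)$ to $[\beta]\delta^*$, which is the same problem again.

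What is missing is either a normality property of $[\beta]$ or a detour through the semantics.

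\emph{Syntactic route.} You would need admissibility of necessitation for announcements in PALFI1$_C$, proved by induction on \emph{derivations}: every axiom instance (including the S5LFI1 axioms, PA1--PA5 and composition) must remain derivable under $[\beta]$, and MP and Nec must be preserved. You would also need distribution $[\beta](\varphi\rightarrow\psi)\rightarrow([\beta]\varphi\rightarrow[\beta]\psi)$. The negative half $\neg[\beta]\theta\rightarrow\neg[\beta]\gamma$ then follows from PA2 and the PALFI1$_C$ analogue of Lemma~\ref{properties}.1.

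\emph{Semantic route.} This is far shorter. By soundness of PALFI1$_C$ \cite{ongaratto2025dynamics}, $\vdash\theta\equiv\gamma$ means that $\theta$ and $\gamma$ take the same value at every state of every model. Since restrictions of S5 models are again S5 models, a routine induction on $\alpha$ (over all models at once) gives $\vDash\alpha\equiv\alpha(\theta/\gamma)$. Completeness of PALFI1$_C$ then returns $\vdash\alpha\equiv\alpha(\theta/\gamma)$.

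For comparison, the paper's proof follows essentially your strategy: induction on $\alpha$, rewriting announcement formulas on both sides by PA1--PA5 or composition, then the induction hypothesis and transitivity. It never isolates (a), however. In its case $\alpha=[\beta]p$ it treats $(\beta\rightarrow p)(\theta/\gamma)\equiv([\beta]p)(\theta/\gamma)$ as an instance of PA1, which fails precisely when $\theta=p$. Your decomposition locates the hard step correctly, but the proposal does not yet contain an argument that closes it.
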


\begin{proof}
We assume $\vdash \theta \equiv \gamma$ throughout the proof.
\textbf{Base Case}. $\bm{(\alpha = p)}$ In that case, 
\\1.1 $\theta \neq p$. Then, $\alpha = \alpha(\theta/\gamma)$.
\\1.2 $\theta = p$. Then, p($\theta/\gamma$) = $\gamma$. Given that $\vdash \theta \equiv \gamma$, then $\vdash p \equiv p(\theta/\gamma)$.
\\\textbf{Inductive Hypothesis.} For all formulas $\beta$ s.t. $\beta$ is a subformula of $\alpha$, RE holds. 
 \\\textbf{1}. $\bm{(\alpha = \neg\beta)}$ By I.H., $\vdash \beta \equiv \beta(\theta/\gamma).$ Therefore, $\vdash \neg\beta \equiv \neg(\beta(\theta/\gamma))$. Given that $ \neg(\beta(\theta/\gamma)) = \neg\beta(\theta/\gamma)$, then $\vdash \neg\beta \equiv \neg\beta(\theta/\gamma)$. 
\\\textbf{2}. $\bm{(\alpha = \bullet\beta)}$ Similar to the case of negation.
\\\textbf{3}. $\bm{(\alpha = K_i\beta)}$ Similar to the case of negation.
    \\\textbf{4}. $\bm{(\alpha = \beta \land \gamma)}$. By I.H., $\vdash \beta \equiv \beta(\theta/\gamma)$ and $\vdash \gamma \equiv \gamma(\theta/\gamma)$ Therefore, $\vdash \beta \land \gamma \equiv \beta(\theta/\gamma) \land \gamma(\theta/\gamma)$. Given that $\beta(\theta/\gamma) \land \gamma(\theta/\gamma)$ = $(\beta \land \gamma)(\theta/\gamma)$, then $\vdash \beta \land \gamma \equiv (\beta \land \gamma)(\theta/\gamma)$. 
    \\\textbf{5}. $\bm{(\alpha =  [\beta]p)}$ By PA1, $\vdash [\beta]p \equiv \beta \rightarrow p$. Furthermore, by I.H., $\vdash (\beta \rightarrow p) \equiv (\beta \rightarrow p)(\theta/\gamma)$. By PA1, $\vdash (\beta \rightarrow p)(\theta/\gamma) \equiv [\beta]p(\theta/\gamma)$. By transitivity of $\equiv$, $\vdash$ $[\beta]p \equiv [\beta]p(\theta/\gamma)$.
    \\\textbf{6}. $\bm{(\alpha = [\beta]\neg\gamma)}$ Analogous reasoning to item 5 using axiom PA2. 
    \\\textbf{7}. $\bm{(\alpha = [\beta](\gamma \land \lambda))}$. Analogous reasoning to item 5 using axiom PA4.
    \\\textbf{8}. $\bm{(\alpha = [\beta]K_i\gamma)}$. Analogous reasoning to item 5 using axiom PA5.
    \\\textbf{9}. $\bm{(\alpha = [\beta]\bullet\gamma)}$. Analogous reasoning to item 5 using axiom PA3.
    \\\textbf{10}. $\bm{(\alpha = [\beta][\gamma]\lambda)}$. By Axiom of Composition, $\vdash [\beta][\gamma]\lambda \equiv [\beta \land [\beta]\gamma]\lambda$. By I.H., $\vdash [\beta \land [\beta]\gamma]\lambda \equiv [\beta \land [\beta]\gamma]\lambda(\theta/\delta)$. By Axiom of Composition, $\vdash [\beta][\gamma]\lambda(\theta/\delta) \equiv [\beta \land [\beta]\gamma]\lambda(\theta/\delta)$. By transitivity of $\equiv$, $\vdash [\beta][\gamma]\lambda \equiv [\beta][\gamma]\lambda(\theta/\delta)$.
\end{proof}

\begin{theorem}
    PALFI1 and PALFI1$_C$ coincide, that is, $\vdash_{PALFI1}$ $\alpha$ iff $\vdash_{PALFI1_C}$ $\alpha$, for every formula $\alpha$. 
\end{theorem}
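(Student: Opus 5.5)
The plan is to prove the two inclusions of theorem sets separately, each by induction on the length of derivations, using the two admissibility lemmas just established. Both systems contain the same base, namely S5LFI1 (axioms A1--A16, K, K$_1$--K$_3$, T, 4, 5 and their duals, MP and Nec), together with the reduction axioms PA1--PA5. They differ only in two respects: PALFI1 has the rule RE, while PALFI1$_C$ has the composition axiom $[\alpha][\beta]\gamma \equiv [\alpha\land[\alpha]\beta]\gamma$. If some reduction axioms are presented differently in \cite{ongaratto2025dynamics}, I will first check that each of them is derivable in the other system, using Lemma~\ref{properties} and the congruence properties of $\equiv$.

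For the left-to-right direction, suppose $\vdash_{PALFI1}\alpha$ and induct on the length of a PALFI1-derivation. Axioms shared by both systems are immediate. Applications of MP and Nec carry over because both rules belong to PALFI1$_C$. The only remaining case is an application of RE: from an earlier line $\theta\equiv\gamma$, which by the induction hypothesis is a PALFI1$_C$-theorem, we conclude $\alpha\equiv\alpha(\theta/\gamma)$. Lemma~\ref{admissibility of comp} says exactly that this rule is admissible in PALFI1$_C$, so the conclusion is again a PALFI1$_C$-theorem. Since RE is applied only to theorems (as stipulated for non-MP rules), admissibility is enough here.

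For the right-to-left direction, suppose $\vdash_{PALFI1_C}\alpha$ and induct on a PALFI1$_C$-derivation. The rules and shared axioms transfer directly. Each instance of the composition axiom is a PALFI1-theorem by Lemma~\ref{composition}. MP then preserves PALFI1-theoremhood.

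The main obstacle is not this bookkeeping but the adequacy of the two lemmas it relies on. Lemma~\ref{composition} is by induction on $\gamma$, and its case $\gamma=[\delta]\lambda$ only gestures at reducing the inner modality. To make this rigorous I would do the induction on the complexity $c$ of Definition~\ref{comp} (specialised to announcements) rather than on structure. Concretely, I would first use Lemma~\ref{completenesslemma} for PALFI1 to replace $\gamma$ by an announcement-free $\gamma'$ with $\vdash\gamma\equiv\gamma'$. Then I would prove composition for $\gamma'$, where the structural induction really is complete, and transfer it back with RE in PALFI1.

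Similarly, the case $[\beta][\gamma]\lambda$ in Lemma~\ref{admissibility of comp} applies the induction hypothesis to $[\beta\land[\beta]\gamma]\lambda$, which is not a subformula. I would therefore again induct on $c$ instead, checking that composition strictly lowers $c$ (Lemma~\ref{complemma}-style estimates). Alternatively, I would first reduce both sides to announcement-free formulas using the outside-in reduction available in PALFI1$_C$, where RE holds by the congruence property of S5LFI1. With these repairs in place, the two inductions above give $\vdash_{PALFI1}\alpha$ iff $\vdash_{PALFI1_C}\alpha$.
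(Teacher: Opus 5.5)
Your top-level argument is exactly the paper's proof: two inclusions by induction on derivations, with Lemma~\ref{composition} covering the composition axiom and Lemma~\ref{admissibility of comp} covering RE. Your repair of Lemma~\ref{composition} also works: reduce $\gamma$ to an announcement-free $\gamma'$ via Lemma~\ref{completenesslemma}, do the structural induction on $\gamma'$, and transfer back with RE, which PALFI1 has.

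The gap is in your repair of Lemma~\ref{admissibility of comp}. Its real defect is not the non-subformula induction hypothesis in the case $[\beta][\gamma]\lambda$. It is the case where the replaced formula $\theta$ is the entire scope of an announcement. Take $\alpha=[\beta]p$ with $\theta=p$. Then $\alpha(\theta/\gamma)=[\beta']\gamma$ with $\beta'=\beta(p/\gamma)$. The closing step ``by PA1, $\vdash(\beta\rightarrow p)(\theta/\gamma)\equiv([\beta]p)(\theta/\gamma)$'' now asks for $\vdash(\beta'\rightarrow\gamma)\equiv[\beta']\gamma$, which is not an instance of PA1 or of any other reduction axiom. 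The same happens for $\alpha=[\beta]\neg\delta$ with $\theta=\neg\delta$, and so on. What you need there is the rule ``from $\vdash\theta\equiv\gamma$ infer $\vdash[\beta]\theta\equiv[\beta]\gamma$'', i.e.\ RE under an announcement. That is an instance of the very statement being proved, and switching the induction measure to $c$ does not supply it.

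Your outside-in alternative has the same hole. The reducts of $\alpha$ and of $\alpha(\theta/\gamma)$ are announcement-free, but the second is not obtained from the first by any replacement. For example, $[q]p$ reduces to $q\rightarrow p$, whereas $[q]\neg\neg p$ reduces by PA2 twice to $q\rightarrow\neg(q\rightarrow\neg(q\rightarrow p))$. So the congruence of $\equiv$ in S5LFI1 does not link them.

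To close this gap you have two options.
\begin{itemize}
\item Go semantic. By soundness of PALFI1$_C$ and semantic soundness of RE (as in Theorem~\ref{soundness}), the two reducts are strongly equivalent on all models, and completeness of S5LFI1 then gives the derivation. But once soundness and completeness are in play, you need neither lemma. PALFI1 (a fragment of AMLFI1, Theorem~\ref{completeness}) and PALFI1$_C$ (by \cite{ongaratto2025dynamics}) are both sound and complete for the same semantics, so both theorem sets equal the set of validities.
\item Stay syntactic. You would first have to derive in PALFI1$_C$ announcement necessitation and $[\beta](\theta\equiv\gamma)\rightarrow([\beta]\theta\equiv[\beta]\gamma)$, each by its own induction on $c$.
\end{itemize}
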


\begin{proof}
    Straightforward from Lemmas \ref{admissibility of comp} and \ref{composition}.
\end{proof}

\weg{
\begin{definition}
   The translation t: $\mathcal{L}_{[]}$ $\mapsto$ $\mathcal{L}$ is defined as follows:
  \begin{center}
    t(p) = p\\
      t($\neg\alpha$) = $\neg$t($\alpha$)\\
      t($\alpha$ $\land$ $\beta$) = t($\alpha$) $\land$ t($\beta$)\\
       t($K_a$$\alpha$) = $K_a$t($\alpha$)\\
       t($\bullet\alpha$) = $\bullet$t($\alpha$)\\
      t($[\alpha]$p) = t($\alpha$ $\rightarrow$ p)\\
      t($[\alpha]\neg\beta$) = t($\alpha$ $\rightarrow$ $\neg[\alpha]\beta$)\\
      t($[\alpha]\bullet$$\beta$) = t($\alpha$ $\rightarrow$ $\bullet[\alpha]\beta$)\\
        t($[\alpha]$($\beta$ $\land$ $\gamma$)) = t($[\alpha]\beta$ $\land$ $[\alpha]\gamma$)\\
        t($[\alpha]K_a\beta$) = t($\alpha$ $\rightarrow$ $K_a$$[\alpha]\beta$)
    \end{center}
\end{definition}

    \begin{definition}
The complexity c: $\mathcal{L}_{[]}$ $\mapsto$ $\mathds{N}$ is defined as follows \cite[p.\ 13]{ongaratto2025dynamics}:

\begin{center}
    c(p) = 1
    \\c($\neg\alpha$) = 1 + c($\alpha$)
    \\c($\alpha$ $\land$ $\beta$) = 1 + max(c($\alpha$), c($\beta$))
    \\c($\alpha$ $\rightarrow$ $\beta$) = 1 + max(c($\alpha$), c($\beta$))
    \\c($K_a$$\alpha$) = 1 + c($\alpha$)
    \\c($\bullet\alpha$) = 1 + c($\alpha$)
    \\c($[\alpha]$$\beta$) = (4 + c($\alpha$))$\cdot$ c($\beta$)
\end{center}
\end{definition}

\begin{lemma}
   For all $\alpha$, $\beta$, 

  \[
\begin{array}{l@{\qquad}l}
\begin{array}{l l}
1. & c(\alpha) \ge c(\beta), \text{ if } \beta \in Sub(\alpha)\\
2. & c([\alpha]p) > c(\alpha \rightarrow p)\\
3. & c([\alpha]\neg\beta)
     > c(\alpha \rightarrow \neg[\alpha]\beta)
\end{array}
&
\begin{array}{l l}
4. & c([\alpha]\bullet\beta)
     > c(\alpha \rightarrow \bullet[\alpha]\beta)\\
5. & c([\alpha](\beta \land \gamma))
     > c([\alpha]\beta \land [\alpha]\gamma)\\
6. & c([\alpha]K_a\beta)
     > c(\alpha \rightarrow K_a[\alpha]\beta)
\end{array}
\end{array}
\]
\end{lemma}

\begin{proof}
    Presented in \cite[p.\ 14]{ongaratto2025dynamics}.
\end{proof}

\begin{lemma}
    For all formulas $\alpha$ $\in$ $\mathcal{L}_{[]}$ it is the case that
    \begin{center}
        c($\alpha$) $\geq$ c(t($\alpha$))
    \end{center}
\end{lemma}

\begin{proof}
    \noindent\textbf{Base Case} It follows from lemma 1.1.
    \\\noindent\textbf{Induction Hypothesis} For all $\beta$ such that c($\beta$) $<$ c($\alpha$): c($\beta$) $\geq$ c(t($\beta$)).
    \\\noindent\textbf{Induction Step} \\
    1. $\alpha$ = $\neg\beta$: by I.H., c($\beta$) $\geq$ c(t($\beta$)). Therefore, given that c($\neg\beta$) = c($\beta$) + 1, and c(t($\neg\beta$)) = c($\neg$t($\beta$)) = c(t($\beta$)) + 1, then c($\neg\beta$) $\geq$ c(t($\neg\beta$).\\
        2. $\alpha$ = $\bullet\beta$. By I.H., c($\beta$) $\geq$ c(t($\beta$)). Given that c($\bullet\beta$) = c($\beta$) + 1, the same reasoning from the case of negation follows.\\
        3. $\alpha$ = ($\beta$ $\land$ $\delta$): by I.H., c($\beta$) $\geq$ c(t($\beta$)). Suppose that max(c($\beta$), c($\delta$)) = c($\beta$). In that case, c($\beta$ $\land$ $\delta$) = c($\beta$) + 1. On the other hand, c(t($\beta$ $\land$ $\delta$)) = c(t($\beta$)) $\land$ c(t($\delta$)) = max(c(t($\beta$)), c(t($\delta$))) + 1 = c(t($\beta$)) + 1. Therefore, c($\alpha$) $\geq$ c(t($\alpha$)).\\
        4. $\alpha$ = $K_a$$\beta$: same reasoning as the negation case.\\
        5. $\alpha$ = $[\beta]p$. It follows from lemma 5.2 and definition 10.\\
        6. $\alpha$ = $[\beta]\neg\delta$. Lemma 5.3 and definition 10.\\
        7. $\alpha$ = $[\beta]\bullet$$\delta$. Lemma 5.4 and definition 10.\\
        8. $\alpha$ = $[\beta]$($\delta$ $\land$ $\gamma$). Lemma 5.5 and definition 10.\\
        9. $\alpha$ = $[\beta]$$K_a$$\delta$. Lemma 5.6 and definition 10.
\end{proof}

\begin{lemma}
    For all formulas $\alpha$ $\in$ $\mathcal{L}_{[]}$ it is the case that $\vdash$ $\alpha$ $\equiv$ $\beta$, for some $\beta$ $\in$ $\mathcal{L}$.
   
\end{lemma}

\begin{proof}
The only different case here is for formulas with more than one dynamic modality. Consider a formula $\alpha$ and its innermost subformula with a dynamic modality $\lambda$ = $[\beta]\gamma$. By i.h, $\vdash$ $[\beta]\gamma$ $\equiv$ $\beta$, in which $\lambda'$ = $\beta$ $\in$ $\mathcal{L}$. Therefore, by RE, $\vdash$ $\alpha$ $\equiv$ $\alpha$($\lambda$/$\lambda'$). Consider the successive replacement of all dynamic modalities in $\alpha$ by its respective translations. By RE, $\vdash$ $\alpha$ $\equiv$ $\alpha$($\lambda_1/\lambda'_1$)($\lambda_2/\lambda'_2$)[...]($\lambda_n/\lambda'_n$). In that case, $\alpha$($\lambda_1/\lambda'_1$)($\lambda_2/\lambda'_2$)[...]($\lambda_n/\lambda'_n$) $\in$ $\mathcal{L}$.

\end{proof}

\begin{theorem}{(Completeness)}
    $\vDash$ If $\alpha$, then $\vdash$ $\alpha$
\end{theorem}

\begin{proof}
Suppose that $\vDash \alpha$. From Lemma 7 and soundness, it follows that $\vDash \alpha \equiv \beta$, for some $\beta \in \mathcal{L}$. Therefore, $\vDash \beta$. Furthermore, given that $\beta \in \mathcal{L}$, then $\vdash_{S5LFI1} \beta$ by completeness of S5LFI1. Given that S5LFI1 is a subsystem of PALFI1, then $\vdash \beta$. From Lemma 7, it follows that $\vdash \alpha$.

\end{proof}
}

\weg{
\noindent\textbf{Example} Anne holds a clubs card. She can see her own card, even though Bill cannot see her card. A third person, Cath, may see or not Anne's card, and Cath will announce whether Anne has a clubs card. However, Cath can lie. Consider $c_a$: ``Anne holds a clubs card.'' How to initially model this situation? One way would be to model this with an S5 class of frames: consider the model $\mathcal{CARDS}$ = $\langle W, \sim_a, \sim_b, \vartheta\rangle$, such that\footnote{For the relation $\sim_a, \sim_b,$ consider the reflexive, transitive and symmetric closure of the sets.} $W = \{c_a, \dot{c}_a, \overline{c}_a\}$, $\sim_a = \{(c_a, c_a), (\dot{c}_a, \dot{c}_a)$, $(\overline{c}_a, \overline{c}_a)\}$
    ,$\sim_b = \{(c_a, \dot{c}_a), $$(\dot{c}_a, \overline{c}_a),$$ (c_a, c_a),$$ (\dot{c}_a, \dot{c}_a), (\overline{c}_a, \overline{c}_a)\}$
    , $\vartheta_{c_a}(c_a) = 1, \vartheta_{\dot{c_a}}(c_a) = \half, \vartheta_{\overline{c}_a}(c_a) = 0$.

The figure below represents schematically the model $\mathcal{CARDS}$:

\begin{center}
\begin{tikzpicture}[>=stealth]

\node (ca)      [black, anchor=south west] at (-2.06,1.25) {$c_a$};
\node (cdota)   [black, anchor=south west] at (-0.06,1.25) {$\dot{c}_a$};
\node (cbar)    [black, anchor=south west] at (1.94,1.25) {$\overline{c}_a$};


\draw (-1.50,1.50) -- (0.00,1.50) node[above, pos=0.5] {b}; 
\draw (0.50,1.50) -- (2.00,1.50) node[above, pos=0.5] {b};

\draw[->] (ca)    edge[loop above] node[above] {a,b} ();
\draw[->] (cdota) edge[loop above] node[above] {a,b} ();
\draw[->] (cbar)  edge[loop above] node[above] {a,b} ();

\end{tikzpicture}
\end{center}

In this situation, each state is reflexive, even the state with inconsistent information. What are the modeling consequences of that? If Cath gives inconsistent information to Anne and Bill, then $\dot{c}_a$ is the actual state. In that case, $\mathcal{CARDS}, \dot{c}_a \vDash K_a\bullet c_a$. Furthermore, it is also the case that $\mathcal{CARDS}, \dot{c}_a \vDash \hat{K}_a\bullet c_a$ and $\mathcal{CARDS}, \dot{c}_a \vDash \hat{K}_b\hat{K}_a\bullet c_a$. How is it possible that Anne knows inconsistent information about her card, if she sees that it is a clubs card? One way to model this situation preserving S5LFI1 is weakening the types of epistemic states possible in this situation. In that case, Anne's knowledge concerns the information available for her (both provided by her own source and external source). The fact that $K_a\bullet c_a$ means that Anne has conflicting information available. Another way out of this problem is to consider KB4LFI1 as the modeling logic. As we have seen in the case of S5LFI1, the problem concerns reflexivity: if that is the case, inconsistent states can always be considered a sensible possibility. If we take KB4LFI1, the class of frames does not need to be reflexive - KB4LFI1 is characterized by the transitive and symmetric class of frames.

\begin{remark}
    Notice that if a state is serial, then it is reflexive: suppose $w$ $\sim_i$ $w'$. In that case, by symmetry, $w' \sim_i w$. Thus, by transitivity, $w \sim_i w'$. Therefore, for connected states, the relation is an equivalence relation. The fact that $\dot{p}_a\dot{p}_b$ is inaccessible is no accident: in that case, the agents don't have arrows coming out of that world, nor reflexive ones (in S5LFI1, there would be a reflexive arrow even for this state). The intuitive idea is that the agents don't consider this state indiscernible from the others, because in that case the agents don't know their own cards. 
\end{remark}

Let us model Example 2 using KB4LFI1 as the basis. In that case, not all states must be reflexive. Thus, state $\dot{c}_a$ must not have an $a-$arrow; the idea is that inconsistent information must trigger the agent to revise his state. Consider $\mathcal{CARDS'} = \langle W', \sim'_a, \sim'_b, \vartheta'\rangle$, such that\footnote{For the relation $\sim'_a, \sim'_b$, consider the transitive and symmetric closure of the sets.} $W' = \{c_a, \dot{c}_a, \overline{c}_a\}$, $\sim'_a = \{(c_a, c_a), (\overline{c}_a, \overline{c}_a)\}$
    , $\sim'_b = \{(c_a, \dot{c}_a), (\dot{c}_a, \overline{c}_a), (c_a, c_a), (\dot{c}_a, \dot{c}_a), (\overline{c}_a, \overline{c}_a)\}$, $\vartheta'_{c_a}(c_a) = 1$, $\vartheta'_{\dot{c_a}}(c_a) = \half$, $\vartheta'_{\overline{c}_a}(c_a) = 0$.

This model is represented by the picture below:

\begin{center}
\begin{tikzpicture}[>=stealth]

\node (ca)      [black, anchor=south west] at (-2.06,1.25) {$c_a$};
\node (cdota)   [black, anchor=south west] at (-0.06,1.25) {$\dot{c}_a$};
\node (cbar)    [black, anchor=south west] at (1.94,1.25) {$\overline{c}_a$};


\draw (-1.50,1.50) -- (0.00,1.50) node[above, pos=0.5] {b}; 
\draw (0.50,1.50) -- (2.00,1.50) node[above, pos=0.5] {b};

\draw[->] (ca)    edge[loop above] node[above] {a,b} ();
\draw[->] (cdota) edge[loop above] node[above] {b} ();
\draw[->] (cbar)  edge[loop above] node[above] {a,b} ();

\end{tikzpicture}
\end{center}

In that situation, $\mathcal{CARDS}', \dot{c}_a \nvDash \hat{K}_a\bullet c_a$, because $\dot{c}_a$ is an isolated state for $a-$arrows. On the other hand, $\mathcal{CARDS}', \dot{c}_a \vDash K_a\alpha,$ for any formula $\alpha$. The agent becomes \textit{crazy} in that state, but he still does not consider it possible that a contradiction is the case. How to make sense of that? In our example, $a$ sees his card. Therefore, no further information in this context will make him prone to revise his knowledge. However, from the point of view of $b$, that is not the case, because $b$ is not sure whether the information he receives is true. If $a$ were given unreliable information, the situation would be different: he would be prone to revision after the announcement of contradictory information.

In a scenario where truthful information and faulty agents are mixed, some states are not considered possible by some agents, whereas they are perfectly reasonable for other agents. In that context, the requirement that every state is reflexive is a requirement too strict. The full story of KB4LFI1 will become clear with the introduction of UMLFI1, a logic designed for factual change. In that context, a crazy agent triggers factual change in the state. 
}

\section{UMLFI1}\label{section5}

\subsection{Update Model paraconsistent logic}
There are three different ways to update Kripke models: changing (restricting) their domain, changing (restricting) their relations, and changing the valuation of worlds. And all this, including higher-order uncertainty, that may increase the size of the domain, as when executing action models. Naturally, one expects the same variety for update extensions of modal paraconsistent logics. Public announcement logic and action model logic change the domain and the relations, but not the valuation of facts in a given world, and our paraconsistent AMLFI1 fits this pattern. We will now propose further extension of AMLFI1 called UMLFI1 that also allows factual change (also known as ontic change \cite[p.\ 1]{van2008semantic}) in action models. Such factual change is useful when modelling communicating databases where we encounter contradictory information, because in that way we can resolve local contradictions. Executing an action model with factual change represents resolving the contradiction by consulting a reliable source, and thus becoming consistent again. Our presentation of UMLFI1 closely follows \cite{van2008semantic}.

\begin{definition}[UMLFI1]
    UMLFI1 extends S5LFI1 with the axioms and rules of AMLFI1, except that axiom AM1 is now replaced by a novel axiom UM1: 
\[
UM1\ \ \ \  [U,e]p \equiv (\mathrm{pre}(e) \rightarrow \mathrm{post}(e)(p))
\]\end{definition}


The action model with factual change extends the previous action model with an additional parameter $\mathrm{post}: E \imp P \imp \mathcal{L}$ assigning to each action and each atomic formula {\em from a finite subset of set $P$} (for technical reasons, see \cite[p.\ 4]{van2008semantic} and the explanations on enumerating pointed actions models) a \emph{postcondition}, that is a formula in the language.
\begin{definition}[Update model]
    An update model for a finite set of agents $A$ and language $\mathcal{L}$ is a quadruple $U = \langle$E, \{$\sim_i$\}$_{i \in N}$, pre, post$\rangle$, where E is a finite non-empty set of actions, $\{\sim_i\}_{i \in N}$ is a set of equivalence relations for each agent $i \in N$, $\mathrm{pre}: E \imp \mathcal{L}$ is precondition function, and $\mathrm{post}: E  \imp P \imp \mathcal{L}$ is a {\em postcondition function}.
\end{definition}

\weg{\begin{definition}[Semantics for UMLFI1]
    Given epistemic state (M,s), M = $\langle$S, \{$\sim_i$\}$_{i \in N}$, $\vartheta$$\rangle$, and an update model U = $\langle$E, \{$\sim_i$\}$_{i \in N}$, pre, post$\rangle$, we extend the semantic definition of S5LFI1 with the additional conditions: 
\[
\begin{array}{l}
\vartheta_s([U,e]\alpha) = \begin{cases}
\vartheta_{s'}(\alpha) &\text{if (M,s)\lbrack U, e\rbrack(M',s')}\\
1 &\text{otherwise}
\end{cases}
\end{array}\]
\[\begin{array}{lll}
(M,s)\lbrack U, e\rbrack(M',s') & \text{iff} & \vartheta_s(pre(e)) \in D \text{ and } (M',s') = (M \otimes U, (s, e))
\end{array}\]
\end{definition}}
The semantics for UMLFI1 are similar to that for AMLFI1, except that UMLFI1 considers an update model instead of an action model.

\begin{definition}[Execution]
    Let $M$ be an epistemic model, $s$ $\in$ S, and let $U$ be an update model, with $e$ $\in$ E such that $(M,s) \vDash \mathrm{pre}(e)$. The result of executing $(U,e)$ in $(M,s)$ is the model $(M \otimes U, (s, e)) = \langle S', \{\sim_i'\}_{i \in N}, \vartheta'\rangle$, where
\[\begin{array}{lll}
    S' & = &  \{(t, f): \vartheta_t(pre(f)) \in D\} \\
    (t, f) \sim_i (u, g) & \text{ iff } & (t,f), (u, g) 
    \in S' \text{ and } t \sim_i u, f \sim_i g\\
   \vartheta_{(t,f)}'(p)  & = & \vartheta_{t}(post(f)(p))
\end{array}\]
\end{definition}
Soundness and completeness for UMLFI1 is shown as for AMLFI1, the only difference being that axioms AM1 and UM1 are different. Thus, we only prove soundness for UM1:

\begin{theorem}
$\vDash [U,e]p \equiv (\mathrm{pre}(e) \rightarrow \mathrm{post}(e)(p))$
\end{theorem}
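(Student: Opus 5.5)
We verify validity semantically, mirroring the soundness argument for AM1 in Theorem~\ref{soundness}. Fix an arbitrary model $M$, a state $s$, and a pointed update model $(U,e)$. We use the fact, noted after Definition~\ref{completenesslfi1}, that $\vartheta_s(\phi\equiv\psi)\in D$ iff $\vartheta_s(\phi)=\vartheta_s(\psi)$. It therefore suffices to show that $\vartheta_s([U,e]p)=\vartheta_s(\mathrm{pre}(e)\rightarrow\mathrm{post}(e)(p))$ for every valuation. We split into two cases according to whether the precondition is designated.

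\emph{Case 1: $\vartheta_s(\mathrm{pre}(e))=0$.} By the semantic clause for dynamic modalities, which UMLFI1 inherits from AMLFI1, we have $\vartheta_s([U,e]p)=1$. By the truth table of $\rightarrow$, an antecedent of value $0$ gives value $1$ regardless of the consequent. So $\vartheta_s(\mathrm{pre}(e)\rightarrow\mathrm{post}(e)(p))=1$, and the two values coincide.

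\emph{Case 2: $\vartheta_s(\mathrm{pre}(e))\in D$.} Then $(s,e)\in S'$ and $\vartheta_s([U,e]p)=\vartheta'_{(s,e)}(p)$. By the valuation clause of Execution, this equals $\vartheta_s(\mathrm{post}(e)(p))$. Reading off the table for $\rightarrow$, an antecedent of value $1$ or $\half$ makes $\alpha\rightarrow\beta$ take exactly the value of $\beta$. Hence $\vartheta_s(\mathrm{pre}(e)\rightarrow\mathrm{post}(e)(p))=\vartheta_s(\mathrm{post}(e)(p))$, and again the two values coincide.

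One detail needs care. The postcondition function is defined only on a finite subset of $P$, so for atoms $p$ outside that subset we adopt the standard convention $\mathrm{post}(e)(p)=p$. In that case Case 2 reduces to the AM1 argument. The only real point of attention is the observation that $\rightarrow$ acts as the identity on its consequent whenever the antecedent is designated, including the value $\half$. This is what lets $\equiv$, and not merely $\leftrightarrow$, hold. It is checked directly against the truth table, since the rows for $1$ and $\half$ are both $(1,\half,0)$. Together with reflexivity of $\equiv$, this completes the argument.
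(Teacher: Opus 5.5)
Your proof is correct and follows the paper's argument. It uses the same case split on whether $\vartheta_s(\mathrm{pre}(e))$ is designated, relies on the execution clause $\vartheta'_{(s,e)}(p)=\vartheta_s(\mathrm{post}(e)(p))$, and uses the fact that $\rightarrow$ returns its consequent when the antecedent is designated. You also make explicit two points the paper leaves tacit: that $\vartheta_s(\phi\equiv\psi)\in D$ iff $\vartheta_s(\phi)=\vartheta_s(\psi)$, and the convention $\mathrm{post}(e)(p)=p$ for atoms outside the finite domain of the postcondition.
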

\begin{proof}
If $\vartheta_s$(pre($e$)) $\notin$ D, then it is straightforward. If $\vartheta_s$(pre($e$)) $\in$ D, then $\vartheta_{(s,e)}$($p$) = $\vartheta_s(\mathrm{post}(e)(p))$. On the other hand, given that $\vartheta_s$(pre($e$)) $\in$ D, then $\vartheta_s$(pre($e$) $\rightarrow$ post($e$)($p$)) = $\vartheta_s$(post($e$)($p$)). Therefore, the equivalence holds.
\end{proof}
%
%
%
%
Similarly to the previous section, this also gives us (strongly) complete axiomatizations for paraconsistent update model logics over KB4 and arbitrary K modal extensions of LFI1, with update models in the same frame class as the Kripke models wherein they are executed. Furthermore, compactness and decidability follow from the same procedure as for the case of AMLFI1. 
\subsection{Example}

Again, we finish the section with an example. 

\begin{example} \label{examplehthree}
We continue the analysis of Example~\ref{examplehtwo}. Bill knows he has conflicting information for atom $p_c$. Therefore, after all this hassle, Bill challenges Cath to show her card. She has spades, not clubs. So her initial public announcement was actually correct! This epistemic action is (i) a singleton action model $\mathcal U'$ (let us call the action $e$), with (ii) precondition $\pre(e)= \neg p_c$, and (iii) postcondition $\post(e)(p_c) = \bot$ --- where we recall that the always false proposition $\bot$ is definable in LFI1 as $p_c \et \neg p_c \et \circ p_c$ (Definition~\ref{definability}. This results in the following transition:

\medskip

\noindent
\begin{tikzpicture}
\node (0) at (0,0) {$\ov{p}_c$};
\node (01) at (1.7,0) {$\dot{p}_c$};
\draw[<->] (0) to node[above] {$a$} (01);
\draw[->] (0) edge[loop above,looseness=8] node[above] {$a,b$} (0); 
\draw[->] (01) edge[loop above,looseness=8] node[above] {$a,b$} (01); 
\end{tikzpicture}
\qquad \raisebox{10pt}{$\stackrel{\mathcal U'}{\Imp}$}\qquad
\begin{tikzpicture}
\node (0) at (0,0) {$\ov{p}_c$};
\node (01) at (1.7,0) {$\ov{p}_c$};
\draw[<->] (0) to node[above] {$a$} (01);
\draw[->] (0) edge[loop above,looseness=8] node[above] {$a,b$} (0); 
\draw[->] (01) edge[loop above,looseness=8] node[above] {$a,b$} (01); 
\end{tikzpicture}

\medskip

\noindent

In $\mathcal C''$ (with valuation $\theta''$) we now obtain again that Anne and Bill know that Cath holds clubs, but in a stronger non-defeasible sense, that we can formalize in the paraconsistent language: $\theta''_{\ov{p}_a}(K_a (\neg p \land \circ p) \land K_b(\neg p \land \circ p))=1$, and they also know that the other knows that, in fact, it is now common knowledge between $a$ and $b$. Factual change allows us to resolve conflicts when confronted with temporary contradictions. Again, this becomes more complex if we also explicitly represent the uncertainty of agent $c$ (Section~\ref{section.coup}).
\end{example}

\section{Information, truth, lying, and paraconsistency} \label{section.coup}

\weg{In the Coup game \cite{coupgame}, players hold cards, and players can lie about card ownership in order to win the game. However, they can be challenged by other players about the cards they claim to own. If they then reveal the card in question and are being caught out as a liar, they lose that round of the game. Now in the first place, determining equilibria of imperfect information games is already somewhat challenging, as it depends on decision rules over information sets of indistinguishable states of the game. We will not delve into that. But in the second place, how to model the information change in such settings with lying, and in an update logic such as that, is already challenging. Clearly, Coup is a good example where contradictions and inconsistency are not eternal. You expect other players to lie, and yourself too, in order to win the game. You will not get `crazy' by hearing $p$ and then $\neg p$ from another player, you merely conclude she is lying, and may wish to resolve the matter by a challenge. What the dynamic turn in paraconsistency can add to formalizing such settings is a fresh perspective on lies and other byzantine behaviour, and novel ways for agents to recover from lies and become consistent again.

Let us restart at the beginning. Player Anne is uncertain about player Cath holding a clubs card, atom $p_c$. Cath lying that $p_c$ assumes that she knows that $p_c$ is false. In the resulting model Anne no longer considers the actual state where $p_c$ is false possible. If Cath subsequently announces $\neg p_c$, Anne's accessibility relation is empty: she is crazy. All this results in the following updates (note that the world-eliminating  public announcements \cite{plaza:1989} of $p_c$ and of $\neg p_c$ are singleton action models, but that the corresponding arrow-eliminating public announcements \cite{gerbrandyetal:1997} are action models consisting of two actions). We again name the states by their valuations, where $\neg p_c$ is denoted $\ov{p}_c$.

\medskip

\begin{tikzpicture}
\node (0) at (0,0) {$\ov{p}_c$};
\node (1) at (2,0) {$p_c$};
\draw[<->] (0) to node[above] {$a$} (1);
\draw[->] (0) edge[loop above,looseness=8] node[above] {$a$} (0); 
\draw[->] (1) edge[loop above,looseness=9] node[above] {$a$} (1); 
\end{tikzpicture}
\qquad \raisebox{10pt}{$\stackrel{p_c}{\Imp}$}\qquad
\begin{tikzpicture}
\node (0) at (0,0) {$\ov{p}_c$};
\node (1) at (2,0) {$p_c$};
\draw[->] (0) to node[above] {$a$} (1);
\draw[->] (1) edge[loop above,looseness=9] node[above] {$a$} (1); 
\end{tikzpicture}
\qquad \raisebox{10pt}{$\stackrel{\neg p_c}{\Imp}$}\qquad
\begin{tikzpicture}
\node (0) at (0,0) {$\ov{p}_c$};
\node (1) at (2,0) {$p_c$};
\end{tikzpicture}

\medskip

Whether Cath's announcement is a lie or is the truth, depends on the actual state. Given $\neg p_c$, the first announcement is a lie and the second announcement is the truth; whereas given $p_c$, the first announcement would be the truth and the second a lie. After two contradictory announcements Anne is crazy, no matter the actual state. In paraconsistency we have addressed this by the expanded setting also representing the state where $p_c$ is both false and true, a designated value $\half$ such that this state is preserved after both announcements. We therefore get the following updates. Again, the state where  $p_c$ is false and true is named after this valuation and denoted $\dot{p}_c$.

\medskip

\noindent
\begin{tikzpicture}
\node (0) at (0,0) {$\ov{p}_c$};
\node (01) at (1.7,0) {$\dot{p}_c$};
\node (1) at (3.4,0) {$p_c$};
\draw[<->] (0) to node[above] {$a$} (01);
\draw[<->] (01) to node[above] {$a$} (1);
\draw[<->,bend right =20] (0) to node[below] {$a$} (1);
\draw[->] (0) edge[loop above,looseness=8] node[above] {$a$} (0); 
\draw[->] (01) edge[loop above,looseness=8] node[above] {$a$} (01); 
\draw[->] (1) edge[loop above,looseness=9] node[above] {$a$} (1); 
\end{tikzpicture}
\qquad \raisebox{10pt}{$\stackrel{p_c}{\Imp}$}\qquad
\begin{tikzpicture}
\node (0) at (0,0) {$\ov{p}_c$};
\node (01) at (1.7,0) {$\dot{p}_c$};
\node (1) at (3.4,0) {$p_c$};
\draw[->] (0) to node[above] {$a$} (01);
\draw[<->] (01) to node[above] {$a$} (1);
\draw[->,bend right =20] (0) to node[below] {$a$} (1);
\draw[->] (01) edge[loop above,looseness=8] node[above] {$a$} (01); 
\draw[->] (1) edge[loop above,looseness=9] node[above] {$a$} (1); 
\end{tikzpicture}
\qquad \raisebox{10pt}{$\stackrel{\neg p_c}{\Imp}$}\qquad
\raisebox{15pt}{
\begin{tikzpicture}
\node (0) at (0,0) {$\ov{p}_c$};
\node (01) at (1.7,0) {$\dot{p}_c$};
\node (1) at (3.4,0) {$p_c$};
\draw[->] (0) to node[above] {$a$} (01);
\draw[<-] (01) to node[above] {$a$} (1);
\draw[->] (01) edge[loop above,looseness=8] node[above] {$a$} (01); 
\end{tikzpicture}
}

\medskip

Now in principle we can recover from such information states with local inconsistency by having Cath reveal her card to Anne, that will either way make her {\em revise her beliefs}. Here, a well-honoured mechanism for such belief revision in a truth-valued logic is to enrich the structures with preferences or plausibilities among the states \cite{jfaketal.handbook:2015,baltagetal.tlg3:2008,hvd.prolegomena:2005}. For a so-called soft revision we would then get a model like the above one on the right, except that all arrows now point to $p_c$ (in case the revision was with $p_c$), or all arrows point to $\ov{p}_c$. Note that such belief revision does not involve factual change. If $p_c$ was initially true, it remains so throughout any update; and if originally false, the same. 

All this becomes even a bit more complex with Cath also modelled in the system. We assume that Cath knows her own state (her own card). As common in dynamic epistemic logic, we model Cath's announcement of any $\alpha$ as the public announcement of $K_c \alpha$.  This would result in the model transitions as below, however without the dashed relations for $c$. (Note that, for the first announcement, $K_c p_c$ is indeed true in the $p_c$ and in the $\dot{p}_c$ states, as we have two designated values.) This makes agent $c$ believe her own lies, which is undesirable, as presumably $c$ is aware whether she is lying. For that, we must resort to, for example, the {\em agent announcements} of \cite{hvd.lying:2014} that only affect the relation of the agents being lied to but not that of the agent lying, in which case we get the same updates but now including the dashed relations. And again we can recover from inconsistency by dynamic epistemic belief revision.

\medskip

\noindent
\begin{tikzpicture}
\node (0) at (0,0) {$\ov{p}_c$};
\node (01) at (1.7,0) {$\dot{p}_c$};
\node (1) at (3.4,0) {$p_c$};
\draw[<->] (0) to node[above] {$a$} (01);
\draw[<->] (01) to node[above] {$a$} (1);
\draw[<->,bend right =20] (0) to node[below] {$a$} (1);
\draw[->] (0) edge[loop above,looseness=8] node[above] {$a,c$} (0); 
\draw[->] (01) edge[loop above,looseness=8] node[above] {$a,c$} (01); 
\draw[->] (1) edge[loop above,looseness=9] node[above] {$a,c$} (1); 
\end{tikzpicture}
\quad \raisebox{10pt}{$\stackrel{K_cp_c}{\Imp}$}\quad
\begin{tikzpicture}
\node (0) at (0,0) {$\ov{p}_c$};
\node (01) at (1.7,0) {$\dot{p}_c$};
\node (1) at (3.4,0) {$p_c$};
\draw[->] (0) to node[above] {$a$} (01);
\draw[<->] (01) to node[above] {$a$} (1);
\draw[->,bend right =20] (0) to node[below] {$a$} (1);
\draw[->,dashed] (0) edge[loop above,looseness=8] node[above] {$c$} (0); 
\draw[->] (01) edge[loop above,looseness=8] node[above] {$a,c$} (01); 
\draw[->] (1) edge[loop above,looseness=9] node[above] {$a,c$} (1); 
\end{tikzpicture}
\quad \raisebox{10pt}{$\stackrel{K_c\neg p_c}{\Imp}$}\quad
\raisebox{15pt}{
\begin{tikzpicture}
\node (0) at (0,0) {$\ov{p}_c$};
\node (01) at (1.7,0) {$\dot{p}_c$};
\node (1) at (3.4,0) {$p_c$};
\draw[->] (0) to node[above] {$a$} (01);
\draw[<-] (01) to node[above] {$a$} (1);
\draw[->,dashed] (0) edge[loop above,looseness=8] node[above] {$c$} (0); 
\draw[->] (01) edge[loop above,looseness=8] node[above] {$a,c$} (01); 
\draw[->,dashed] (1) edge[loop above,looseness=8] node[above] {$c$} (1); 
\end{tikzpicture}
}

\medskip

Although one can proceed in this way and thus gets logics for paraconsistent belief revision, this goes in the direction of more and more complex encodings of structures and accompanying logics. The beauty of paraconsistency, we think, is that it provides us with alternatives for such belief revision that are more economical in their encodings, in particular in a multi-agent setting. Towards truly paraconsistent belief revision we therefore now propose \emph{two} changes, in order to benefit from the LFI1 advantages. 

\emph{First}, as we do not wish Cath to know (or even believe) herself to be inconsistent, we remove the $c$-loop to that state (or in general, any arrows pointing to such states) in the initial model of uncertainty: although Anne may be uncertain about Cath's card, and in the paraconsistent model anticipating receiving contradictory announcements, she considers Cath to be a consistent agent like herself knowing her local state. She does not consider Cath to hold inconsistent beliefs about herself. We thus get models with partial equivalence relations (KB4 models) as used in \cite{abs-2106-11499,hvdetal.aiml:2022} to represent faulty agents. And as already used without detailed justification in prior Example~\ref{examplehone}.

\emph{Second}, we allow multi-pointed Kripke models (a modelling device propagated since \cite{jveetal:2012}), with instead of an actual state of the world, a {\em set of designated actual states of the world}, that therefore may include states with local inconsistencies. In the three-state example, this then comes with two overlapping actual-state-sets of each two states, namely $\{p_c,\dot{p_c}\}$ and $\{\ov{p}_c,\dot{p}_c\}$). The modelling advantage of that is that we can now represent two successive public announcements as model restrictions again (so, world eliminating \cite{plaza:1989} instead of arrow eliminating \cite{gerbrandyetal:1997}). This give us the first two updates below:

\medskip

\noindent
\begin{tikzpicture}
\node (0) at (0,0) {$\ov{p}_c$};
\node (01) at (1.7,0) {$\dot{p}_c$};
\node (1) at (3.4,0) {$p_c$};
\draw[<->] (0) to node[above] {$a$} (01);
\draw[<->] (01) to node[above] {$a$} (1);
\draw[<->,bend right =20] (0) to node[below] {$a$} (1);
\draw[->] (0) edge[loop above,looseness=8] node[above] {$a,c$} (0); 
\draw[->] (01) edge[loop above,looseness=9] node[above] {$a$} (01); 
\draw[->] (1) edge[loop above,looseness=9] node[above] {$a,c$} (1); 
\end{tikzpicture}
\qquad \raisebox{10pt}{$\stackrel{p_c}{\Imp}$}\qquad
\raisebox{15pt}{
\begin{tikzpicture}
\node (01) at (1.7,0) {$\dot{p}_c$};
\node (1) at (3.4,0) {$p_c$};
\draw[<->] (01) to node[above] {$a$} (1);
\draw[->] (01) edge[loop above,looseness=8] node[above] {$a$} (01); 
\draw[->] (1) edge[loop above,looseness=9] node[above] {$a,c$} (1); 
\end{tikzpicture}
}
\qquad \raisebox{10pt}{$\stackrel{\neg p_c}{\Imp}$}\qquad
\raisebox{15pt}{
\begin{tikzpicture}
\node (01) at (1.7,0) {$\dot{p}_c$};
\draw[->] (01) edge[loop above,looseness=8] node[above] {$a$} (01); 
\end{tikzpicture}
}
\quad \raisebox{10pt}{$\stackrel{\mathrm{post}(p_c) = \circ p_c \et \bullet p_c}{\Imp}$}\quad
\raisebox{15pt}{
\begin{tikzpicture}
\node (01) at (1.7,0) {$\ov{p}_c$};
\draw[->] (01) edge[loop above,looseness=8] node[above] {$a$}  (01); 
\draw[->] (01) edge[loop right,dashed,looseness=8] node[right] {$c$}  (01); 
\end{tikzpicture}
}

\medskip

We can now also resolve the final belief revision by a factual change: the third update above, in case Cath revealed spades. As an action model it would suffice to make $p_c$ publicly false, a singleton even with precondition $\bullet p_c$ (for value $\half$/contradictory) and with postcondition $\circ p_c \et \bullet p_c$ for atom $p_c$, setting its  current value $\half$ to 0 (false).

This, altogether, seems an efficient way to handle local inconsistencies and recovering from that. However, it comes with two additional challenges, or, rather, observations. 

First, we can no longer distinguish a lie from an error, or other byzantine behaviour. In a way, this is not surprising, as LFI1 is a logic designed to track local inconsistencies in communicating databases \cite{Carnielli2000-CARFIA-2}. Nothing intentional going on here. This is also very similar to the standard procedure in distributed computing; no need to call agents crazy, we just call them incorrect or faulty \cite{KuznetsP0F19,FischerLP85,DH08}. 

Second, one would ideally have the factual change, restoring true or correct states of the world, go in tandem with restoring beliefs of the agents concerned. In our example: {\em add a loop} for agent $c$, so that the transition in fact represents restoring the `real actual state' given the `two designated actual states' that we started out with in our novel encoding. This is the dashed $c$-arrow above. This restores the actual state in a meaningful way despite a perspective shift during the dynamic process. Such {\em expansion of accessibility relations} as adding loops comes under various names in the literature, `bridge' \cite{ArecesFH15,ArecesFHM18} (adding a pair to an accessibility relation), `knowledge contraction' \cite{agm:1985} (expanding relations corresponds syntactically to forgetting information, see also the recent \cite{hvd.tark:2025}), `roll-back' \cite{DolevFPSSL14} (a runs-and-systems setting --- linear temporal modal logic --- wherein we restore incorrect agents to their prior correct state). We leave this for future research, as well as the full analysis of the Coup game.}

\subsection{Coup}

The Coup game \cite{coupgame} inspired this research. In the Coup game, players hold cards, and when asked if they hold a certain card, players can lie about card ownership in order to win the game. However, they can also be challenged by other players about the cards they claim to own. If they then reveal the card in question and are being caught out as a liar, they lose that round of the game. Now in the first place, determining equilibria of imperfect information games is already somewhat challenging, as it depends on decision rules over information sets of indistinguishable states of the game \cite{Aumann1995:BIandCKR}. We will not delve into that. But in the second place, how to model the information change in such settings with lying, is already challenging. Coup is a good example where contradictions and inconsistency are not eternal. There is no Liar Paradox! You expect other players to lie, and to lie yourself too, in order to win the game. You will not get `crazy' by hearing $p$ and then $\neg p$ from another player, you merely conclude she is lying, and may wish to resolve the matter by a challenge. Such a challenge is clearly another kind of dynamics: \emph{saying} that you own a card is different from \emph{showing} the same card. What the dynamic turn in paraconsistency can add to formalizing such settings is a fresh perspective on lies and other byzantine behaviour, and novel ways for agents to recover from lies and become consistent again.

\subsection{Lying and world eliminating updates}

We recall Example~\ref{examplehone}. Agent $c$ is an external observer who may announce either $p_c$ or $\neg p_c$. In standard public announcement logic this cannot both happen, in either order, as only one of the announcements can be truthful. We therefore have:

\medskip

\noindent
\begin{tikzpicture}
\node (0) at (0,0) {$\ov{p}_c$};
\node (1) at (3.4,0) {$p_c$};
\draw[<->] (0) to node[above] {$a$} (1);
\draw[->] (0) edge[loop above,looseness=8] node[above] {$a$} (0); 
\draw[->] (1) edge[loop above,looseness=9] node[above] {$a$} (1); 
\end{tikzpicture}
\qquad \raisebox{10pt}{$\stackrel{p_c}{\Imp}$}\qquad
\begin{tikzpicture}
\node (1) at (3.4,0) {$p_c$};
\draw[->] (1) edge[loop above,looseness=9] node[above] {$a$} (1); 
\end{tikzpicture}
\qquad \raisebox{10pt}{$\stackrel{\neg p_c}{\not\Imp}$}\qquad

\medskip

\noindent But in PALFI1 both announcements can happen. This is a modelling advantage of the dynamic extension of LFI1 versus classical dynamic epistemic logics: we can now represent conflicting information. In the paraconsistent semantics, we do not gather truth, but we gather information. Furthermore, we can also recover from such local inconsistencies with factual change. Note that this can indeed properly be called \emph{ontic change}: agent $c$ is the environment and part of the external `world' and by providing yet novel information this therefore counts as a change of such factual information. This can of course work either way, where we only show the change making $p_c$ true, not the change making $p_c$ false. Note that there is no relation with some initial value of $p_c$ representing whether Cath `originally' held some given card ($p_c$ is true) or some other card ($p_c$ is false). So, such recovery is yet another advantage of dynamic LFI1.

\medskip

\noindent
\begin{tikzpicture}
\node (0) at (0,0) {$\ov{p}_c$};
\node (01) at (1.7,0) {$\dot{p}_c$};
\node (1) at (3.4,0) {$p_c$};
\draw[<->] (0) to node[above] {$a$} (01);
\draw[<->] (01) to node[above] {$a$} (1);
\draw[<->,bend right =20] (0) to node[below] {$a$} (1);
\draw[->] (0) edge[loop above,looseness=8] node[above] {$a$} (0); 
\draw[->] (01) edge[loop above,looseness=9] node[above] {$a$} (01); 
\draw[->] (1) edge[loop above,looseness=9] node[above] {$a$} (1); 
\end{tikzpicture}
\qquad \raisebox{10pt}{$\stackrel{p_c}{\Imp}$}\qquad
\raisebox{15pt}{
\begin{tikzpicture}
\node (01) at (1.7,0) {$\dot{p}_c$};
\node (1) at (3.4,0) {$p_c$};
\draw[<->] (01) to node[above] {$a$} (1);
\draw[->] (01) edge[loop above,looseness=8] node[above] {$a$} (01); 
\draw[->] (1) edge[loop above,looseness=9] node[above] {$a$} (1); 
\end{tikzpicture}
}
\qquad \raisebox{10pt}{$\stackrel{\neg p_c}{\Imp}$}\qquad
\raisebox{15pt}{
\begin{tikzpicture}
\node (01) at (1.7,0) {$\dot{p}_c$};
\draw[->] (01) edge[loop above,looseness=8] node[above] {$a$} (01); 
\end{tikzpicture}
}
\qquad \raisebox{10pt}{$\stackrel{\text{show} \ p_c}{\Imp}$}\qquad
\raisebox{15pt}{
\begin{tikzpicture}
\node (01) at (1.7,0) {$p_c$};
\draw[->] (01) edge[loop above,looseness=8] node[above] {$a$} (01); 
\end{tikzpicture}
}

\medskip

\noindent  
Just as in Example~\ref{examplehoneb}, we can replay the same scenario but now with $c$ explicitly represented. There are two ways in which we can envisage to do this, based on different assumptions concerning faulty or absent agents (or in other words, in terms of distributed systems, faulty or crashed processes). We can either assume that agent $c$ considers it possible that $p_c$ is both true and false, in other words, in case $c$ only considers a single state possible, that agent $c$ knows that its own local state is inconsistent, or we assume that agent $c$ knows its local state ($c$ either knows $p_c$ or $c$ knows $\neg p_c$) and thus assume that $c$ does not consider it possible that $p_c$ is true and false simultaneously, in other words: an empty relation $\sim_c$ at such a state. 

The former seems more suitable for byzantine agents (haphazardly spewing forth unreliable information, only appearing to process information) whereas the latter seems more suitable for dead agents (crashed processes, incommunicative agents), and therefore more suitable for the remaining agents to reason about them. This makes the latter more intuitively appealing, except that it requires a more involved update in order to recover from a local inconsistency: we now need not only to make $p_c$ true or false again, representing agent $c$ showing a card to agent $a$, but in the process of doing that we also need to `revive' agent $c$, that is, add a reflexive $c$-arrow to such a state.

The different scenarios are therefore as follows:

\medskip

\noindent
\begin{tikzpicture}
\node (0) at (0,0) {$\ov{p}_c$};
\node (01) at (1.7,0) {$\dot{p}_c$};
\node (1) at (3.4,0) {$p_c$};
\draw[<->] (0) to node[above] {$a$} (01);
\draw[<->] (01) to node[above] {$a$} (1);
\draw[<->,bend right =20] (0) to node[below] {$a$} (1);
\draw[->] (0) edge[loop above,looseness=8] node[above] {$a,c$} (0); 
\draw[->] (01) edge[loop above,looseness=9] node[above] {$a,c$} (01); 
\draw[->] (1) edge[loop above,looseness=9] node[above] {$a,c$} (1); 
\end{tikzpicture}
\quad \raisebox{10pt}{$\stackrel{K_c p_c}{\Imp}$}\quad
\raisebox{15pt}{
\begin{tikzpicture}
\node (01) at (1.7,0) {$\dot{p}_c$};
\node (1) at (3.4,0) {$p_c$};
\draw[<->] (01) to node[above] {$a$} (1);
\draw[->] (01) edge[loop above,looseness=8] node[above] {$a,c$} (01); 
\draw[->] (1) edge[loop above,looseness=9] node[above] {$a,c$} (1); 
\end{tikzpicture}
}
\quad \raisebox{10pt}{$\stackrel{K_c \neg p_c}{\Imp}$}\quad
\raisebox{15pt}{
\begin{tikzpicture}
\node (01) at (1.7,0) {$\dot{p}_c$};
\draw[->] (01) edge[loop above,looseness=8] node[above] {$a,c$} (01); 
\end{tikzpicture}
}
\quad \raisebox{10pt}{$\stackrel{\text{show} \ p_c}{\Imp}$}\quad
\raisebox{15pt}{
\begin{tikzpicture}
\node (01) at (1.7,0) {$p_c$};
\draw[->] (01) edge[loop above,looseness=12] node[above] {$a,c$} (01); 
\end{tikzpicture}
}

\medskip

\noindent
\begin{tikzpicture}
\node (0) at (0,0) {$\ov{p}_c$};
\node (01) at (1.7,0) {$\dot{p}_c$};
\node (1) at (3.4,0) {$p_c$};
\draw[<->] (0) to node[above] {$a$} (01);
\draw[<->] (01) to node[above] {$a$} (1);
\draw[<->,bend right =20] (0) to node[below] {$a$} (1);
\draw[->] (0) edge[loop above,looseness=8] node[above] {$a,c$} (0); 
\draw[->] (01) edge[loop above,looseness=9] node[above] {$a$} (01); 
\draw[->] (1) edge[loop above,looseness=9] node[above] {$a,c$} (1); 
\end{tikzpicture}
\quad \raisebox{10pt}{$\stackrel{K_c p_c}{\Imp}$}\quad
\raisebox{15pt}{
\begin{tikzpicture}
\node (01) at (1.7,0) {$\dot{p}_c$};
\node (1) at (3.4,0) {$p_c$};
\draw[<->] (01) to node[above] {$a$} (1);
\draw[->] (01) edge[loop above,looseness=8] node[above] {$a$} (01); 
\draw[->] (1) edge[loop above,looseness=9] node[above] {$a,c$} (1); 
\end{tikzpicture}
}
\quad \raisebox{10pt}{$\stackrel{K_c \neg p_c}{\Imp}$}\quad
\raisebox{15pt}{
\begin{tikzpicture}
\node (01) at (1.7,0) {$\dot{p}_c$};
\draw[->] (01) edge[loop above,looseness=8] node[above] {$a$} (01); 
\end{tikzpicture}
}
\ \raisebox{10pt}{$\stackrel{\text{show} \ p_c}{\Imp}$} \ 
\raisebox{15pt}{
\begin{tikzpicture}
\node (01) at (1.7,0) {$p_c$};
\draw[->] (01) edge[loop above,looseness=12] node[above] {$a$} (01); 
\end{tikzpicture}
}
\ \raisebox{10pt}{$\stackrel{\text{revive} \ c}{\Imp}$} \
\raisebox{15pt}{
\begin{tikzpicture}
\node (01) at (1.7,0) {$p_c$};
\draw[->] (01) edge[loop above,looseness=12] node[above] {$a,c$} (01); 
\end{tikzpicture}
}

In the first scenario we can imagine $c$ showing $a$ her card ($p_c$ is true), the final epistemic action, as $c$ somehow recovering from her own temporary blackout believing that $p_c$ is both true and false, and then `realizing' while looking at her card again that it was actually the red card ($p_c$) or the green card ($\neg p_c$). She is not lying, but mistaken. So agent $c$ is then more a byzantine agent (a malfunctioning agent, whatever the cause, and consciously or unconsciously, as in distributed computing \cite{KuznetsP0F19,FischerLP85,DH08}) than a lying agent (a conscious, and intentional, action \cite{fallis:2009,hvd.lying:2014,bok:1978}). Still, building upon this scenario, one can even imagine $c$ to be colour-blind, so the card looks both red and green to her when {\em telling} agent $a$ about her card, while when {\em showing} agent $a$, who is not colour-blind, her card, then resolves the issue. That would no longer make $c$ a liar from her own perspective, but not from $a$'s perspective, of course --- unless $a$ was aware of $c$'s colour-blindness. We are running into epi-cycles here.

In the second scenario we need some additional technical machinery to {\em add a loop} for agent $c$, that is, to expand the relation $\sim_c$ of agent $c$ with an additional pair (or, in general, multiple pairs) so that the transition in that way restores or revives agent $c$ as a consistent believer. {\em Expansion of accessibility relations} such as adding loops is what is known in AGM belief revision as \emph{belief contraction} or if you wish \emph{knowledge contraction} \cite{agm:1985} (expanding relations corresponds syntactically to forgetting information, see also the recent \cite{hvd.tark:2025}), and also as `bridging worlds' \cite{ArecesFH15,ArecesFHM18} (adding a pair to an accessibility relation, typically but not necessarily between different worlds for the terminological imagery). We can also imagine the final two actions `show $p_c$' and `revive $c$' as a single epistemic action (combining factual change with relational change), corresponding to what in distributed systems is known as `roll-back' and other ways restoring incorrect agents to their prior correct state \cite{DolevFPSSL14,DitmarschFKS24}. 

Such machinery is not complex, but goes beyond the scope of our current results and seems therefore better left to future research.

We should not fail to mention that in such a second approach we can no longer distinguish a lie from an error, or even random byzantine behaviour. In the $\dot{p}_c$ state, agent $c$'s accessibility relation is empty, which in modal logic corresponds to saying that $c$ is `crazy': $K_c \phi$ is then true for any formula $\phi$, including $\bot$. In other words, $c$ is fully byzantine. In a way, this is not surprising, as LFI1 is a logic designed to track local inconsistencies in communicating databases \cite{Carnielli2000-CARFIA-2}. There is nothing intentional going on here, as in lying, and this is also very similar to the standard procedure in distributed computing; there is no need to call agents crazy, we just call them incorrect or faulty \cite{KuznetsP0F19,FischerLP85,DH08}. 

\subsection{Lying and arrow eliminating updates}

When representing lying of agents as incorrect behaviour we can still stick to partial equivalence relations and updates consisting of, partially observable, model restrictions. But when representing lying of agents as a conscious actions preserving the consistency of their beliefs and the beliefs of other agents, model restrictions no longer suffice and we need relational restrictions. This is because the agent being lied to, and believing the lie, no longer considers the actual world possible: we lose reflexivity of our epistemic models \cite{hvd.lying:2014}. There is a time-honoured update mechanism for that, going back to \cite{gerbrandyetal:1997,gerbrandy:1999}. Let us succinctly also describe that alternative and some of its consequences.

We restart at the beginning. Player Anne is uncertain about player Cath holding a clubs card, atom $p_c$. Cath lying that $p_c$ assumes that she knows that $p_c$ is false. In the resulting model Anne no longer considers the actual state where $p_c$ is false possible. If Cath subsequently announces $\neg p_c$, Anne's accessibility relation is empty: she is crazy. All this results in the following updates (note that the world-eliminating public announcements \cite{plaza:1989} of $p_c$ and of $\neg p_c$ are singleton action models, but that the corresponding arrow-eliminating public announcements \cite{gerbrandyetal:1997} are action models consisting of two actions). We again name the states by their valuations, where $\neg p_c$ is denoted $\ov{p}_c$.

\medskip

\begin{tikzpicture}
\node (0) at (0,0) {$\ov{p}_c$};
\node (1) at (2,0) {$p_c$};
\draw[<->] (0) to node[above] {$a$} (1);
\draw[->] (0) edge[loop above,looseness=8] node[above] {$a$} (0); 
\draw[->] (1) edge[loop above,looseness=9] node[above] {$a$} (1); 
\end{tikzpicture}
\qquad \raisebox{10pt}{$\stackrel{p_c}{\Imp}$}\qquad
\begin{tikzpicture}
\node (0) at (0,0) {$\ov{p}_c$};
\node (1) at (2,0) {$p_c$};
\draw[->] (0) to node[above] {$a$} (1);
\draw[->] (1) edge[loop above,looseness=9] node[above] {$a$} (1); 
\end{tikzpicture}
\qquad \raisebox{10pt}{$\stackrel{\neg p_c}{\Imp}$}\qquad
\begin{tikzpicture}
\node (0) at (0,0) {$\ov{p}_c$};
\node (1) at (2,0) {$p_c$};
\end{tikzpicture}

\medskip

Whether Cath's announcement is a lie or is the truth, depends on the actual state. Given $\neg p_c$, the first announcement is a lie and the second announcement is the truth; whereas given $p_c$, the first announcement would be the truth and the second a lie. After two contradictory announcements Anne is crazy, no matter the actual state. 

In paraconsistency we have addressed this by the expanded setting also representing the state where $p_c$ is both false and true, a designated value $\half$ such that this state is preserved after both announcements. We therefore get the following updates. Again, the state where  $p_c$ is false and true is named after this valuation and denoted $\dot{p}_c$.

\medskip

\noindent
\begin{tikzpicture}
\node (0) at (0,0) {$\ov{p}_c$};
\node (01) at (1.7,0) {$\dot{p}_c$};
\node (1) at (3.4,0) {$p_c$};
\draw[<->] (0) to node[above] {$a$} (01);
\draw[<->] (01) to node[above] {$a$} (1);
\draw[<->,bend right =20] (0) to node[below] {$a$} (1);
\draw[->] (0) edge[loop above,looseness=8] node[above] {$a$} (0); 
\draw[->] (01) edge[loop above,looseness=8] node[above] {$a$} (01); 
\draw[->] (1) edge[loop above,looseness=9] node[above] {$a$} (1); 
\end{tikzpicture}
\qquad \raisebox{10pt}{$\stackrel{p_c}{\Imp}$}\qquad
\begin{tikzpicture}
\node (0) at (0,0) {$\ov{p}_c$};
\node (01) at (1.7,0) {$\dot{p}_c$};
\node (1) at (3.4,0) {$p_c$};
\draw[->] (0) to node[above] {$a$} (01);
\draw[<->] (01) to node[above] {$a$} (1);
\draw[->,bend right =20] (0) to node[below] {$a$} (1);
\draw[->] (01) edge[loop above,looseness=8] node[above] {$a$} (01); 
\draw[->] (1) edge[loop above,looseness=9] node[above] {$a$} (1); 
\end{tikzpicture}
\qquad \raisebox{10pt}{$\stackrel{\neg p_c}{\Imp}$}\qquad
\raisebox{15pt}{
\begin{tikzpicture}
\node (0) at (0,0) {$\ov{p}_c$};
\node (01) at (1.7,0) {$\dot{p}_c$};
\node (1) at (3.4,0) {$p_c$};
\draw[->] (0) to node[above] {$a$} (01);
\draw[<-] (01) to node[above] {$a$} (1);
\draw[->] (01) edge[loop above,looseness=8] node[above] {$a$} (01); 
\end{tikzpicture}
}

\medskip

Now, Anne is not crazy. In principle we can recover from such information states with local inconsistency by having Cath reveal her card to Anne, that will either way make her {\em revise her beliefs}. But we cannot do this as `hard information' in the form of public announcements of $p_c \et \circ p_c$ respectively $\neg p_c \et \circ \neg p_c$, as that would result in an epistemic model with the empty relation for agent $a$. We would simply get this:

\medskip

\begin{tikzpicture}
\node (0) at (0,0) {$\ov{p}_c$};
\node (01) at (1.7,0) {$\dot{p}_c$};
\node (1) at (3.4,0) {$p_c$};
\end{tikzpicture}

\medskip

However, there is an alternative, updating with `soft information'. A well-honoured mechanism for such belief revision in a truth-valued logic is to enrich the structures with preferences or plausibilities among the states \cite{jfaketal.handbook:2015,baltagetal.tlg3:2008,hvd.prolegomena:2005}. Without getting into details, one should imagine the above depicted relations as relations used to interpret belief, whereas there is an additional relation representing knowledge, which in this case would be the universal relation. For a so-called soft revision with $* p_c$ (`*' to denote `revision') we would then get an update as below (returning the epistemic model before the $\neg p_c$ announcement), which can be followed by a subsequent change of mind (of the environment) that is a revision with $* \neg p_c$. Note that such belief revision does not involve factual change. If $p_c$ was initially true, it remains so throughout any update; and if originally false, the same. We do not change the point of evaluation.

\medskip

\noindent
\raisebox{15pt}{
\begin{tikzpicture}
\node (0) at (0,0) {$\ov{p}_c$};
\node (01) at (1.7,0) {$\dot{p}_c$};
\node (1) at (3.4,0) {$p_c$};
\draw[->] (0) to node[above] {$a$} (01);
\draw[<-] (01) to node[above] {$a$} (1);
\draw[->] (01) edge[loop above,looseness=8] node[above] {$a$} (01); 
\end{tikzpicture}
}
\qquad \raisebox{10pt}{$\stackrel{* p_c}{\Imp}$}\qquad
\begin{tikzpicture}
\node (0) at (0,0) {$\ov{p}_c$};
\node (01) at (1.7,0) {$\dot{p}_c$};
\node (1) at (3.4,0) {$p_c$};
\draw[->] (0) to node[above] {$a$} (01);
\draw[<->] (01) to node[above] {$a$} (1);
\draw[->,bend right =20] (0) to node[below] {$a$} (1);
\draw[->] (01) edge[loop above,looseness=8] node[above] {$a$} (01); 
\draw[->] (1) edge[loop above,looseness=9] node[above] {$a$} (1); 
\end{tikzpicture}
\qquad \raisebox{10pt}{$\stackrel{* \neg p_c}{\Imp}$}\qquad
\begin{tikzpicture}
\node (0) at (0,0) {$\ov{p}_c$};
\node (01) at (1.7,0) {$\dot{p}_c$};
\node (1) at (3.4,0) {$p_c$};
\draw[<->] (0) to node[above] {$a$} (01);
\draw[<-] (01) to node[above] {$a$} (1);
\draw[<-,bend right =20] (0) to node[below] {$a$} (1);
\draw[->] (0) edge[loop above,looseness=8] node[above] {$a$} (0); 
\draw[->] (01) edge[loop above,looseness=8] node[above] {$a$} (01); 
\end{tikzpicture}

\medskip

All this becomes even a bit more complex with Cath also modelled in the system. We assume that Cath knows her own state (her own card). As common in dynamic epistemic logic, we model Cath's announcement of any $\alpha$ as the public announcement of $K_c \alpha$.  This would result in the model transitions as below, where only the relations for Anne change and not those for Cath (as she chooses to lie, her relations are not affected). (Note that, for the first announcement, $K_c p_c$ is indeed true in the $p_c$ and in the $\dot{p}_c$ states, as we have two designated values), as in the {\em agent announcements} of \cite{hvd.lying:2014}.

\medskip

\noindent
\begin{tikzpicture}
\node (0) at (0,0) {$\ov{p}_c$};
\node (01) at (1.7,0) {$\dot{p}_c$};
\node (1) at (3.4,0) {$p_c$};
\draw[<->] (0) to node[above] {$a$} (01);
\draw[<->] (01) to node[above] {$a$} (1);
\draw[<->,bend right =20] (0) to node[below] {$a$} (1);
\draw[->] (0) edge[loop above,looseness=8] node[above] {$a,c$} (0); 
\draw[->] (01) edge[loop above,looseness=8] node[above] {$a,c$} (01); 
\draw[->] (1) edge[loop above,looseness=9] node[above] {$a,c$} (1); 
\end{tikzpicture}
\quad \raisebox{10pt}{$\stackrel{K_cp_c}{\Imp}$}\quad
\begin{tikzpicture}
\node (0) at (0,0) {$\ov{p}_c$};
\node (01) at (1.7,0) {$\dot{p}_c$};
\node (1) at (3.4,0) {$p_c$};
\draw[->] (0) to node[above] {$a$} (01);
\draw[<->] (01) to node[above] {$a$} (1);
\draw[->,bend right =20] (0) to node[below] {$a$} (1);
\draw[->] (0) edge[loop above,looseness=8] node[above] {$c$} (0); 
\draw[->] (01) edge[loop above,looseness=8] node[above] {$a,c$} (01); 
\draw[->] (1) edge[loop above,looseness=9] node[above] {$a,c$} (1); 
\end{tikzpicture}
\quad \raisebox{10pt}{$\stackrel{K_c\neg p_c}{\Imp}$}\quad
\raisebox{15pt}{
\begin{tikzpicture}
\node (0) at (0,0) {$\ov{p}_c$};
\node (01) at (1.7,0) {$\dot{p}_c$};
\node (1) at (3.4,0) {$p_c$};
\draw[->] (0) to node[above] {$a$} (01);
\draw[<-] (01) to node[above] {$a$} (1);
\draw[->] (0) edge[loop above,looseness=8] node[above] {$c$} (0); 
\draw[->] (01) edge[loop above,looseness=8] node[above] {$a,c$} (01); 
\draw[->] (1) edge[loop above,looseness=8] node[above] {$c$} (1); 
\end{tikzpicture}
}

\medskip

And again we can recover from inconsistency by dynamic epistemic belief revision, and also consider factual change and relational expansion, similarly to already discussed for the domain restricting updates. Although one can proceed in this way and thus get logics for paraconsistent belief revision instead of paraconsistent knowledge revision, this goes in the direction of more and more complex encodings of structures and accompanying logics, and does not benefit from the advantages of paraconsistency. The beauty of paraconsistency, we think, is that it provides us with alternatives for such belief revision that are more economical in their encodings, and permissive of local inconsistencies, in particular in a multi-agent setting. 

We leave this for future research, as well as the full analysis of the Coup game.

\section{Conclusion}\label{conclusion}
We presented sound and strongly complete axiomatizations for the dynamic epistemic paraconsistent logics AMLFI1 and UMLFI1, showing decidability and compactness in these cases, and have shown that the PALFI1 fragment of AMLFI1 corresponds to another public announcement paraconsistent logic PALFI1$_C$. The technique of reduction axioms via strict equivalences shows a promising direction to tackle dynamic epistemic extensions of many-valued modal logics in general, to be considered in future work.
A detailed analysis of multi-agent system dynamics also involving lying and belief revision concluded our work. We wish to further pursue the modelling of lying, consider belief contraction (adding arrows), and generalize our work to other many-valued logics.

\weg{
\noindent\textbf{Example} Anne and Bill do not know which card Cath holds. This time, Cath tells everyone that she does not hold clubs, but she tells Bill in private that she holds clubs. Let $p_c:$ ``Cath holds a clubs card.'' After hearing this, Bill challenges Cath to show her card. Cath shows a spades card. 

The initial KB4LFI1 model $\mathcal{S} = \langle S, \sim_a, \sim_b, \vartheta\rangle$, is such that $S = \{s_1, s_2\}$, $\sim_a = \{(s_1, s_1), (s_2, s_2), (s_1, s_2)\}$, $\sim_b = \{(s_1, s_1), (s_2,s_2)\}$, $\vartheta_{s_1}(p) = \half, \vartheta_{s_2}(p) = 0$.

The figure below represents the model $\mathcal{S}$:

\begin{center}
\begin{tikzpicture}
  \node (p)  [black, anchor=south west] at (-1.06,4.25) {$\dot{p}$};
  \node (np) [black, anchor=south west] at (2.44,4.25) {$\overline{p}$};

  \draw[black, thin] (-0.50,4.50) -- (2.50,4.50) node[above, pos=0.5] {$a$};

  \draw[->, black, thin] (p)  edge[loop above] node[above] {$a,b$}();
  \draw[->, black, thin] (np) edge[loop above] node[above] {$a,b$}();
\end{tikzpicture}
\end{center}

The update model $SHOW = \langle E, \sim_a, \sim_b, pre, post\rangle$ is such that $E = \{e\}$, $\sim_a = \{(e, e)\}$, $\sim_b = \{(e,e)\}$, $pre(e) = T$, $post(e)(p) = \circ p \land \bullet p$.

The picture representing this update model is depicted below: 

\begin{center}
\begin{tikzpicture}
  \node (e) [black] at (0.94,4.75) {$e$};

  \draw[->, black, thin] (e) edge[loop above] node[above] {$a,b$}();

\end{tikzpicture}
\end{center}

Finally, the model product $S \otimes SHOW = \{S', \sim'_a, \sim'_b, \vartheta'\}$, is such that $S' = \{(s_1, e), (s_2,e)\}$, $\sim'_a = \{((s_1,e),(s_1, e)), ((s_2,e),(s_2, e)), ((s_1,e),(s_2,e))\}$, $\sim'_b = \{((s_1,e),(s_1, e)), ((s_2,e),(s_2, e))\}$, $\vartheta'_{(s_1,e)}(p) = 0, \vartheta'_{(s_2, e)}(p)= 0$

The picture below represents the update model change.

\begin{tikzpicture}[scale=0.8]
	\node (kp) [black, anchor=south west] at (-1.56,7.25) {$\dot{p}_c
$};
	\draw[draw=black, thin, solid] (-1.00,7.50) -- (2.50,7.50) node[above, pos=0.5] {$a$};
	\node (np) [black, anchor=south west] at (2.44,7.25) {$\overline{p_c}
$};
	\draw[draw=black, ultra thick, solid] (0.00,5.00) -- (1.50,6.50);
	\draw[draw=black, ultra thick, solid] (0.00,6.50) -- (1.50,5.00);
	\node (p) [black, anchor=south west] at (0.44,3.75) {$e
$};
	\draw[draw=black, -latex, ultra thick, solid] (3.00,5.50) -- (6.00,5.50);
	\node[black, anchor=south west] at (3.44,5.75) {$[SHOW]
$};
	\node (lp) [black, anchor=south west] at (6.44,5.25) {$\overline{p}_c
$};
	\draw[draw=black, thin, solid] (7.50,5.50) -- (10.00,5.50)  node[above, pos=0.5] {$a$};
	\node (pp) [black, anchor=south west] at (10.44,5.25) {$\overline{p}_c
$};

\draw[->, black, thin] (p)  edge[loop below] node[below] {$a,b$}();
  \draw[->, black, thin] (np) edge[loop above] node[above] {$a,b$}();
   \draw[->, black, thin] (kp) edge[loop above] node[above] {$a,b$}();
      \draw[->, black, thin] (lp) edge[loop above] node[above] {$a,b$}();
         \draw[->, black, thin] (pp) edge[loop above] node[above] {$a,b$}();

\end{tikzpicture}

In that scenario, after Cath's reveal of her card, both Anne and Bill know that she has does not have a clubs card, that is, $\mathcal{S} \vDash [SHOW](K_a (\neg p \land \circ p) \land K_b(\neg p \land \circ p))$, and they also know that the other knows that, $\mathcal{S} \vDash [SHOW]K_aK_b(\neg p \land \circ p)$, and the same for $b.$ The introduction of factual change solves the last of our problems dealing with temporary contradictions: at some point, contradictions should be resolved. In that case, it is not possible to further increment the amount the information in the system; we need to alter the database directly, so to speak. Thus, factual changes allow to remove contradictions from the system once they are not required anymore. 
}

\bibliographystyle{plain}
\bibliography{biblio2026}

\end{document}